\documentclass[letterpaper,11pt]{article}

\usepackage{fullpage}
\usepackage{amssymb,amsmath,amsthm,graphicx,algorithm,algorithmic}

\newtheorem{thm}{Theorem}
\newtheorem{lem}[thm]{Lemma}

\newtheorem{prop}[thm]{Proposition}

\theoremstyle{definition}
\newtheorem{defn}[thm]{Definition}

\newcommand{\abs}[1]{\lvert #1 \rvert}
\newcommand{\norm}[1]{\lVert #1 \rVert}

\newcommand{\st}{\;:\;}                         % such that
\newcommand{\ve}[2]{\langle #1 ,  #2 \rangle}   % inner product
\newcommand{\eqdef}{\stackrel{\text{{\tiny def}}}{=}}
\newcommand{\J}{\Omega}

\newcommand{\calP}{{\mathcal{P}}}

\newcommand{\hphi}{g}

\newcommand{\E}{\mathbb{E}}
\newcommand{\R}{\mathbb{R}}
\newcommand{\Prob}{\mathbf{P}}
\newcommand{\Exp}{\mathbf{E}}
\newcommand{\Q}{Q}
\newcommand{\z}{z}
\newcommand{\uu}{u}

\newcommand{\ff}{f}
\newcommand{\FF}{F}
\newcommand{\OO}{\Psi}

\DeclareMathOperator{\ESO}{ESO}
\DeclareMathOperator{\dom}{dom}
\DeclareMathOperator{\nnz}{nnz}

\begin{document}

\title{Smooth Minimization of Nonsmooth Functions \\with Parallel Coordinate Descent Methods}

\author{Olivier Fercoq \footnote{School of Mathematics, The University of Edinburgh, United Kingdom (e-mail: olivier.fercoq@ed.ac.uk)} \qquad \qquad  Peter Richt\'{a}rik \footnote{School of Mathematics, The University of Edinburgh, United Kingdom (e-mail: peter.richtarik@ed.ac.uk) \qquad
The work of both authors was supported by the EPSRC grant EP/I017127/1 (Mathematics for Vast Digital Resources). The work of P.R.\ was also supported by the Centre for Numerical Algorithms and Intelligent Software (funded by EPSRC grant EP/G036136/1 and the Scottish Funding Council).}}

\date{September 22, 2013}

\maketitle

\begin{abstract}
We study the performance of a family of randomized parallel coordinate descent methods for minimizing the sum of a nonsmooth and separable convex functions.
The problem class  includes as a special case L1-regularized L1 regression and the minimization of the exponential loss (``AdaBoost problem''). We assume the input data defining the loss function is contained in a sparse $m\times n$ matrix $A$ with at most $\omega$ nonzeros in each row. Our methods need $O(n \beta/\tau)$ iterations to find an approximate solution with high probability, where $\tau$ is the number of processors and $\beta  = 1 + (\omega-1)(\tau-1)/(n-1)$ for the fastest variant. The notation hides dependence on quantities such as the required accuracy and confidence levels and the distance of the starting iterate from an optimal point. Since $\beta/\tau$ is a decreasing function of $\tau$, the method needs fewer iterations when more processors are used. Certain variants of our algorithms perform on average only $O(\nnz(A)/n)$ arithmetic operations during a single iteration per processor and, because $\beta$ decreases when $\omega$ does, fewer iterations are needed for sparser problems.
\end{abstract}

\section{Introduction}

It is increasingly common that practitioners in machine learning, optimization, biology, engineering and various industries need to solve optimization problems with  number of variables/coordinates so huge that classical algorithms, which for historical reasons almost invariably focus on obtaining solutions of high accuracy, are not efficient enough, or are outright unable to perform even a single iteration. Indeed, in the \emph{big data optimization} setting, where the number $N$ of variables is huge, inversion of matrices is not possible, and even operations such as matrix vector multiplications are too expensive. Instead, attention is shifting towards simple methods, with cheap iterations, low memory requirements and good parallelization and scalability properties.

%In the serial case ($\tau=1$) we have $\tfrac{\beta}{\tau}=1$ and in the fully parallel case ($\tau=N$) we have $\tfrac{\beta}{\tau} = \tfrac{\omega}{N}$. Hence, parallelization can bring in a speedup of up to $\tfrac{N}{\omega}$.

If the accuracy requirements are moderate and the problem  has only simple constraints (such as box constraints), methods with these properties do exist: \emph{parallel coordinate descent methods}  \cite{Bradley:PCD-paper, RT:TTD2011, RT:PCDM, minibatch-ICML2013} emerged as a very promising class of algorithms in this domain.

\subsection{Parallel coordinate descent methods}

In a recent paper \cite{RT:PCDM}, Richt\'{a}rik and Tak\'{a}\v{c} proposed and studied the complexity of a \emph{parallel coordinate descent method (PCDM)} applied to the convex composite\footnote{Gradient methods for problems of this form were studied by Nesterov~\cite{Nesterov:2007composite}.} optimization problem \begin{equation}\label{eq:P0}\min_{x\in \R^N} \phi(x) + \Psi(x),\end{equation}
where $\phi: \R^N \to \R$ is an \emph{arbitrary differentiable} convex function and $\Psi: \R^N\to \R\cup \{+\infty\}$ is a simple (block) \emph{separable} convex regularizer, such as $\lambda \|x\|_1$. The $N$ variables/coordinates of $x$ are assumed to be partitioned into $n$ blocks, $x^{(1)}, x^{(2)}, \dots, x^{(n)}$ and PCDM at  each iteration computes  and applies updates to a randomly chosen subset $\hat{S} \subseteq [n]\eqdef\{1,2,\dots,n\}$ of blocks (a ``sampling'') of the decision vector, \emph{in parallel}. Formally, $\hat{S}$ is a random set-valued mapping with values in $2^{[n]}$.

PCDM encodes a family of algorithms where each variant is characterized by the probability law governing $\hat{S}$. The sets generated throughout the iterations are assumed to be independent and identically distributed.  In this paper we focus on \emph{uniform samplings}, which are characterized by the requirement that
$\Prob(i \in \hat{S}) = \Prob(j \in \hat{S})$ for all $i,j \in [n]$. It is easy to see that for a uniform sampling one necessarily has\footnote{This and other identities for block samplings were derived in \cite[Section 3]{RT:PCDM}.} \begin{equation}\label{eq:uniform_samp_basic}\Prob(i \in \hat{S}) = \frac{\Exp[|\hat{S}|]}{n}.\end{equation}
%\footnote{The theory developed in \cite{RT:PCDM} deals with the more general class of samplings, called \emph{uniform} samplings, defined by property ii) only. In this paper we focus on \emph{uniform samplings} for brevity; our results can be extended to cover all uniform samplings.}.
In particular, we will focus on two special classes of uniform samplings: i) those for which $\Prob(|\hat{S}|=\tau)=1$ (\emph{$\tau$-uniform samplings}), and ii) $\tau$-uniform saplings with the additional property that all subsets of cardinality $\tau$ are chosen equally likely (\emph{$\tau$-nice samplings}). We will also say that a sampling is \emph{proper} if $\Prob(|\hat{S}|\geq 1) > 0$.

It is clearly important to understand whether choosing $\tau>1$, as opposed to $\tau=1$, leads to acceleration in terms of an improved complexity bound.
Richt\'{a}rik and Tak\'{a}\v{c}~\cite[Section~6]{RT:PCDM} established \emph{generic iteration complexity results} for PCDM applied to \eqref{eq:P0}---we describe them in some detail in Section~\ref{sec:generic}. Let us only mention now that these results are generic in the sense that they hold under the blanket assumption that a certain inequality  involving $\phi$ and $\hat{S}$ holds, so that if one is able to derive this inequality for a certain class of smooth convex functions $\phi$, complexity results are readily available. The inequality (called Expected Separable Overapproximation, or ESO) is
\begin{equation}\label{eq:ESO}\Exp\left[\phi(x + h_{[\hat{S}]})\right] \leq \phi(x) + \frac{\Exp[|\hat{S}|]}{n}\left(\ve{\nabla \phi(x)}{h} + \frac{\beta}{2}\sum_{i=1}^n w_i \ve{B_i h^{(i)}}{h^{(i)}}\right), \qquad x,h \in \R^N,\end{equation}
where $B_i$ are positive definite matrices (these can be chosen based on the structure of $\phi$, or simply taken to be identities), $\beta>0$,  $w=(w_1,\dots,w_n)$ is a vector of positive weights, and  $h_{[\hat{S}]}$ denotes the random vector in $\R^N$ obtained from $h$ by zeroing out all its blocks that do not belong to $\hat{S}$. That is,  $h_{[S]}$ is the vector in $\R^N$ for which $h_{[S]}^{(i)}=h^{(i)}$ if $i \in S$ and $h_{[S]}^{(i)}=0$, otherwise. When \eqref{eq:ESO} holds, we say that $\phi $ admits a $(\beta,w)$-ESO with respect
to $\hat{S}$. For simplicity, we may sometimes write $(\phi, \hat{S})\sim \ESO(\beta,w)$.

Let us now give the intuition behind the ESO inequality \eqref{eq:ESO}. Assuming the current iterate is $x$, PCDM changes $x^{(i)}$ to $x^{(i)}+h^{(i)}(x)$ for $i \in \hat{S}$, where $h(x)$ is the minimizer of the right hand side of \eqref{eq:ESO}.
%That way the update would at each step minimize the expected value of the objective/loss function. Note that this is not possible algorithmically even in the simple case when $n=1$ as this would mean that we wish to move from $x$ to the optimal point in a single step. Instead, we choose $h(x)$ as the minimizer in $h$ of the right hand side (overapproximation) of \eqref{eq:ESO}.
By doing so, we benefit from the following:
\begin{enumerate}
\item[(i)] Since the overapproximation is a convex quadratic in $h$, \emph{it easy to compute $h(x)$}.
\item[(ii)] Since the overapproximation is  block separable, one \emph{can compute the updates $h^{(i)}(x)$ in parallel} for all $i\in\{1,2,\dots,n\}$.
\item[(iii)] For the same reason, one \emph{can compute the updates  or $i \in S_k$ only}, where $S_k$ is the sample set drawn at iteration $k$ following the law describing $\hat{S}$.
\end{enumerate}
The algorithmic strategy of PCDM is to move to a new point in such a way that the expected value of the loss function evaluated at this new point is as small as possible. The method effectively decomposes the $N$-dimensional problem into $n$ smaller convex quadratic problems, attending to a random subset of $\tau$ of them at each iteration, in parallel. A single iteration of PCDM  can  be  compactly written as
\begin{equation}\label{eq:PCDM-smooth-case}x \leftarrow x+ (h(x))_{[\hat{S}]},\end{equation} where $h(x) = (h^{(1)}(x),\dots,h^{(n)}(x))$ and
\begin{equation}\label{eq:stepsize}h^{(i)}(x) = \arg \min_h \left\{\ve{(\nabla \phi(x))^{(i)}}{h^{(i)}} + \frac{\beta w_i}{2}\ve{B_i h^{(i)}}{h^{(i)}}\right\} \overset{\eqref{eq:ESO}}{=}  -\frac{1}{\beta w_i} B_i^{-1} (\nabla \phi(x))^{(i)}.\end{equation}

From the update formula \eqref{eq:stepsize} we can see that $\tfrac{1}{\beta}$ can be interpreted as a stepsize. We would hence wish to choose  small $\beta$, but not too small so that the method does not diverge. The issue of the computation of a good (small) parameter $\beta$ is very intricate for several reasons, and is at the heart of the design of a randomized parallel coordinate descent method. Much of the theory developed in this paper is aimed at identifying a class of nonsmooth composite problems which, when smoothed, admit ESO with a small and easily computable value of $\beta$. In the following text we give some insight into why this issue is difficult, still in the simplified smooth setting.

\subsection{Spurious ways of computing $\beta$} \label{sec:spurious}

Recall that the parameters $\beta$ and $w$ giving rise to an ESO need to be \emph{explicitly calculated} before the method is run as they are needed in the computation of the update steps. We will now describe the issues associated with finding suitable $\beta$, for simplicity assuming that $w$ has been chosen/computed.
\begin{enumerate}
\item Let us start with a first approach to computing $\beta$. If the gradient of $ \phi$ is Lipschitz with respect to the separable norm
\[\|x\|_w^2 \eqdef \sum_{i=1}^n w_i \ve{B_i x^{(i)}}{x^{(i)}},\] with \emph{known} Lipschitz constant $L$, then for all $x, h\in \R^N$ we have
$\phi(x+h') \leq \phi(x) + \ve{\nabla \phi(x)}{h'} +\frac{L}{2}\|h'\|_w^2$. Now, if for fixed $h \in \R^N$ we substitute $h'=h_{[\hat{S}]}$ into this inequality, and take expectations utilizing the identities~\cite{RT:PCDM}
%We will later need two simple identities established in  , which hold for all proper uniform samplings $\hat{S}$ and $x,h \in \R^N$:
\begin{equation}
\label{eq:jss8s8s}\Exp\left[\ve{x}{h_{[\hat{S}]}}\right] = \frac{\Exp[|\hat{S}|]}{n}\ve{x}{h}, \qquad \Exp\left[\|h_{[\hat{S}]}\|_w^2\right] = \frac{\Exp[|\hat{S}|]}{n}\|h\|_w^2,
%                  \Exp [\sum_{i \in \hat{S}} U_i U_i^T] &=& \tfrac{\Exp[|\hat{S}|]}{n} I
\end{equation}

we obtain $(\phi,\hat{S})\sim \ESO(\beta,w)$ for $\beta=L$. It turns out that this way of obtaining $\beta$ is far from satisfactory, for several reasons.
\begin{enumerate}
\item First, it is very difficult to compute $L$ in the big data setting PCDMs are designed for. In the case of L2 regression, for instance, $L$ will be equal to the largest eigenvalue of a certain $N\times N$ matrix. For huge $N$, this is a formidable task, and may actually be harder than the problem we are trying to solve.
\item We show in Section~\ref{sec:ESO+Lipschitz} that taking $\beta = \tfrac{n}{\tau}c$, where $c$ is a  bound on the Lipschitz constants (with respect to the norm $\|\cdot\|_w$, at $h=0$, uniform in $x$) of the gradients of the functions $h \to \Exp[\phi(x+h_{[\hat{S}]})]$ precisely characterizes \eqref{eq:ESO}, and leads to smaller (=better) values $\beta$. Surprisingly, this $\beta$ can be $O(\sqrt{n})$ \emph{times} smaller than $L$. As we shall see, this directly translates into iteration complexity speedup by the factor of $O(\sqrt{n})$.
\end{enumerate}

\item It is often easy to obtain good $\beta$ in the case $\tau=1$. Indeed, it follows from \cite{Nesterov:2010RCDM, RT:UCDC} that any smooth convex function $\phi$ will satisfy \eqref{eq:ESO} with $\beta=1$ and $w_i=L_i$, where $L_i$ is the block Lipschitz constant of the gradient of $\phi$ with respect to the norm $\ve{B_i \cdot}{\cdot}^{1/2}$, associated with block $i$. If the size of  block $i$ is $N_i$, then the computation of $L_i$ will typically amount to the finding a maximal eigenvalue of an $N_i\times N_i$ matrix. If the block sizes $N_i$ are sufficiently small, it is much simpler to compute $n$ of these quantities than to compute $L$.  Now, can we use a similar technique to obtain $\beta$ in the $\tau>1$ case?  A naive idea would  be to keep $\beta$ unchanged ($\beta=1$). In view of \eqref{eq:stepsize}, this means that one would simply compute the updates $h^{(i)}(x)$ in the same way as in the $\tau=1$ case, and apply them all. However, this strategy is doomed to fail:  the method may end up oscillating between  sub-optimal points (a simple 2 dimensional example was described in \cite{minibatch-ICML2013}). This issue arises since the algorithm overshoots: while the individual updates are safe for $\tau=1$, it is not clear why adding them all up for arbitrary $\tau$ should  decrease the function value.

\item A natural remedy to the problem described in $\S2$ is to decrease the stepsize, i.e., to increase $\beta$ as $\tau$ increases. In fact, it can be inferred from \cite{RT:PCDM} that $\beta(\tau) = \tau$ always works: it satisfies the ESO inequality and the method converges. This makes intuitive sense since the actual step in the $\tau>1$ case is obtained as the \emph{average} of the block updates which are safe in the $\tau=1$ case. By Jensen's inequality, this must decrease the objective function since the randomized serial method does (below we assume for notational simplicity that all blocks are of size one, $e_i$ are the unit coordinate vectors):
\[\phi(x_{+}) = \phi\left(x - \sum_{i \in \hat{S}}  \tfrac{1}{\tau L_i} (\nabla \phi(x))^{(i)} e_i \right) \leq \tfrac{1}{\tau}\sum_{i\in \hat{S}} \phi\left(x - \tfrac{1}{L_i}(\nabla \phi(x))^{(i)}e_i\right).\]

    However, this approach compensates the increase of computational power ($\tau$) by the same decrease in stepsize, which means that the parallel method ($\tau>1$) might in the worst case require the same number of iterations as the serial one ($\tau=1$).

\item The issues described in $\S2$ and $\S3$ lead us to the following question: Is it possible to \emph{safely} and \emph{quickly} choose/compute a value of $\beta$ in the $\tau=1$ case which is larger than $1$ but smaller than $\tau$? If this was possible, we could expect the parallel method to be much better than its serial counterpart. An affirmative answer to this question for the class of smooth convex partially separable functions $\phi$ was given in \cite{RT:PCDM}.
\end{enumerate}

To summarize, the issue of selecting $\beta$ in the parallel setting is very intricate, and of utmost significance for the algorithm.  In the next two subsections we now give more insight into this issue and in doing so progress into discussing our contributions.

\subsection{Generic complexity results and partial separability} \label{sec:generic}

The generic complexity results mentioned earlier, established in \cite{RT:PCDM} for PCDM, have the form\footnote{This holds provided $w$ does not change with $\tau$; which is the case in this paper and in the smooth partially separable setting considered in \cite[Section~6]{RT:PCDM}. Also, for simplicity we cast the results here in the case $\Psi\equiv 0$, but they hold in the composite case as well.}
\[k \geq \left(\frac{\beta}{\tau}\right) \times n \times c \qquad \Rightarrow \qquad \Prob\left(\phi(x_k) - \min_x \phi(x) \leq \epsilon\right) \geq 1-\rho,\]
where $c$ is a constant independent of $\tau$, and depending  on the error tolerance $\epsilon$,
%($\log (\tfrac{1}{\epsilon})$ if $\phi$ is strongly convex and $\tfrac{1}{\epsilon}$ otherwise),
confidence tolerance $\rho$, initial iterate $x_0$, optimal point $x^*$ and $w$. Moreover, $c$ does not hide any large constants.

Keeping $\tau$ fixed, from  \eqref{eq:stepsize} we see that larger values of $\beta$ lead to smaller stepsizes. We commented earlier, appealing to intuition, that this translates into worse complexity. This is now affirmed and quantified by the above generic complexity result. Note, however,  that this generic result \emph{does not provide any concrete information about parallelization speedup} because it does not say anything about the dependence of $\beta$ on $\tau$. Clearly, parallelization speedup occurs when the function \[T(\tau) = \frac{\beta(\tau)}{\tau}\]
is decreasing. The behavior of this function is important for big data problems which can only be solved by decomposition methods, such as PCDM, on modern HPC architectures.

%\subsection{Smooth composite minimization: partial separability}

Besides proving generic complexity bounds for PCDM, as outlined above,  Richt\'{a}rik and Tak\'{a}\v{c} \cite{RT:PCDM} identified a class of \emph{smooth convex} functions $\phi$ for which $\beta$ can be explicitly computed as a function of $\tau$ in closed form, and for which indeed $T(\tau)$ is decreasing: \emph{partially separable} functions. A convex function $\phi : \R^N \to \R$ is partially separable of degree $\omega$ if it can be written as a sum of differentiable\footnote{It is not assumed that the summands have Lipschitz gradient.} convex functions, each of which depends on at most $\omega$ of the $n$ blocks of $x$. If $\hat{S}$ is a   $\tau$-uniform sampling,
  %sampling (i.e., if all we assume about $\hat{S}$ is that $\Prob(i \in \hat{S})=\Prob(j \in \hat{S})$ for all $i,j \subset \{1,2,\dots,n\}$),
  then $\beta = \beta' = \min\{\omega,\tau\}$. If $\hat{S}$ is a $\tau$-nice sampling, then $\beta = \beta'' = 1 + \tfrac{(\omega-1)(\tau-1)}{n-1}$. Note that $\beta''\leq \beta'$ and that $\beta'$ can be arbitrarily larger than $\beta''$. Indeed, the worst case situation (in terms of the ratio $\tfrac{\beta'}{\beta''}$) for any fixed $n$ is $\omega = \tau = \sqrt{n}$, in which case
\[\frac{\beta'}{\beta''} = \frac{1+\sqrt{n}}{2}.\]

This means that PCDM implemented with a $\tau$-nice sampling (using $\beta''$) can be arbitrarily faster than PCDM implemented with the more general $\tau$-uniform sampling (using $\beta'$). This simple example illustrates the huge impact the choice of the sampling $\hat{S}$ has, other things equal. As we shall show in this paper, this phenomenon is directly related to the issue we discussed in Section~\ref{sec:spurious}: $L$ can be $O(\sqrt{n})$ times larger than a good $\beta$.

\subsection{Brief literature review}

%Coordinate descent methods have gained a lot of attention in the last few years, mainly due to their efficiency in solving large scale regularized optimization problems. The main area for applications so far is machine learning where the use of nonsmooth regularizers such as the L1 norm is notorious~\cite{WuLange:2008}

% In this section we given (a necessarily incomplete) review of work on randomized and/or parallel coordinate descent methods with complexity results.

\textbf{Serial randomized methods.} Leventhal and Lewis \cite{Leventhal:2008:RMLC} studied the complexity of randomized coordinate descent methods for the minimization of convex quadratics and proved that the method converges linearly even in the non-strongly convex case. Linear convergence for smooth strongly convex functions was proved by Nesterov~\cite{Nesterov:2010RCDM} and for general regularized problems by Richt\'{a}rik and Tak\'{a}\v{c}~\cite{RT:UCDC}. Complexity results for smooth problems with special regularizes (box constraints, L1 norm) were obtained by Shalev-Shwarz and Tewari~\cite{ShalevTewari09} and Nesterov~\cite{Nesterov:2010RCDM}. Nesterov was the first to analyze the block setting, and proposed using different Lipschitz constants for different blocks, which has a big impact on the efficiency of the method since these constants capture important second order information \cite{Nesterov:2010RCDM}. Also, he was the first to analyze an accelerated coordinate descent method. Richt\'{a}rik and Tak\'{a}\v{c}~\cite{RT:SPARS11, RT:UCDC} improved, generalized and simplified previous results and extended the analysis to the composite case. They also gave the first analysis of a coordinate descent method using arbitrary probabilities. Lu and Xiao~\cite{luxiao2013complexity} recently studied the work developed in \cite{Nesterov:2010RCDM} and \cite{RT:PCDM} and obtained further improvements. Coordinate descent methods were recently extended to deal with coupled constraints by Necoara et al \cite{Necoara:Coupled} and extended to the composite setting by Necoara and Patrascu~\cite{Necoara_composite-coupled}. When the function is not smooth neither composite, it is still possible to define coordinate descent methods with subgradients. An algorithm based on the averaging of past subgradient coordinates is presented in~\cite{tao2012stochastic} and a successful subgradient-based coordinate descent method for problems with sparse subgradients is proposed by Nesterov~\cite{Nesterov-Subgrad-Huge}. Tappenden et al~\cite{TRG:InexactCDM}  analyzed an inexact randomized coordinate descent method in which proximal subproblems at each iteration are solved only approximately. Dang and Lan~\cite{LanBCD2013} studied complexity of stochastic block mirror descent methods for nonsmooth and stochastic optimization and an accelerated method was studies by Shalev-Shwarz and Zhang~\cite{SSS2013-accelerated}. Lacoste-Julien et al~\cite{Jaggi-ICML2013-block-Frank-Wolfe} were the first to develop a block-coordinate Frank-Wolfe method. The generalized power method of Journ\'{e}e et al \cite{Richtarik-GPower2010} designed for sparse PCA can be seen as a nonconvex block coordinate ascent method with two blocks \cite{RTA:24am}.

\textbf{Parallel methods.} One of the first complexity results for a parallel coordinate descent method  was obtained by Ruszczy\'{n}ski~\cite{Ruszczynski95} and is known as the diagonal quadratic approximation method (DQAM). DQAM updates all blocks at each iteration, and hence is not randomized. The method was designed for solving a convex composite problem with quadratic smooth part and arbitrary separable nonsmooth part and was motivated by the need to solve separable linearly constrained problems arising in stochastic programming. As described in previous sections, a family of randomized parallel block coordinate descent methods (PCDM) for convex composite problems was analyzed by Richt\'{a}rik and Tak\'{a}\v{c}~\cite{RT:PCDM}. Tappenden et al \cite{TRB2013:DQA} recently contrasted the DQA method~\cite{Ruszczynski95} with PCDM \cite{RT:PCDM}, improved the complexity result \cite{RT:PCDM} in the strongly convex case and showed that for PCDM it is optimal choose $\tau$ to be equal to the number of processors. Utilizing the ESO machinery \cite{RT:PCDM} and the primal-dual technique developed by Shalev-Shwarz and Zhang~\cite{Stoch-dual-Coord-Ascent}, Tak\'{a}\v{c} et al \cite{minibatch-ICML2013} developed and analyzed a parallel (mini-batch) stochastic subgradient descent method (applied to the primal problem of training support vector machines with the hinge loss) and a parallel  stochastic dual coordinate ascent method (applied to the dual box-constrained concave maximization problem). The analysis naturally extends to the general setting of Shalev-Shwarz and Zhang~\cite{Stoch-dual-Coord-Ascent}. A parallel Newton coordinate descent method was proposed in \cite{PCDN_2013}. Parallel methods for L1 regularized problems with an application to truss topology design were proposed  by Richt\'{a}rik and Tak\'{a}\v{c}~\cite{RT:TTD2011}. They give the first analysis of a greedy serial coordinate descent method for L1 regularized problems. An early analysis of a PCDM  for L1 regularized problems was performed by Bradley et al \cite{Bradley:PCD-paper}. Other recent parallel methods include \cite{Necoara:parallelCDM-MPC, BOOM2013}.

\subsection{Contents}

In Section~\ref{sec:SPCDM} we describe the problems we study, the algorithm (smoothed parallel coordinate descent method), review Nesterov's smoothing technique and enumerate  our contributions. In Section~\ref{sec:DSO} we compute Lipschitz constants of the gradient smooth approximations of Nesterov separable functions associated with subspaces spanned by arbitrary subset of blocks, and in Section~\ref{sec:ESO} we derive ESO inequalities. Complexity results are derived in Section~\ref{sec:complexity} and finally, in Section~\ref{sec:applications} we describe three applications and preliminary numerical experiments.

\section{Smoothed Parallel Coordinate Descent Method} \label{sec:SPCDM}

In this section we describe the problems we study, the algorithm and list our contributions.

\subsection{Nonsmooth and smoothed composite problems}\label{Sec:8sjs8s}

In this paper we study the iteration complexity of PCDMs applied to two classes of convex composite optimization problems:

\begin{equation}
\label{eq:P}
 \text{minimize} \quad  \FF(x)\eqdef \ff(x)+ \OO(x) \quad \text{subject to} \quad x\in \R^N,
\end{equation}
and
\begin{equation}
\label{eq:Psmooth}
 \text{minimize} \quad  \FF_\mu(x)\eqdef \ff_\mu(x)+ \OO(x) \quad \text{subject to} \quad x\in \R^N.
\end{equation}
We assume \eqref{eq:P} has an optimal solution ($x^*$) and consider the following setup:

%Since this structure involves partitioning the $N$ variables/coodinates into $n$ nonoverlapping blocks/groups, let us first describe this block structure.

\begin{enumerate}
\item \textbf{(Structure of $\ff$)} First, we assume that $\ff$ is of the form
\begin{equation}\label{eq:defstructured}
\ff(x) \eqdef \max_{\z\in \Q} \{ \ve{A x}{\z} - \hphi(\z)\},
\end{equation}
where  $\Q\subseteq \R^m$ is a nonempty compact convex set, $A\in \R^{m\times N}$, $\hphi: \R^m\to \R$ is convex and $\ve{\cdot}{\cdot}$ is the standard Euclidean inner product (the sum of products of the coordinates of the vectors). Note that $\ff$ is convex and in general \emph{nonsmooth}.

\item \textbf{(Structure of $\ff_\mu$)} Further, we assume that $\ff_\mu$ is of the form
\begin{equation} \label{eq:f_mu-P2}\ff_\mu(x) \eqdef \max_{\z \in \Q} \{\ve{Ax}{\z} - \hphi(\z) - \mu d(\z)\},\end{equation}
where $A, \Q$ and $g$ are as above, $\mu>0$ and $d:\R^m\to \R$ is $\sigma$-strongly convex on $\Q$ with respect to the norm
\begin{equation}\label{eq:p-norm}\|\z\|_v \eqdef \Big(\sum_{j=1}^m v_j^p |\z_j|^{p}\Big)^{1/p},\end{equation}
where $v_1,\dots, v_m$ are positive scalars, $1\leq p \leq 2$ and $\z = (\z_1,\dots,\z_m)^T \in \R^m$. We further assume that $d$ is nonnegative on $\Q$  and that $d(\z_0)=0$ for some $\z_0\in \Q$. It then follows that $d(\z) \geq \tfrac{\sigma}{2}\|\z-z_0\|_v^2$ for all $\z \in Q$. That is, $d$ is a prox function on $Q$. We further let $D \eqdef \max_{z \in Q} d(z)$.

For $p> 1$ let $q$ be such that $\tfrac{1}{p}+\tfrac{1}{q} = 1$. Then the conjugate norm of $\|\cdot\|_v$ defined in \eqref{eq:p-norm} is given by
\begin{equation}\label{eq:q-norm}\|\z\|_v^* \eqdef \max_{\|\z'\|_v\leq 1} \ve{\z'}{\z} = \begin{cases}\left(\sum_{j=1}^m v_j^{-q} |\z_j|^{q}\right)^{1/q}, & 1<p\leq 2,\\
\max_{1\leq j \leq m} v_j^{-1} |\z_j|, & p=1.\end{cases}\end{equation}

It is well known that $\ff_\mu$ is a \emph{smooth} convex function; i.e., it is differentiable and its  gradient is Lipschitz.

{\footnotesize \emph{Remark:} As shown by Nesterov in his seminal work on smooth minimization of nonsmooth functions \cite{Nesterov05:smooth}---here summarized in Proposition~\ref{prop:Nesterov}---$\ff_\mu$ is a smooth approximation of $\ff$. In this paper, when solving \eqref{eq:P}, we apply PCDM to \eqref{eq:Psmooth} for a specific choice of $\mu>0$, and then argue, following now-standard reasoning from \cite{Nesterov05:smooth}, that the solution is an approximate solution of the original problem.  This will be made precise in Section~\ref{sec:smoothing}. However, in some cases one is interested in minimizing a function of the form \eqref{eq:Psmooth} directly, without the need to  interpret $\ff_\mu$ as a smooth approximation of another function. For instance, as we shall see in Section~\ref{sec:adaboost}, this is the case with the ``AdaBoost problem''. In summary, both problems \eqref{eq:P} and \eqref{eq:Psmooth} are of interest on their own, even though our approach to solving the first one is by transforming it to the second one.
}

\item \textbf{(Block structure)} Let $A = [A_1,A_2,\dots, A_n]$ be decomposed into nonzero column submatrices, where $A_i \in \R^{m\times N_i}$, $N_i\geq 1$ and $\sum_{i=1}^n N_i = N$, and $U=[U_1,U_2,\dots,U_n]$ be a decomposition of the $N \times N$ identity matrix $U$ into  submatrices $U_i \in \R^{N \times N_i}$. Note that \begin{equation}\label{eq:A_i}A_i = AU_i.\end{equation} It will be useful to note that
\begin{equation}\label{eq:U_iU_j}U_i^T U_j = \begin{cases} N_i \times N_i \text{ identity matrix}, & i = j,\\
N_i \times N_j \text{ zero matrix}, & \text{otherwise}.
\end{cases}
\end{equation}

     For $x\in \R^N$, let $x^{(i)}$ be the block of variables corresponding to the columns of $A$ captured by $A_i$, that is, $x^{(i)} = U_i^T x \in \R^{N_i}$, $i=1,2,\dots,n$. Clearly, any vector $x \in \R^N$ can be written uniquely as $x = \sum_{i=1}^n U_i x^{(i)}$. We will often refer to the vector $x^{(i)}$ as the \emph{i-th block} of $x$. We can now formalize the notation used in the introduction (e.g., in \eqref{eq:PCDM-smooth-case}):
     for $h \in \R^N$ and $\emptyset \neq S\subseteq [n]\eqdef \{1,2,\dots,n\}$ it will be convenient  to write \begin{equation}\label{h_[S]}h_{[S]} \eqdef \sum_{i\in S} U_i h^{(i)}.\end{equation}
     Finally, with each block $i$ we associate a positive definite matrix $B_i \in \R^{N_i \times N_i}$ and scalar $w_i>0$, and equip $\R^N$ with a pair of conjugate norms:
\begin{equation}\label{eq:norm_block}\|x\|_w^2  \eqdef \sum_{i=1}^n w_i \ve{B_i x^{(i)}}{x^{(i)}}, \qquad (\|y\|_w^*)^2  \eqdef \max_{\|x\|_w \leq 1}\ve{y}{x}^2 = \sum_{i=1}^n w_i^{-1} \ve{B_i^{-1}y^{(i)}}{y^{(i)}}.
\end{equation}

{\footnotesize \emph{Remark:} For some problems, it is relevant to consider blocks of coordinates as opposed to individual coordinates. The novel aspects of this paper are \emph{not} in the block setup however, which was already considered in \cite{Nesterov:2010RCDM, RT:PCDM}. We still write the paper in the general block setting; for several reasons. First, it is often practical to work with blocks either due to the nature of the problem (e.g., group lasso), or due to numerical considerations (it is often more efficient to process a ``block'' of coordinates at the same time). Moreover, some parts of the theory need to be treated differently in the block setting. The theory, however, does not get more complicated due to the introduction of blocks. A small notational overhead is a small price to pay for these benefits.
}

\item \textbf{(Sparsity of $A$)} For a vector $x\in \R^N$ let \begin{equation}\label{eq:jd8307d}\J(x) \eqdef \{i\st U_i^T x \neq 0\} =  \{i\st x^{(i)} \neq 0\}.\end{equation} Let $A_{ji}$ be the $j$-th row of $A_i$. If $e_1,\dots,e_m$ are the unit coordinate vectors in $\R^m$, then
\begin{equation}\label{eq:A_ji}A_{ji} \eqdef e_j^T A_i.\end{equation}
Using the above notation, the set of nonzero blocks of the $j$-th row of $A$ can   be expressed as
\begin{equation}
\label{eq:subvector}\J(A^T e_j) \overset{\eqref{eq:jd8307d}}{=} \{i\st U_i^T A^T e_j \neq 0\} \overset{\eqref{eq:A_i} + \eqref{eq:A_ji}}{=} \{i\st A_{ji}\neq 0\}.
\end{equation}

The following concept is key to this paper.

\begin{defn}[Nesterov separability\footnote{We coined the term \emph{Nesterov separability} in honor of Yu.\ Nesterov's seminal work on the smoothing technique \cite{Nesterov05:smooth}, which is applicable to functions represented in the form \eqref{eq:defstructured}.  Nesterov did not study problems
with row-sparse matrices $A$, as we do in this work, nor did he study parallel coordinate descent methods. However, he proposed  the celebrated smoothing technique which we also  employ in this paper.
}]
\label{defn:maxpartsep}
We say that $\ff$  (resp. $\ff_\mu$) is \emph{Nesterov (block) separable} of degree $\omega$ if it has the form \eqref{eq:defstructured} (resp. \eqref{eq:f_mu-P2}) and
\begin{equation}\label{eq:omega}
\max_{1\leq j \leq m} |\J(A^T e_j)| \leq \omega.
\end{equation}
\end{defn}

Note that in the special case when all blocks are of cardinality 1 (i.e., $N_i=1$ for all $i$),  the above definition simply requires all rows of $A$ to have at most $\omega$ nonzero entries.

\item \textbf{(Separability of $\OO$)} We assume that \[\OO(x) = \sum_{i=1}^n \OO_i(x^{(i)}),\] where $\OO_i: \R^{N_i} \to \R \cup \{+ \infty\}$ are \emph{simple} proper closed convex functions.

{\footnotesize \emph{Remark:} Note that we do not assume that the functions $\OO_i$ be smooth. In fact, the most interesting cases in terms of applications are nonsmooth functions such as, for instance, i) $\OO_i(t) = \lambda |t|$ for some $\lambda>0$ and all $i$ (L1 regularized optimization), ii) $\OO_i(t) = 0$ for $t \in [a_i,b_i]$, where $-\infty \leq a_i \leq b_i \leq +\infty$ are some constants, and $\OO_i(t) = +\infty$ for $t \notin [a_i,b_i]$ (box constrained optimization).

}
\end{enumerate}

%\subsection{Smoothed parallel coordinate descent method}

We are now ready to state  the method (Algorithm~\ref{alg:pcdm1}) we use for solving the smoothed composite problem \eqref{eq:Psmooth}. Note that for $\phi\equiv \ff_\mu$ and $\Psi\equiv 0$, Algorithm~\ref{alg:pcdm1} coincides with the  method \eqref{eq:PCDM-smooth-case}-\eqref{eq:stepsize} described in the introduction. The only conceptual difference here is that in the computation of the updates in Step 2 we need to augment the quadratic obtained from ESO with $\Psi$.  Note that Step 3 can be compactly written as
\begin{equation}\label{eq:step} x_{k+1} = x_k + (h_k)_{[S_k]}.\end{equation}

\begin{algorithm}
  \begin{algorithmic}
\STATE \textbf{Input:} initial iterate $x_0\in \R^N$, $\beta > 0$ and $w = (w_1,\dots,w_n) > 0$
\FOR{$k \geq 0$}
  \STATE \textbf{Step 1.} Generate a random set of blocks $S_k \subseteq \{1,2,\dots,n\}$ %\quad ($S_k \sim \cal P$, independent of $S_{k-1}, S_{k-2}, \dots, S_0$)
  \STATE \textbf{Step 2.} In parallel for $i \in S_k$, compute \[h_k^{(i)} = \arg \min_{t\in \R^{N_i}} \left\{\ve{(\nabla \ff_{\mu}(x_k))^{(i)}}{t} + \frac{\beta w_i}{2} \ve{B_i t}{t} + \OO_i(x_k^{(i)}+t)\right\} \]
  \STATE \textbf{Step 3.} In parallel for $i \in S_k$, update  $x_{k}^{(i)} \leftarrow x_k^{(i)} + h_k^{(i)}$ and set $x_{k+1} \leftarrow x_k$
\ENDFOR
  \end{algorithmic}
\caption{Smoothed Parallel Coordinate Descent Method (SPCDM)}
\label{alg:pcdm1}
\end{algorithm}

 Let us remark that the scheme actually encodes an entire family of methods. For $\tau=1$ we have a serial method (one block updated per iteration), for $\tau=n$ we have a fully parallel method (all blocks updated in each iteration), and there are many partially parallel methods in between, depending on the choice of $\tau$. Likewise, there is flexibility in choosing the block structure. For instance, if we choose $N_i=1$ for all $i$, we have a proximal coordinate descent method, for $N_i>1$, we have a proximal block coordinate descent and for $n=1$ we have a proximal gradient descent method.

%In Step 1, the method generates a random subset $S_k$ of the set of blocks $[n]=\{1,2,\dots,n\}$ (a ``block sampling'') to be updated in the iteration. As we shall see, the theoretical complexity of the method as well as its practical performance depends on the way $S_k$ is chosen. In this paper we assume $\{S_k\}_{k\geq 0}$ are iid random sets having the following uniformity property: $\Prob (i \in S_k)=\Prob(j \in S_k)$ for all $i,j \in [n]$.
%We now give a formal definition, using the terminology established in \cite{RT:PCDM}.

%It now only remains to comment on Step 2 of the algorithm. This is a crucial step in which we are computing  updates to the blocks $i \in S_k$ selected in Step~1. Once we describe how $\beta$ and $w$ chosen, the method is fully described and can be implemented. As we shall see, these quantities will depend on the degree of Nesterov separability $\omega$ of $\ff_\mu$, the choice of the matrices $B_i$, the parameters $p$ and $v$ defining the norm $\|\cdot\|_v$,  and on the choice of block sampling (we consider $\tau$-uniform and $\tau$-nice samplings).

\subsection{Nesterov's smoothing technique}\label{sec:smoothing}

In the rest of the paper we will repeatedly make use of the now-classical smoothing technique of Nesterov \cite{Nesterov05:smooth}. We will \emph{not}
use this merely to approximate $\ff$ by $\ff_\mu$; the technique will be utilized in several proofs in other ways, too. In this section we collect the facts that we will need.

Let $\E_1$ and $\E_2$ be two finite dimensional linear normed spaces, and $\E_1^*$ and $\E_2^*$ be their duals (i.e., the spaces of bounded linear functionals). We equip $\E_1$ and $\E_2$ with norms $\|\cdot\|_1$ and $\|\cdot\|_2$, and the dual spaces $\E_1^*$, $\E_2^*$ with the dual (conjugate norms): \[\|y\|_j^* \eqdef \max_{\|x\|_j \leq 1} \ve{y}{x}, \qquad y \in \E_j^*, \qquad j=1,2,\]
where $\ve{y}{x}$ denotes the action of the linear functional $y$ on $x$. Let $\bar{A} : \E_1 \to \E_2^*$ be a linear operator, and let $\bar{A}^*: \E_2 \to \E_1^*$ be its adjoint:
\[\ve{\bar{A}x}{u} = \ve{x}{\bar{A}^*u}, \qquad x \in \E_1, \qquad u \in \E_2.\]

Let us equip $\bar{A}$ with a norm as follows:
\begin{eqnarray}\label{eq:12norms}
\|\bar{A}\|_{1,2} & \eqdef & \max_{x,u} \left\{\ve{Ax}{u} \st x \in \E_1,\; \|x\|_1 = 1, \; u \in \E_2,\; \|u\|_2 = 1 \right\}\notag\\
& = & \max_{x} \{\|\bar{A}x\|_2^* \st x \in \E_1, \; \|x\|_1 = 1\} = \max_u \{\|\bar{A}^* u\|_1^* \st u\in \E_2,\; \|u\|_2 = 1\}.
\end{eqnarray}

Consider now the function $\bar{\ff} : \E_1 \to \R$ given by
\[\bar{\ff}(x) = \max_{u \in \bar{Q}} \{ \ve{\bar{A}x}{u} - \bar{g}(u) \},\]
where $\bar{Q} \subset \E_2$ is a compact convex set and $\bar{g} : \E_2 \to \R$ is convex. Clearly, $\bar{\ff}$ is convex and in general nonsmooth.

We now describe Nesterov's smoothing technique for approximating $\bar{\ff}$ by a convex function with Lipschitz gradient. The technique relies on the introduction of a prox-function $\bar{d}: \E_2 \to \R$. This function is continuous and strongly convex on $\bar{Q}$ with convexity
parameter $\bar{\sigma}$. Let $u_0$ be the minimizer of $\bar{d}$ on $\bar{Q}$. Without loss of generality, we can assume that
$\bar{d}(u_0)=0$ so that for all $u \in \bar{Q}$, $\bar{d}(u)\geq \frac{\bar{\sigma}}{2}\|u-u_0\|_2^2$. We also write $\bar{D} \eqdef \max \{\bar{d}(u) \st u \in \bar{Q}\}$. Nesterov's smooth approximation of  $\bar{\ff}$ is defined for any $\mu>0$ by
\begin{equation} \label{eq:f_mu}\bar{\ff}_\mu(x) \eqdef \max_{u \in \bar{Q}} \{\ve{\bar{A}x}{u} - \bar{g}(u) - \mu \bar{d}(u)\}.\end{equation}

\begin{prop}[Nesterov \cite{Nesterov05:smooth}] \label{prop:Nesterov} The function $\bar{\ff}_\mu$ is continuously differentiable on $\E_1$ and satisfies
\begin{equation}\label{eq:eps09809}\bar{\ff}_\mu(x) \leq \bar{\ff}(x) \leq \bar{\ff}_\mu(x) + \mu \bar{D}.\end{equation}
Moreover, $\bar{\ff}_\mu$ is convex and its gradient $\nabla \bar{\ff}_\mu(x) = \bar{A}^* u^*$, where $u^*$ is the unique maximizer in \eqref{eq:f_mu},
is Lipschitz continuous with constant
\begin{equation}\label{eq:L_mu}L_\mu = \frac{1}{\mu \bar{\sigma}}\|\bar{A}\|^2_{1,2}.\end{equation}
That is, for all $x,h\in \E_1$,
\begin{equation}\label{eq:DSO-Nesterov}\bar{\ff}_\mu(x+h) \leq \bar{\ff}_\mu(x) + \langle \nabla \bar{\ff}_\mu(x), h \rangle + \frac{\|\bar{A}\|^2_{1,2}}{2 \mu \bar{\sigma}} \|h\|_1^2.\end{equation}

\end{prop}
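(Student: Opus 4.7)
My plan is to follow the original argument of Nesterov~\cite{Nesterov05:smooth}. The sandwich inequality \eqref{eq:eps09809} is immediate from the assumptions on $\bar{d}$: since $\bar{d}(u_0)=0$ and $\bar{d}\geq 0$ on $\bar{Q}$, we have $0\leq \mu\bar{d}(u)\leq \mu \bar{D}$ throughout $\bar{Q}$, so subtracting this penalty from the objective can shift the maximum by at most $\mu \bar{D}$ and cannot increase it. Convexity of $\bar{\ff}_\mu$ follows because, as a function of $x$, it is the pointwise supremum over $u\in \bar{Q}$ of the affine-in-$x$ functions $x\mapsto \ve{\bar{A}x}{u}-\bar{g}(u)-\mu\bar{d}(u)$.

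For differentiability and the gradient formula, the essential observation is that the maximand in \eqref{eq:f_mu} is strictly concave in $u$, with modulus $\mu \bar{\sigma}$, because $-\mu \bar{d}$ is $\mu\bar{\sigma}$-strongly concave while $\ve{\bar{A}x}{u}-\bar{g}(u)$ is concave. Combined with the compactness of $\bar{Q}$, this gives a unique maximizer $u^*=u^*(x)$ for every $x$. Danskin's theorem then yields continuous differentiability of $\bar{\ff}_\mu$ together with the formula $\nabla \bar{\ff}_\mu(x)=\bar{A}^* u^*(x)$.

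The main technical step is the Lipschitz estimate \eqref{eq:L_mu}, which I would establish by first bounding $\|u^*(x_1)-u^*(x_2)\|_2$. Writing the first-order optimality conditions for $u^*(x_i)$, $i=1,2$, as an inclusion involving $\partial \bar{g}(u^*(x_i))$, an element of $\partial \bar{d}(u^*(x_i))$, and the normal cone $N_{\bar{Q}}(u^*(x_i))$, and then pairing the difference of the two inclusions with $u^*(x_1)-u^*(x_2)$, the $\bar{g}$ and normal-cone contributions are nonnegative by monotonicity, leaving
\[\mu \bar{\sigma}\,\|u^*(x_1)-u^*(x_2)\|_2^2 \;\leq\; \ve{\bar{A}(x_1-x_2)}{u^*(x_1)-u^*(x_2)}\]
via the $\bar{\sigma}$-strong monotonicity of $\partial \bar{d}$. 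Bounding the right-hand side by means of \eqref{eq:12norms} and the duality between $\|\cdot\|_1$ and $\|\cdot\|_1^*$ yields $\|u^*(x_1)-u^*(x_2)\|_2 \leq (\mu\bar{\sigma})^{-1}\|\bar{A}\|_{1,2}\|x_1-x_2\|_1$, and combining this with the gradient formula and the bound $\|\bar{A}^* v\|_1^* \leq \|\bar{A}\|_{1,2}\|v\|_2$ (again from \eqref{eq:12norms}) delivers precisely the Lipschitz constant $L_\mu$ of \eqref{eq:L_mu}. The descent inequality \eqref{eq:DSO-Nesterov} then comes out of the standard route: integrate $\ve{\nabla \bar{\ff}_\mu(x+th)-\nabla \bar{\ff}_\mu(x)}{h}$ in $t$ from $0$ to $1$ and apply the Lipschitz bound together with Cauchy--Schwarz in the $(\|\cdot\|_1,\|\cdot\|_1^*)$ pair.

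I expect the subtlest point to be handling the subdifferential of $\bar{g}$ rigorously, since $\bar{g}$ is only assumed convex and may fail to be differentiable at $u^*(x_i)$. This is more of a bookkeeping hazard than a conceptual obstacle: once the optimality conditions are correctly expressed as inclusions involving all three monotone operators ($\partial \bar{g}$, $\partial\bar{d}$ and $N_{\bar{Q}}$), the cancellations are automatic and no ideas beyond the above are required.
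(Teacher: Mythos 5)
The paper does not prove this proposition at all: it is imported verbatim from Nesterov's smoothing paper \cite{Nesterov05:smooth}, so there is no internal proof to compare against. Your reconstruction is correct and follows Nesterov's original argument essentially line for line: the sandwich bound from $0\leq\mu\bar{d}\leq\mu\bar{D}$ on $\bar{Q}$, convexity as a supremum of affine functions, uniqueness of $u^*(x)$ from strong concavity of the maximand plus Danskin for the gradient formula, and the Lipschitz constant via the strong-monotonicity/variational-inequality pairing that yields $\mu\bar{\sigma}\|u^*(x_1)-u^*(x_2)\|_2^2\leq\ve{\bar{A}(x_1-x_2)}{u^*(x_1)-u^*(x_2)}$ followed by two applications of the operator-norm bound \eqref{eq:12norms}. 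The only nit is terminological: you write ``strictly concave with modulus $\mu\bar{\sigma}$'' where you mean \emph{strongly} concave (strict concavity alone would give uniqueness but not the quantitative stability estimate your Lipschitz bound relies on); the subsequent argument makes clear you are using the strong version, so nothing breaks.
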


The above result will be used in this paper in various ways:

\begin{enumerate}
\item As a direct consequence of \eqref{eq:DSO-Nesterov} for $\E_1 = \R^N$ (primal basic space), $\E_2=\R^m$ (dual basic space), $\|\cdot\|_1 = \|\cdot\|_w$, $\|\cdot\|_2=\|\cdot\|_v$, $\bar{d}=d$, $\bar{\sigma}=\sigma$, $\bar{Q}=Q$, $\bar{g}=g$, $\bar{A}=A$ and $\bar{\ff}=\ff$,
we obtain the following inequality:
\begin{equation}\label{eq:DSO-Nesterov-corol}\ff_\mu(x+h) \leq \ff_\mu(x) + \langle \nabla \ff_\mu(x), h \rangle + \frac{\|A\|^2_{w,v}}{2 \mu \sigma} \|h\|_w^2.\end{equation}

\item A large part of this paper is devoted to various refinements (for a carefully chosen data-dependent $w$ we ``replace'' $\|A\|^2_{w,v}$ by an easily computable and interpretable quantity depending on $h$ and $\omega$, which gets smaller as $h$ gets sparser and $\omega$ decreases) and extensions (left-hand side is replaced by $\Exp [f_\mu(x+h_{[\hat{S}]})]$) of inequality \eqref{eq:DSO-Nesterov-corol}. In particular, we give formulas for  fast computation of subspace Lipschitz constants of $\nabla \ff_\mu$ (Section~\ref{sec:DSO}) and derive ESO inequalities (Section~\ref{sec:ESO})---which are essential for proving iteration complexity results for variants of the smoothed parallel coordinate descent method (Algorithm~\ref{alg:pcdm1}).

%\item Inequalities \eqref{eq:eps09809} have in the above setting the following simple consequence first utilized by Nesterov~\cite{Nesterov05:smooth}: an $\epsilon/2$-minimizer of $\ff_\mu+\Psi$ for $\mu = \epsilon/(2D_2)$ is an $\epsilon$-minimizer of $\ff+\Psi$.

\item Besides the above application to smoothing $\ff$; we will utilize Proposition~\ref{prop:Nesterov} also as a tool for computing Lipschitz constants of the gradient of two technical functions needed in proofs. In Section~\ref{sec:DSO} we will use $\E_1=\R^S$ (``primal update space'' associated with a subset $S\subseteq [n]$), $\E_2=\R^m$ and $\bar{A}=A^{(S)}$. In Section~\ref{sec:ESO} we will use $\E_1 = \R^N$, $\E_2=\R^{|{\cal P}| \times m}$ (``dual product space'' associated with sampling $\hat{S}$) and $\bar{A}=\hat{A}$. These spaces and matrices will be defined in the above mentioned sections, where they are needed.

\end{enumerate}

The following simple consequence of Proposition~\ref{prop:Nesterov} will be useful in proving our complexity results.

\begin{lem}\label{eq:lemma098098sdsd} Let $x^*$ be an optimal solution of \eqref{eq:P} (i.e., $x^*=\arg \min_x \FF(x)$) and $x_\mu^*$ be an optimal solution of \eqref{eq:Psmooth} (i.e., $x_\mu^* = \arg \min_{x} F_\mu(x)$). Then for any $x \in \dom \Psi$ and $\mu>0$,
\begin{equation}\label{eq:propsjhs88sd}\FF_\mu(x) - \FF_\mu(x_\mu^*) - \mu D \quad \leq \quad \FF(x) - \FF(x^*) \quad \leq \quad \FF_\mu(x) - \FF_\mu(x_\mu^*) + \mu D.\end{equation}
\end{lem}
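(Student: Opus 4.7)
The plan is to reduce everything to Proposition~\ref{prop:Nesterov}, which gives the pointwise sandwich $f_\mu(x) \leq f(x) \leq f_\mu(x) + \mu D$ for every $x$ (taking $\bar f = f$, $\bar D = D$ as set up earlier in the excerpt). Since $F = f + \Psi$ and $F_\mu = f_\mu + \Psi$, adding $\Psi(x)$ to all three sides immediately upgrades this to
\[
F_\mu(x) \;\leq\; F(x) \;\leq\; F_\mu(x) + \mu D \qquad \text{for all } x\in\dom\Psi.
\]
This one inequality is essentially all I need; the rest is bookkeeping with the two optimality conditions $F(x^*)\leq F(y)$ and $F_\mu(x_\mu^*) \leq F_\mu(y)$.

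For the upper bound on $F(x)-F(x^*)$, I would chain: $F(x) \leq F_\mu(x) + \mu D$ by the sandwich, and $-F(x^*) \leq -F_\mu(x^*) \leq -F_\mu(x_\mu^*)$, where the first step uses $F_\mu \leq F$ and the second uses optimality of $x_\mu^*$ for $F_\mu$. Adding these gives the desired inequality $F(x) - F(x^*) \leq F_\mu(x) - F_\mu(x_\mu^*) + \mu D$.

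For the lower bound, I would again start with $F(x) \geq F_\mu(x)$, and then bound $F(x^*)$ from above as follows: $F(x^*) \leq F(x_\mu^*)$ by optimality of $x^*$ for $F$, and $F(x_\mu^*) \leq F_\mu(x_\mu^*) + \mu D$ by the sandwich. Combining, $-F(x^*) \geq -F_\mu(x_\mu^*) - \mu D$, so $F(x) - F(x^*) \geq F_\mu(x) - F_\mu(x_\mu^*) - \mu D$.

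There is no real obstacle here: the entire argument is a two-line chase that uses only (i) the pointwise approximation bound from Nesterov's proposition (applied to $f$, then lifted to $F$ by adding $\Psi$) and (ii) the definitions of $x^*$ and $x_\mu^*$ as minimizers. The only subtlety worth flagging is that one must compare $F$ and $F_\mu$ at \emph{different} points in the two directions of the bound (at $x^*$ for the upper bound, at $x_\mu^*$ for the lower bound); mixing these up is the easy mistake to avoid.
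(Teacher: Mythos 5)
Your proof is correct and follows essentially the same route as the paper: establish the pointwise sandwich $\FF_\mu \leq \FF \leq \FF_\mu + \mu D$ from Proposition~\ref{prop:Nesterov} by adding $\Psi$, then combine it with the two optimality conditions, evaluating the sandwich at $x^*$ for the upper bound and at $x_\mu^*$ for the lower bound. The paper writes out only the upper bound (via the same two inequalities (i) $\FF(x)\leq \FF_\mu(x)+\mu D$ and (ii) $\FF_\mu(x_\mu^*)\leq \FF_\mu(x^*)\leq \FF(x^*)$) and leaves the other direction as analogous, which is exactly the argument you supply.
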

\begin{proof} From Proposition~\ref{prop:Nesterov} (used with $\bar{A}=A$, $\bar{\ff} = \ff$, $\bar{Q}=Q$, $\bar{d} = d$,  $\|\cdot\|_2 = \|\cdot\|_v$, $\bar{\sigma}=\sigma$, $\bar{D}=D$  and $\bar{\ff}_\mu = \ff_\mu$), we get $\ff_\mu(y) \leq \ff(y) \leq \ff_\mu(y) +\mu D$, and adding $\Psi(y)$ to all terms leads to $\FF_\mu(y) \leq \FF(y) \leq \FF_\mu(y) +\mu D$, for all $y \in \dom \Psi$. We only prove the second inequality, the first one can be shown analogously. From the last chain of inequalities and optimality of $x_\mu^*$ we get i) $\FF(x) \leq \FF_\mu(x)+\mu D$ and ii) $\FF_\mu(x_\mu^*)\leq \FF_\mu(x^*) \leq \FF(x^*)$. We only need to subtract (ii) from (i).
\end{proof}

\subsection{Contributions}

We now describe some of the main contributions of this work.

\begin{enumerate}

\item \textbf{First complexity results.} We give the first complexity results for solving problems \eqref{eq:P} and \eqref{eq:Psmooth} by a parallel coordinate descent method. In fact, to the best of our knowledge, we are not aware of any complexity results even in the $\Psi\equiv 0$ case.
    We obtain our results by combining the following: i) we show that $\ff_\mu$---smooth approximation of $\ff$---admits ESO inequalities with respect to uniform samplings and compute ``good'' parameters $\beta$ and $w$, ii) for problem \eqref{eq:P} we utilize Nesterov's smoothing results (via Lemma~\eqref{eq:lemma098098sdsd}) to argue that an approximate solution of  \eqref{eq:Psmooth} is an approximate solution of \eqref{eq:P}, iii) we use the generic complexity bounds proved by Richt\'{a}rik and Tak\'{a}\v{c} \cite{RT:PCDM}.

\item \textbf{Nesterov separability.} We identify the degree  of \emph{Nesterov separability} as the important quantity driving parallelization speedup.

\item \textbf{ESO parameters.} We show that it is possible to compute ESO parameters $\beta$ and $w$ \emph{easily}. This is of utmost importance for big data applications where the computation of the Lipschitz constant $L$ of $\nabla \phi= \nabla \ff_\mu$ is prohibitively expensive (recall the discussion in Section~\ref{sec:spurious}). In particular, we suggest that in the case with all blocks being of size 1 ($N_i=1$ and $B_i=1$ for all $i$), the weights $w_i=w_i^*$, $i=1,2,\dots,n$, be chosen as follows:
 \begin{equation}\label{eq:w_i^*-simple}w_i^* = \begin{cases}
\max_{1\leq j \leq m} v_j^{-2}A_{ji}^2,& p=1, \\
\left(\sum_{j=1}^m v_j^{-q} |A_{ji}|^q\right)^{2/q}, & 1< p < 2, \\
\sum_{j=1}^m v_j^{-2} A_{ji}^2, & p=2 .\end{cases}\end{equation}
These weights can be computed in $O(\nnz(A))$ time. The general formula for $w^*$ for arbitrary blocks and matrices $B_i$ is given in \eqref{eq:w_i}.

Moreover, we show  (Theorems~\ref{thm:beta-dso} and \ref{thm:beta-eso-taunice}) that $(\ff_\mu,\hat{S})\sim \ESO(\beta,w^*)$, where $\beta = \tfrac{\beta'}{\sigma \mu}$ and
\[\beta' = \begin{cases}\min\{\omega,\tau\}, & \text{ if $\hat{S}$ is $\tau$-uniform},\\
1+ \tfrac{(\omega-1)(\tau-1)}{\max\{1,n-1\}}, & \text{ if $\hat{S}$ is $\tau$-nice and $p=2$},\end{cases}\]
and $\omega$ is the degree of Nesterov separability. The formula for $\beta'$ in the case of a $\tau$-nice sampling $\hat{S}$ and $p=1$ is more involved and is given in Theorem~\ref{thm:beta-eso-taunice}. This value is always larger than $\beta'$ in the $p=2$ case (recall that small $\beta'$ is better), and increases with $m$. However, they are often very close in practice (see Figure~\ref{fig:compeso}).

Surprisingly, the formulas for $\beta'$ in the two cases summarized above are identical to those obtained in \cite{RT:PCDM} for smooth partially separable functions (recall the discussion in Section~\ref{sec:generic}), although the classes of functions considered are \emph{different}.
The investigation of this phenomenon is an open question.

We also give formulas for $\beta$ for arbitrary $w$, but these involve the computation of a complicated matrix norm (Theorem~\ref{thm:beta-eso-general}). The above formulas for $\beta$ are \emph{good} (in terms of the parallelization speedup they lead to), \emph{easily computable} and \emph{interpretable} bounds on this norm for $w=w^*$.

\item \textbf{Complexity.} Our complexity results are spelled out in detail in Theorems~\ref{thm:complexity_strong_convexity_smoothed} and \ref{thm:complexity_strong_convexity_nonsmooth}, and are summarized in the table below.

\begin{center}
\begin{tabular}{|c|c|c|}
  \hline
                           & strong convexity & convexity  \\
  \hline
  % after \\: \hline or \cline{col1-col2} \cline{col3-col4} ...
  & & \phantom{aaa}\\
  Problem \ref{eq:P} [Thm~\ref{thm:complexity_strong_convexity_smoothed}]       & $\tfrac{n}{\tau}\times \tfrac{ \tfrac{\beta'}{\mu \sigma} + \sigma_\Psi}{\sigma_{f_\mu} + \sigma_\Psi}$ & $\tfrac{n \beta'}{\tau} \times \tfrac{2 Diam^2}{\mu \sigma \epsilon}$\\
  & & \phantom{aaa} \\
  Problem \ref{eq:Psmooth} [Thm~\ref{thm:complexity_strong_convexity_nonsmooth}] & $\frac{n}{\tau}\times \frac{ \tfrac{2\beta' D}{\epsilon \sigma} + \sigma_\Psi}{\sigma_{f_\mu} + \sigma_\Psi}$ & $\frac{n \beta'}{\tau} \times \frac{8 D Diam^2}{\sigma \epsilon^2}$\\
  & & \phantom{aaa}\\
  \hline
\end{tabular}
\end{center}

The results are complete up to logarithmic factors and say that as long as SPCDM takes at least $k$ iterations, where lower bounds for $k$ are given in the table, then $x_k$ is an $\epsilon$-solution with probability at least $1-\rho$. The confidence level parameter $\rho$ can't be found in the table as it appears in a logarithmic term which we suppressed from the table. For the same reason, it is easy for SPCDM to achieve arbitrarily high confidence. More on the parameters: $n$ is then umber of blocks, $\sigma, \mu$ and $D$ are defined in \S2 of Section~\ref{Sec:8sjs8s}. The remaining parameters will be defined precisely in Section~\ref{sec:complexity}: $\sigma_\phi$ denotes the strong convexity constant of $\phi$ with respect to the norm $\|\cdot\|_{w^*}$ (for $\phi=\Psi$ and $\phi=\ff_\mu$) and $Diam$ is the diameter of the level set of the loss function defined by the value of the loss function at the initial iterate $x_0$.

Observe that as $\tau$ increases, the number of iteration decreases. The actual rate of decrease is controlled by the value of $\beta'$ (as this is the only quantity that may grow with $\tau$). In the convex case, any value of $\beta'$ smaller than $\tau$ leads to parallelization speedup. Indeed, as we discussed in \S3 above, the values of $\beta'$ are much smaller than $\tau$, and decrease to $1$ as $\omega$ approaches $1$. Hence, the more separable the problem is, in terms of the degree of partial separability $\omega$, the better. In the strongly convex case, the situation is even better.

\item \textbf{Cost of a single iteration.} The arithmetic cost of a single iteration of SPCDM is $c = c_1+c_2+c_3$, where $c_1$ is the cost of
computing the gradients $(\nabla f(x_k))^{(i)}$ for $i \in S_k$, $c_2$ is the cost of computing the updates $h_k^{(i)}$ for $i \in S_k$, and $c_3$ is the cost of applying these updates. For simplicity, assume that all blocks are of size 1 and that we update $\tau$ blocks at each iteration. Clearly, $c_3 = \tau$. Since often $h_k^{(i)}$ can be computed in closed form\footnote{This is the case in many cases, including  i) $\Psi_i(t) = \lambda_i|t|$, ii) $\Psi_i(t) = \lambda_i t^2$, and iii) $\Psi_i(t) = 0$ for $t \in [a_i,b_i]$ and $+\infty$ outside this interval (and the multivariate/block generalizations of these functions). For complicated functions $\Psi_i(t)$ one may need to do one-dimensional optimization, which will cost $O(1)$ for each $i$, provided that we are happy with an inexact solution. An analysis of PCDM in the $\tau=1$ case in such an  inexact setting can be found in Tappenden et al \cite{TRG:InexactCDM}, and can be extended to the parallel setting.} and takes $O(1)$ operations, we have $c_2= O(\tau)$. The value of $c_1$ is more difficult to predict in general since  by Proposition~\ref{prop:Nesterov}, we have \[\nabla \ff_\mu(x_{k})=A^T z_k,\] where $z_{k} = \arg \max_{z\in Q} \left\{\ve{Ax_{k}}{z} - g(z) -\mu d(z)\right\}$, and hence $c_1$ depends on the relationship between $A,Q,g$ and $d$. It is often the case though that $z_{k+1}$ is obtained from $z_k$ by changing at most $\delta$ coordinates, with  $\delta$ being small. In such a case it is efficient to maintain the vectors $\{z_{k}\}$ (update at each iteration will cost $\delta$)  and at iteration $k$ to compute $(\nabla\ff_\mu(x_k) ) ^{(i)} = (A^T z_k)^{(i)} = \ve{a_i}{z_k}$ for $i \in S_k$, where $a_i$ is the $i$-th column of $A$, whence  $c_1 = \delta + 2\sum_{i \in S_k}\|a_i\|_0$. Since $\Prob(i \in S_k) = \tau/n$, we have \[\Exp[c_1] = \delta +  \tfrac{2\tau}{n}\sum_{i=1}^n  \|a_i\|_0  = \delta +  \tfrac{2\tau}{n} \nnz(A).\]
In summary, the expected overall arithmetic cost of a single iteration of SPCDM, under the assumptions made above, is $\Exp[c] = O(\tfrac{\tau }{n}\nnz(A) + \delta)$.

\item \textbf{Parallel randomized AdaBoost.} We observe that the \emph{logarithm} of the exponential loss function, which is very popular in machine learning\footnote{Schapire and Freund have written a book~\cite{Boosting-book} entirely dedicated to boosting and \emph{boosting methods}, which are serial/sequential greedy coordinate descent methods, independently discovered in the machine learning community. The original boosting method, AdaBoost, minimizes the exponential loss, and it the most famous boosting algorithm.}, is of the form
    \[ \ff_\mu(x)=\log \Big(\tfrac{1}{m}\sum_{j=1}^m \exp(b_j(Ax)_j)\Big). \]
    for $\mu=1$ and $\ff(x) = \max_j b_j(Ax)_j$. SPCDM in this case can be interpreted as a parallel randomized boosting method. More    details are given in Section~\ref{sec:adaboost}, and in a follow up\footnote{The results presented in this paper were obtained the Fall of 2012 and Spring of 2013, the follow up work of Fercoq~\cite{Fer-ParallelAdaboost} was prepared in the Summer of 2013.} paper of Fercoq~\cite{Fer-ParallelAdaboost}. Our complexity results improve on those in the machine learning literature. Moreover, our framework makes  possible the use of regularizers. Note that Nesterov separability in the context of machine learning requires all examples to depend on at most $\omega$ features, which is often the case.

\item \textbf{Big data friendliness.} Our method is suitable for solving \emph{big data}  nonsmooth \eqref{eq:P}  and smooth \eqref{eq:Psmooth} convex composite Nesterov separable problems in cases when $\omega$ is relatively small compared to $n$. The reasons for this are: i) the parameters of our method ($\beta$ and $w=w^*$) can be obtained easily, ii) the cost of a single iteration decreases for smaller $\omega$, iii) the method is equipped with provable parallelization speedup bounds which get better as $\omega$ decreases, iv) many real-life big-data problems are sparse and can be modeled in our framework as problems with small $\omega$, v) we demonstrate through numerical experiments involving preliminary medium-scale experiments involving millions of variables that our methods are scalable and that our theoretical parallelization speedup predictions hold.

\item \textbf{Subspace Lipschitz constants.} We derive simple formulas for Lipschitz constants of the gradient of $\ff_\mu$ associated with subspaces spanned by an arbitrary subset $S$ of blocks (Section~\ref{sec:DSO}). As a special case, we show that the gradient of a Nesterov separable function is Lipschitz with respect to the norm separable $\|\cdot\|_{w^*}$ with constant equal to $\tfrac{\omega}{\sigma \mu}$, where $\omega$ is degree of Nesterov separability. Besides being useful in our analysis, these results are also of independent interest in the design of gradient-based algorithms in big dimensions.

\end{enumerate}

%%%%%%%%%%%%%%%%%%%%%%%%%%%%%%%%%%%%%%%%%%%%%%%%%%%%%%%%%%%%%%%
%%%%%%%%%%%%%%%%%%%%%%%%%%%%%%%%%%%%%%%%%%%%%%%%%%%%%%%%%%%%%%%
%%%%%%%%%%%%%%%%%%%%%%%%%%%%%%%%%%%%%%%%%%%%%%%%%%%%%%%%%%%%%%%
\section{Fast Computation of Subspace Lipschitz Constants} \label{sec:DSO}
%%%%%%%%%%%%%%%%%%%%%%%%%%%%%%%%%%%%%%%%%%%%%%%%%%%%%%%%%%%%%%%
%%%%%%%%%%%%%%%%%%%%%%%%%%%%%%%%%%%%%%%%%%%%%%%%%%%%%%%%%%%%%%%
%%%%%%%%%%%%%%%%%%%%%%%%%%%%%%%%%%%%%%%%%%%%%%%%%%%%%%%%%%%%%%%

Let us start by introducing the key concept of this section.

\begin{defn} Let $\phi: \R^N \to \R$ be a smooth function and let $\emptyset \neq S \subseteq \{1,2,\dots,n\}$. Then we say that $L_S(\nabla \phi)$ is a \emph{Lipschitz constant of $\nabla \phi$ associated with $S$}, with respect to norm $\|\cdot\|$, if
\begin{equation}\label{eq:DSO-def}\phi(x+h_{[S]}) \leq \phi(x) + \langle \nabla \phi(x), h_{[S]} \rangle + \frac{L_S(\nabla \phi)}{2} \|h_{[S]}\|^2, \qquad x,h \in \R^N.\end{equation}
We will alternatively say that $L_S(\nabla \phi)$ is a subspace Lipschitz constant of $\nabla \phi$ \emph{corresponding to the subspace spanned by blocks $i$ for $i \in S$}, that is, $\{\sum_{i \in S} U_i x^{(i)}\st x^{(i)} \in \R^{N_i}\}$, or simply a \emph{subspace Lipschitz constant}.
\end{defn}

Observe the above inequality can can be equivalently written as
\[\phi(x+h) \leq \phi(x) + \langle \nabla \phi(x), h \rangle + \frac{L_{\J(h)}(\nabla \phi)}{2} \|h\|^2, \qquad x,h \in \R^N.\]

In this section we will be concerned with obtaining easily computable formulas for subspace Lipschitz constants for $\phi = \ff_\mu$ with respect to the separable norm $\|\cdot\|_w$. Inequalities of this type were first introduced  in \cite[Section~4]{RT:PCDM} (therein called Deterministic Separable Overapproximation, or DSO). The basic idea is that in a parallel coordinate descent method in which $\tau$ blocks are updated at each iteration,  subspace Lipschitz constants for sets $S$ of cardinality $\tau$ are more relevant (and possibly much smaller = better) than the standard Lipschitz constant of the gradient, which corresponds to the special case $S=\{1,2,\dots,n\}$ in the above definition. This generalizes the concept of block/coordinate Lipschitz constants introduced by Nesterov~\cite{Nesterov:2010RCDM} (in which case $|S|=1$) to spaces spanned by multiple blocks.

We first derive a \emph{generic} bound on subspace Lipschitz constants (Section~\ref{sec:genericDSO}), one that holds for any choice of $w$ and $v$. Subsequently  we  show (Section~\ref{eq:DSO-w^*}) that for a particular data-dependent choice of the parameters $w_1,\dots,w_n>0$ defining the norm in $\R^N$, the generic bound can be written in a very simple form from which it is clear that i) $L_S \leq L_{S'}$ whenever $S\subset S'$ and ii) that $L_S$ decreases as the degree of Nesterov separability $\omega$ decreases. Moreover, it is important that the data-dependent weights $w^*$ and the factor are easily computable, as these parameters are needed to run the algorithm.

\subsection{Primal update spaces}\label{sec:PUS}

As a first step we need to construct a collection of normed spaces associated with the subsets of $\{1,2,\dots,n\}$. These will be needed in the technical proofs and also in the formulation of our results.

\begin{itemize}
\item \textbf{Spaces.} For $\emptyset \neq S \subseteq  \{1,2,\dots,n\}$ we define $\R^S \eqdef \bigotimes_{i \in S} \R^{N_i}$ and for $h\in \R^N$ we write
$h^{(S)}$ for the vector in $\R^S$ obtained from $h$ by deleting all coordinates belonging to blocks $i \notin S$ (and otherwise keeping the order of the coordinates).\footnote{Note that $h^{(S)}$ is different from $h_{[S]}=\sum_{i \in S} U_i h^{(i)}$, which is a vector in $\R^N$, although both $h^{(S)}$ and $h_{[S]}$ are composed of  blocks $h^{(i)}$ for $i \in S$.}

\item \textbf{Matrices.} Likewise, let $A^{(S)} : \R^{S} \to \R^m$ be the matrix obtained from $A\in \R^{m\times N}$ by deleting all columns corresponding to blocks $i \notin S$, and note that
\begin{equation}\label{eq:A^Sh^S} A^{(S)}h^{(S)}  = Ah_{[S]}.\end{equation}

\item \textbf{Norms.} We fix positive scalars $w_1,w_2,\dots, w_n$ and on $\R^S$ define a pair of conjugate norms as follows
\begin{equation}\label{eq:norm_w}\|h^{(S)}\|_w \eqdef \left(\sum_{i\in S} w_i \ve{B_i h^{(i)}}{h^{(i)}}\right)^{1/2}, \qquad
\|h^{(S)}\|_w^* \eqdef \left(\sum_{i\in S} w_i^{-1} \ve{B_{i}^{-1} h^{(i)}}{h^{(i)}}\right)^{1/2}.
\end{equation}
The standard Euclidean norm of a vector $h^{(S)}\in \R^S$ is given by \begin{equation}\label{eq:E-norm0}\|h^{(S)}\|_E^2 = \sum_{i \in S}\|h^{(i)}\|_E^2 = \sum_{i \in S} \ve{h^{(i)}}{h^{(i)}} .\end{equation}
\end{itemize}

{\footnotesize \emph{Remark:}
Note that, in particular, for $S=\{i\}$ we get $h^{(S)} = h^{(i)}\in \R^{N_i}$ and $\R^S \equiv \R^{N_i}$ (primal block space); and for $S=[n]$ we get $h^{(S)}=h\in \R^N$ and $\R^S \equiv \R^N$ (primal basic space). Moreover, for all $\emptyset \neq S\subseteq [n]$ and $h\in\R^N$,
\begin{equation}\label{eq:js8s4s3s}\|h^{(S)}\|_w  = \|h_{[S]}\|_w,\end{equation}
where the first norm is in $\R^S$ and the second in $\R^N$.
}

\subsection{General bound} \label{sec:genericDSO}

Our first  result in this section, Theorem~\ref{thm:DSO1}, is a refinement of inequality \eqref{eq:DSO-Nesterov-corol} for a sparse update vector $h$.
The only change consists in the term $\|A\|_{w,v}^2$  being replaced by $\|A^{(S)}\|_{w,v}^2$, where $S=\J(h)$ and  $A^{(S)}$ is the matrix, defined in Section~\ref{sec:PUS}, mapping vectors in the primal update space $\E_1 \equiv \R^S$ to vectors in the dual basic space $\E_2\equiv \R^m$. The primal and dual norms  are given by $\|\cdot\|_1 \equiv \|\cdot\|_w$ and  $\|\cdot\|_2 \equiv \|\cdot\|_v$, respectively.   This is indeed a refinement, since for any $\emptyset \neq S \subseteq [n]$,

\begin{eqnarray*}
\|A\|_{w,v} &\overset{\eqref{eq:12norms}}{=}& \max_{\substack{\|h\|_w = 1\\h\in \R^N}}\|Ah\|_v^* \\
&\geq&  \max_{\substack{\|h\|_w = 1\\ h^{(i)} = 0,\; i\in S\\h\in \R^N} } \|Ah\|_v^* \overset{\eqref{h_[S]}}{=} \max_{\substack{\|h_{[S]}\|_w = 1\\h\in \R^N}}\|Ah_{[S]}\|_v^* \overset{\eqref{eq:A^Sh^S}+\eqref{eq:js8s4s3s}}{=} \max_{\substack{\|h^{(S)}\|_w = 1\\h\in \R^N}} \|A^{(S)}h^{(S)}\|_v^* \overset{\eqref{eq:12norms}}{=} \|A^{(S)}\|_{w,v}.
\end{eqnarray*}

The improvement can be dramatic, and gets better  for smaller sets $S$; this will be apparent later. Note that in the same manner one can show that $\|A^{(S_1)}\|_{w,v}\leq \|A^{(S_2)}\|_{w,v}$ if $\emptyset \neq S_1\subset S_2$.

%Our first result can be viewed as the computation of Lipschitz constants of $\nabla \ff_\mu$ associated with subspaces spanned by the blocks in $\J(h) = \{i \st h^{(i)}\neq 0\}$. %These quantities are important here as Algorithm~\ref{alg:pcdm1} updates multiple blocks in a single iteration.

\begin{thm}[Subspace Lipschitz Constants] \label{thm:DSO1} For any $x\in \R^N$ and nonzero $h\in \R^N$,
\begin{equation}\label{eq:DSO-Lemma}\ff_\mu(x+h) \leq \ff_\mu(x) + \langle \nabla \ff_\mu(x), h \rangle + \frac{\|A^{(\J(h))}\|^2_{w,v}}{2 \mu \sigma} \|h\|_{w}^2.\end{equation}
\end{thm}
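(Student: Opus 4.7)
The plan is to reduce the statement to a direct application of Proposition~\ref{prop:Nesterov} applied in the primal update space associated with $S\eqdef\J(h)$, rather than in the full primal basic space $\R^N$. The key observation is that since $h^{(i)}=0$ for $i\notin S$, we have $h=h_{[S]}$, so only the ``active'' columns of $A$ (i.e., those captured by $A^{(S)}$) actually influence the inequality.

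First I would fix $x\in \R^N$ and $S=\J(h)$, and introduce the auxiliary function $\Phi : \R^S \to \R$ defined by
\[
\Phi(y) \eqdef \max_{\z\in \Q}\bigl\{\ve{A^{(S)}y}{\z} - (\hphi(\z)-\ve{Ax}{\z}) - \mu d(\z)\bigr\}.
\]
Since $\hphi(\z)-\ve{Ax}{\z}$ is convex in $\z$, this is exactly of the form \eqref{eq:f_mu} needed to apply Proposition~\ref{prop:Nesterov}, with $\E_1=\R^S$, $\E_2=\R^m$, $\bar A=A^{(S)}$, $\|\cdot\|_1=\|\cdot\|_w$ (as defined on $\R^S$ in \eqref{eq:norm_w}), $\|\cdot\|_2=\|\cdot\|_v$, $\bar d=d$, $\bar\sigma=\sigma$, $\bar Q=Q$. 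Hence Proposition~\ref{prop:Nesterov} yields, for all $y,h'\in \R^S$,
\[
\Phi(y+h') \leq \Phi(y) + \ve{\nabla \Phi(y)}{h'} + \frac{\|A^{(S)}\|^2_{w,v}}{2\mu\sigma}\|h'\|_w^2.
\]

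Next I would connect $\Phi$ back to $\ff_\mu$. By identity \eqref{eq:A^Sh^S} we have $A^{(S)}y=Ay_{[S]}$ (viewing $y_{[S]}\in\R^N$ as the zero-extension of $y\in\R^S$), so
\[
\Phi(y) \;=\; \max_{\z\in Q}\bigl\{\ve{A(x+y_{[S]})}{\z} - \hphi(\z) - \mu d(\z)\bigr\} \;=\; \ff_\mu(x+y_{[S]}).
\]
In particular $\Phi(0)=\ff_\mu(x)$ and, by the chain rule, $\nabla\Phi(0)=(\nabla\ff_\mu(x))^{(S)}$. Applying the displayed inequality at $y=0$ with $h'=h^{(S)}$, and using $h=h_{[S]}$ together with $\|h^{(S)}\|_w=\|h_{[S]}\|_w=\|h\|_w$ (see \eqref{eq:js8s4s3s}) and
\[
\ve{(\nabla\ff_\mu(x))^{(S)}}{h^{(S)}} = \sum_{i\in S}\ve{(\nabla\ff_\mu(x))^{(i)}}{h^{(i)}} = \ve{\nabla\ff_\mu(x)}{h_{[S]}} = \ve{\nabla\ff_\mu(x)}{h},
\]
delivers exactly \eqref{eq:DSO-Lemma}.

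The only substantive step is verifying that Proposition~\ref{prop:Nesterov} can indeed be applied with the ``shifted'' concave inner function $\hphi(\z)-\ve{Ax}{\z}$ in place of $\bar g$; this is immediate since the proposition only uses convexity of $\bar g$ and not any smoothness or positivity. Everything else is bookkeeping: restricting the matrix $A$ to its $S$-columns, identifying the norms on $\R^S$ with those inherited from $\R^N$ via \eqref{eq:js8s4s3s}, and recognizing that $h=h_{[S]}$ when $S=\J(h)$. No real obstacle is anticipated—the conceptual content of the theorem is that taking a smaller primal update space yields a smaller operator norm $\|A^{(S)}\|_{w,v}\leq \|A\|_{w,v}$, a comparison which is already recorded in the discussion immediately preceding the theorem statement.
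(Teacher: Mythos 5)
Your proposal is correct and follows essentially the same route as the paper's own proof: both restrict to the primal update space $\R^{\J(h)}$, absorb the linear term $\ve{Ax}{\z}$ into the convex function $\hphi$, and invoke Proposition~\ref{prop:Nesterov} with $\bar A = A^{(S)}$ to get the Lipschitz constant $\|A^{(S)}\|_{w,v}^2/(\mu\sigma)$, then translate back via $h = h_{[S]}$ and $\|h^{(S)}\|_w = \|h\|_w$. The only cosmetic difference is that you identify $\nabla\Phi(0)$ with $(\nabla\ff_\mu(x))^{(S)}$ via the chain rule, whereas the paper reads it off as $(A^{(S)})^T u^*$ from the maximizer; these are equivalent.
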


\begin{proof} Fix $x \in \R^N$, $\emptyset \neq S \subseteq [n]$ and define  $\bar{\ff}: \R^S \to \R$ by
\begin{eqnarray}\notag \bar{\ff}(h^{(S)}) \eqdef \ff_\mu(x+h_{[S]}) &=& \max_{u\in \Q}\left\{\ve{A(x+h_{[S]})}{u} - \hphi(u) -\mu d(u)\right\}\\
\label{eq:jasiu9383}  & \overset{\eqref{eq:A^Sh^S}}{=}& \max_{u\in \Q}\left\{\ve{A^{(S)} h^{(S)}}{u} - \bar{\hphi}(u) -\mu d(u)\right\},
\end{eqnarray}
where $\bar{\hphi}(u)=\hphi(u)-\ve{Ax}{u}$. Applying Proposition~\ref{prop:Nesterov} (with $\E_1=\R^S$, $\E_2=\R^m$, $\bar{A}=A^{(S)}$, $\bar{Q}=Q$, $\|\cdot\|_1=\|\cdot\|_w$ and $\|\cdot\|_2=\|\cdot\|_v$), we conclude that
the gradient of $\bar{\ff}$ is Lipschitz with respect to $\|\cdot\|_w$ on $\R^S$, with Lipschitz constant
$\tfrac{1}{\mu \sigma}\|A^{(S)}\|^2_{w,v}$. Hence, for all $h \in \R^N$,
\begin{equation}\label{eq:js8jd88d}\ff_\mu(x+h_{[S]}) = \bar{\ff}(h^{(S)}) \leq \bar{\ff}(0) + \ve{\nabla \bar{\ff}(0)}{h^{(S)}} + \frac{\|A^{(S)}\|_{w,v}^2}{2\mu \sigma} \|h^{(S)}\|_{w}^2 \enspace.
\end{equation}

Note that $\nabla \bar{\ff}(0) = (A^{(S)})^T u^*$ and $\nabla \ff_\mu(x) = A^T u^*$, where $u^*$ is the maximizer in \eqref{eq:jasiu9383}, whence
\[\ve{\nabla \bar{\ff}(0)}{h^{(S)}} = \ve{(A^{(S)})^Tu^*}{h^{(S)}} = \ve{u^*}{A^{(S)}h^{(S)}} \overset{\eqref{eq:A^Sh^S}}{=} \ve{u^*}{A h_{[S]}} = \ve{A^Tu^*}{ h_{[S]}} = \ve{\nabla \ff_\mu(x)}{h_{[S]}}.\]
Substituting this and the identities $\bar{\ff}(0) = \ff_\mu(x)$ and \eqref{eq:js8s4s3s}  into \eqref{eq:js8jd88d} gives

\[\ff_\mu(x+h_{[S]})  \leq \ff_\mu(x) + \ve{\nabla \ff_\mu(x)}{h_{[S]}} + \frac{\|A^{(S)}\|_{w,v}^2}{2\mu \sigma} \|h_{[S]}\|_{w}^2.
\]

%$\ff_\mu(x) +\ve{ \nabla \ff_\mu(x)}{h_{[S]}} +  \frac{\|A^{(S)}\|_{1,2}^2}{2\mu \sigma} \|h_{[S]}\|_{w}^2,$

It now remains to observe that in view of \eqref{eq:jd8307d} and \eqref{h_[S]}, for all $h\in \R^N$ we have $h_{[\J(h)]} = h$.
\end{proof}

\subsection{Bounds for data-dependent weights $w$}\label{eq:DSO-w^*}

From now on we will not consider arbitrary weight vector $w$ but one defined by the data matrix $A$ as follows. Let us define  $w^* = (w_1^*,\dots, w_n^*)$ by
\begin{equation}\label{eq:w_i}
w_i^* \eqdef \max \{ (\norm{A_i B_i^{-1/2} t}_v^*)^2 \st t \in \R^{N_i},\; \|t\|_E = 1 \}, \quad i=1,2,\dots,n.
\end{equation}
Notice that as long as the matrices $A_1,\dots, A_n$ are nonzero, we have $w_i^*>0$ for all $i$, and hence the norm $\|\cdot\|_1=\|\cdot\|_{w^*}$ is well defined. %This is indeed the case, since we assume that $A$ does not contain any zero columns.
When all blocks are of size 1 (i.e., $N_i=1$ for all $i$) and $B_i=1$ for all $i$, this reduces to \eqref{eq:w_i^*-simple}. Let us return to the general block setting. Letting $S = \{i\}$ and $\|\cdot\|_1\equiv \|\cdot\|_{w^*}$, we see that $w^*_i$ is defined so that the $\|A^{(S)}\|_{w^*,v}=1$. Indeed,
\begin{eqnarray}
\norm{A^{(S)}}_{w^*,v}^2 & \overset{\eqref{eq:12norms}}{=} & \max_{\|h^{(S)}\|_{w^*} = 1} (\norm{A^{(S)} h^{(S)} }_v^*)^2 \quad \overset{\eqref{eq:A^Sh^S}+\eqref{h_[S]}}{=} \quad \max_{\|h^{(i)}\|_{w^*} = 1} (\norm{A U_i h^{(i)}}_v^*)^2  \notag\\
&\overset{\eqref{eq:norm_block} + \eqref{eq:norm_w} }{=}&  \tfrac{1}{w_i^*} \max_{\|y^{(i)}\|_E = 1}  (\norm{A U_i B_i^{-1/2} y^{(i)}}_v^*)^2 \quad \overset{\eqref{eq:w_i}}{=} \quad  1.\label{eq:9879X}
\end{eqnarray}

In the rest of this section we establish an easily computable upper bound on $\|A^{(\J(h))}\|^2_{w^*,v}$ which will be useful in proving a complexity result for SPCDM used with a $\tau$-uniform or $\tau$-nice sampling. The result is, however, of  independent interest, as we argue at the end of this section.

The following is a technical lemma needed to establish the main result of this section.

\begin{lem}\label{lem:666} %Let $w^* = (w_1^*,\dots,w_n^*)$ be defined as in \eqref{eq:w_i} and assume $w^*>0$.
For any $ \emptyset \neq S \subseteq [n]$ and $w^*$ chosen as in \eqref{eq:w_i}, the following hold:
\begin{eqnarray*}p = 1 \qquad & \Rightarrow & \qquad \max_{\norm{h^{(S)}}_{w^*}=1} \max_{1\leq j\leq m} v_j^{-2} \sum_{i \in S}
(A_{ji}  h^{(i)})^2 \leq 1,\\
1 < p \leq 2 \qquad & \Rightarrow & \qquad \max_{\norm{h^{(S)}}_{w^*}=1} \sum_{j=1}^m  \left( v_j^{-q} \sum_{i \in S}
(A_{ji}  h^{(i)})^2\right)^{q/2} \leq 1.\\
%p = 2 \qquad & \Rightarrow & \qquad \max_{\norm{h^{(S)}}_{w^*}=1} \sum_{j=1}^m v_j^{-1} \sum_{i \in S}
%(A_{ji}  h^{(i)})^2 \leq 1.
\end{eqnarray*}
\end{lem}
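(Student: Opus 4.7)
The plan is to extract from the definition \eqref{eq:w_i} a single ``master'' inequality from which both statements flow. Plugging $t = B_i^{1/2}h^{(i)}/\|B_i^{1/2}h^{(i)}\|_E$ into the maximum in \eqref{eq:w_i} and using homogeneity, one obtains, for every $i$ and every $h^{(i)}\in\R^{N_i}$,
\[
(\norm{A_i h^{(i)}}_v^*)^2 \;\leq\; w_i^* \,\ve{B_i h^{(i)}}{h^{(i)}}.
\]
Summing over $i\in S$, the constraint $\|h^{(S)}\|_{w^*}^2=1$ translates into $\sum_{i\in S}(\norm{A_i h^{(i)}}_v^*)^2 \leq 1$, which will serve as the common starting point for both cases.

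For $p=1$ the conjugate norm \eqref{eq:q-norm} is $\norm{z}_v^* = \max_j v_j^{-1}|z_j|$, so the master bound specializes coordinatewise to $v_j^{-2}(A_{ji}h^{(i)})^2 \leq w_i^*\ve{B_i h^{(i)}}{h^{(i)}}$ for every fixed $j$. Summing this over $i\in S$ and invoking $\|h^{(S)}\|_{w^*}^2=1$ gives $v_j^{-2}\sum_{i\in S}(A_{ji}h^{(i)})^2\leq 1$ pointwise in $j$; taking the maximum over $j$ closes the $p=1$ case.

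For $1<p\leq 2$, and hence $q\geq 2$ and $r:=q/2\geq 1$, I would set $b_{ji}:=v_j^{-2}(A_{ji}h^{(i)})^2\geq 0$ and note $b_{ji}^r = v_j^{-q}|A_{ji}h^{(i)}|^q$, so that $\sum_j b_{ji}^r = (\norm{A_i h^{(i)}}_v^*)^q$. Applying Minkowski's inequality with exponent $r\geq 1$ to swap the outer $j$-sum and inner $i$-sum then yields
\[
\Bigl(\sum_j\bigl(\sum_{i\in S} b_{ji}\bigr)^{r}\Bigr)^{1/r} \;\leq\; \sum_{i\in S}\Bigl(\sum_j b_{ji}^{r}\Bigr)^{1/r} \;=\; \sum_{i\in S}(\norm{A_i h^{(i)}}_v^*)^2 \;\leq\; 1,
\]
and raising both sides to the power $r$ gives exactly the second claim. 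The only substantive step is identifying the correct direction of Minkowski's inequality and verifying that the exponent $q/2$ is $\geq 1$; the subcase $p=2$ (where $r=1$) collapses to merely swapping the order of the two finite sums (Fubini), so I expect no real obstacle beyond bookkeeping the exponents.
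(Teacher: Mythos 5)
Your proof is correct, and it establishes exactly the two bounds the paper's own proof establishes and later uses in Lemma~\ref{lem:090909} (for $1<p\leq 2$ both you and the paper actually prove $\sum_j v_j^{-q}\bigl(\sum_{i\in S}(A_{ji}h^{(i)})^2\bigr)^{q/2}\leq 1$, with $v_j^{-q}$ outside the power $q/2$; the parenthesization in the printed statement appears to be a typo, and your reading is the intended one). The route differs in its mechanics while resting on the same pivot: the normalization built into \eqref{eq:w_i}, i.e.\ $\|A^{(\{i\})}\|_{w^*,v}=1$, which is precisely your master inequality $(\norm{A_i h^{(i)}}_v^*)^2\leq w_i^*\ve{B_i h^{(i)}}{h^{(i)}}$. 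The paper changes variables to $y^{(i)}=(w_i^*)^{1/2}B_i^{1/2}h^{(i)}$, rewrites the left-hand side as a sum over blocks weighted by $\|y^{(i)}\|_E^2$, treats $p=2$ separately, and for $1<p<2$ invokes H\"{o}lder with exponents $q/2$ and $q/(q-2)$ to peel off a factor $\bigl(\sum_{i\in S}\|y^{(i)}\|_E^2\bigr)^{(q-2)q/4}\leq 1$. You instead interchange the $j$- and $i$-sums in one stroke via Minkowski's inequality in $\ell_{q/2}$ (valid since $q\geq 2$), which covers all of $1<p\leq 2$ uniformly and degenerates to a plain exchange of finite sums at $q=2$. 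Your argument is shorter and avoids the case split; what the paper's version buys is that its intermediate objects (the weights $\|y^{(i)}\|_E^2$ and the per-block bound \eqref{eq:9879X}) are reused verbatim in the proof of Theorem~\ref{thm:beta-eso-taunice} (cf.\ \eqref{eq:8shshs}). There is no gap in your proposal.
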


\begin{proof} For any $h^{(i)}$ define the transformed variable $y^{(i)} = (w_i^*)^{1/2} B_i^{1/2} h^{(i)}$ and note that
\[\|h^{(S)}\|_{w^*}^2 \overset{\eqref{eq:norm_w}+
\eqref{eq:norm_block}}{=}  \sum_{i \in S} w_i^* \ve{B_i h^{(i)}}{h^{(i)}}  = \sum_{i \in S} \ve{y^{(i)}}{y^{(i)}} \overset{\eqref{eq:E-norm0}}{=} \|y^{(S)}\|_E^2.\]
We will now prove the result separately for $p=1$, $p=2$ and $1<p<2$.  For $p=1$ we have
\begin{eqnarray}
LHS & \eqdef & \max_{\norm{h^{(S)}}_{w^*}=1} \max_{1\leq j\leq m} v_j^{-2} \sum_{i \in S}
(A_{ji}  h^{(i)})^2 \notag\\
& = & \max_{\|y^{(S)}\|_E=1} \max_{1\leq j \leq m} \left(v_j^{-2}  \sum_{i \in S} (w_i^*)^{-1} (A_{ji} B_i^{-1/2} y^{(i)})^2\right)\notag\\
 &\leq &  \max_{\norm{y^{(S)}}_E=1} \left(\sum_{i \in S} (w_i^*)^{-1} \max_{1\leq j \leq m} \left(v_j^{-2} (A_{ji} B_i^{-1/2} y^{(i)})^2\right)\right) \notag\\
&= & \max_{\norm{y^{(S)}}_E=1} \left(\sum_{i \in S} \|y^{(i)}\|_E^2  \underbrace{(w_i^*)^{-1}  \max_{1\leq j \leq m} \left(v_j^{-2} \left(A_{ji} B_i^{-1/2} \tfrac{y^{(i)}}{\|y^{(i)}\|_E}\right)^2\right)}_{\leq \|A^{(\{i\})}\|_{w^*,v}^2 = 1}\right)\notag\\
&\overset{\eqref{eq:9879X}}{\leq}&  \max_{\norm{y^{(S)}}_E=1} \sum_{i \in S} \|y^{(i)}\|_E^2 \quad \overset{\eqref{eq:E-norm0}}{=} \quad 1 \notag.
\end{eqnarray}

For $p>1$ we may write:
\begin{equation}
 LHS  \eqdef    \max_{\norm{h^{(S)}}_{w^*}=1} \sum_{j=1}^m v_j^{-q} \left(\sum_{i \in S}
(A_{ji}h^{(i)})^2\right)^{q/2} =  \max_{\|y^{(S)}\|_E=1} \sum_{j=1}^m v_j^{-q}  \left(\sum_{i \in S} (w_i^*)^{-1} (A_{ji} B_i^{-1/2} y^{(i)})^2\right)^{q/2}. \label{eq:hjgas78sX}
\end{equation}

In particular, for $p=2$ (i.e., $q=2$) we now have
\begin{eqnarray*}
LHS & \overset{\eqref{eq:hjgas78sX}}{=}&  \max_{\norm{y^{(S)}}_E=1} \sum_{i \in S} (w_i^*)^{-1} \sum_{j=1}^m v_j^{-2} (A_{ji} B_i^{-1/2} y^{(i)})^2 \\
&= & \max_{\norm{y^{(S)}}_E=1} \sum_{i \in S} \|y^{(i)}\|_E^2  \underbrace{(w_i^*)^{-1}  \sum_{j=1}^m v_j^{-2} \left(A_{ji} B_i^{-1/2} \tfrac{y^{(i)}}{\|y^{(i)}\|_E}\right)^2}_{\leq \|A^{(\{i\})}\|_{w^*,v}^2 = 1}\\
&\overset{\eqref{eq:9879X}}{\leq}& \max_{\norm{y^{(S)}}_E=1} \sum_{i \in S} \|y^{(i)}\|_E^2  \quad \overset{\eqref{eq:E-norm0}}{=} \quad 1.
\end{eqnarray*}

For $1<p<2$ we will continue\footnote{The proof works for $p=2$ as well, but the one we have given for  $p=2$ is simpler, so we included it.} from \eqref{eq:hjgas78sX}, first by bounding   $R \eqdef \sum_{i \in S} (w_i^*)^{-1} (A_{ji} B_i^{-1/2} y^{(i)})^2 $ using the H\"{o}lder inequality in the form
\[\sum_{i \in S} a_i b_i \leq \left(\sum_{i\in S} |a_i|^s\right)^{1/s} \left(\sum_{i\in S} |b_i|^{s'}\right)^{1/s'},\]
with $a_i = (w_i^*)^{-1}\left(A_{ji}B_i^{-1}\tfrac{y^{(i)}}{\|y^{(i)}\|_E}\right)^2\|y^{(i)}\|^{2-2/s'}$, $b_i = \|y^{(i)}\|_E^{2/s'}$,
$s = q/2$ and $s'=q/(q-2)$.
%{\footnotesize
\begin{eqnarray}\notag
%\left(\sum_{i \in S} (w_i^*)^{-1} (A_{ji} B_i^{-1/2} y^{(i)})^2\right)^{\tfrac{q}{2}}
R^{q/2}
&\leq& \left( \sum_{i \in S} (w_i^*)^{-q/2} \left|A_{ji}B_i^{-1}\tfrac{y^{(i)}}{\|y^{(i)}\|_E}\right|^{q} \|y^{(i)}\|_E^2\right) \times \underbrace{\left(\sum_{i \in S}\|y^{(i)}\|_E^{2} \right)^{(q-2)q/4}}_{\leq 1}\\
\label{eq:sg68sjs545sX}&\leq & \sum_{i \in S} (w_i^*)^{-q/2} \left|A_{ji}B_i^{-1}\tfrac{y^{(i)}}{\|y^{(i)}\|_E}\right|^{q} \|y^{(i)}\|_E^2.
\end{eqnarray}
%}

We now substitute \eqref{eq:sg68sjs545sX} into \eqref{eq:hjgas78sX} and continue as in the $p=2$ case:
\begin{eqnarray*}
LHS & \overset{\eqref{eq:hjgas78sX}+\eqref{eq:sg68sjs545sX}}{\leq} &  \max_{\|y^{(S)}\|_E=1} \left(\sum_{j=1}^m v_j^{-q}  \sum_{i \in S} (w_i^*)^{-q/2} \left|A_{ji}B_i^{-1}\tfrac{y^{(i)}}{\|y^{(i)}\|_E}\right|^{q} \|y^{(i)}\|_E^2\right)^{2/q}\\
 &=& \max_{\|y^{(S)}\|_E=1} \left( \sum_{i \in S} \|y^{(i)}\|_E^2 \underbrace{(w_i^*)^{-q/2} \sum_{j=1}^m v_j^{-q}  \left|A_{ji}B_i^{-1}\tfrac{y^{(i)}}{\|y^{(i)}\|_E}\right|^{q}}_{\leq \left(\|A^{(\{i\})}\|_{w^*,v}^{2}\right)^{1/q} \leq 1} \right)^{2/q}\\
&\overset{\eqref{eq:9879X}}{\leq}&  \max_{\norm{y^{(S)}}_E=1} \left(\sum_{i \in S} \|y^{(i)}\|_E^2 \right)^{2/q} =   \left(\max_{\norm{y^{(S)}}_E=1} \sum_{i \in S} \|y^{(i)}\|_E^2 \right)^{2/q} \quad  \overset{\eqref{eq:E-norm0}}{=} \quad 1.
\end{eqnarray*}

\end{proof}

Using the above lemma we can now give a simple and easily interpretable bound on $\|A^{(S)}\|_{w^*,v}^2$.

\begin{lem} \label{lem:090909} For any $\emptyset \neq S \subseteq [n]$ and $w^*$ chosen as in \eqref{eq:w_i},
\[\|A^{(S)}\|_{w^*,v}^2 \leq \max_{1\leq j \leq m} |\J(A^T e_j) \cap S|.\]
\end{lem}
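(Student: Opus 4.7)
The plan is to start from the variational definition $\|A^{(S)}\|_{w^*,v}^2 = \max_{\|h^{(S)}\|_{w^*}=1} (\|A^{(S)}h^{(S)}\|_v^*)^2$ and bound the dual norm of $A^{(S)}h^{(S)}$ by isolating the combinatorial factor $|\J(A^T e_j)\cap S|$ from a factor that Lemma~\ref{lem:666} already controls. The key observation is that the $j$-th coordinate of $A^{(S)}h^{(S)}$ equals $\sum_{i\in S} A_{ji} h^{(i)}$, and by \eqref{eq:subvector} the summands with $i\notin \J(A^T e_j)$ vanish; hence the sum really has at most $\omega_j(S)\eqdef |\J(A^T e_j)\cap S|$ nonzero terms. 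Applying the Cauchy--Schwarz inequality to these nonzero terms yields the pointwise bound
\begin{equation*}
\Big(\sum_{i\in S} A_{ji} h^{(i)}\Big)^{2} \;\leq\; \omega_j(S)\,\sum_{i\in S} (A_{ji} h^{(i)})^2.
\end{equation*}

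Next I would split into the two regimes determined by the form of the conjugate norm in \eqref{eq:q-norm}. For $p=1$,
$(\|A^{(S)}h^{(S)}\|_v^*)^2 = \max_j v_j^{-2}(\sum_{i\in S} A_{ji}h^{(i)})^2$, and the Cauchy--Schwarz step bounds this by $\max_j \omega_j(S)\,v_j^{-2}\sum_{i\in S}(A_{ji}h^{(i)})^2$, which in turn is at most $(\max_j \omega_j(S))\cdot\max_j v_j^{-2}\sum_{i\in S}(A_{ji}h^{(i)})^2$. The second factor is at most $1$ on the unit ball of $\|\cdot\|_{w^*}$ by the $p=1$ part of Lemma~\ref{lem:666}. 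For $1<p\leq 2$, we have $(\|A^{(S)}h^{(S)}\|_v^*)^2 = \bigl(\sum_j v_j^{-q}|\sum_{i\in S}A_{ji}h^{(i)}|^q\bigr)^{2/q}$; raising the Cauchy--Schwarz bound to the power $q/2$ and summing over $j$ gives
\begin{equation*}
\sum_{j=1}^m v_j^{-q}\Big|\sum_{i\in S} A_{ji}h^{(i)}\Big|^{q} \;\leq\; \bigl(\max_j \omega_j(S)\bigr)^{q/2}\sum_{j=1}^m v_j^{-q}\Big(\sum_{i\in S}(A_{ji}h^{(i)})^{2}\Big)^{q/2},
\end{equation*}
and raising this to the power $2/q$ separates $\max_j\omega_j(S)$ from the expression controlled by the second part of Lemma~\ref{lem:666} (which again is $\leq 1$ on the unit ball of $\|\cdot\|_{w^*}$).

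Finally I would take the supremum over $h^{(S)}$ with $\|h^{(S)}\|_{w^*}=1$. In both cases the combinatorial factor $\max_j \omega_j(S) = \max_{1\leq j\leq m}|\J(A^T e_j)\cap S|$ comes out of the supremum as a constant, and the remaining factor is $\leq 1$ by Lemma~\ref{lem:666}, giving the claimed bound. The only mildly delicate step is the $1<p<2$ case, where one has to be careful that the Cauchy--Schwarz factor can be pulled out in the right way before invoking Lemma~\ref{lem:666}; but since the combinatorial factor $\omega_j(S)$ does not depend on $h^{(i)}$, this is clean. No obstacle that I foresee beyond bookkeeping.
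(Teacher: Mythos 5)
Your proposal is correct and follows essentially the same route as the paper's proof: the variational definition of $\|A^{(S)}\|_{w^*,v}$, the identity $e_j^T A^{(S)}h^{(S)}=\sum_{i\in S}A_{ji}h^{(i)}$ with the vanishing of terms outside $\J(A^Te_j)$, the Cauchy--Schwarz step extracting $|\J(A^Te_j)\cap S|$, and the reduction to Lemma~\ref{lem:666} in the two regimes $p=1$ and $1<p\leq 2$. No discrepancies worth noting.
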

\begin{proof} It will  be useful to note that
\begin{equation}\label{eq:nsj9292} e_j^T A^{(S)}h^{(S)}  \overset{\eqref{eq:A^Sh^S}}{=} e_j^T A h_{[S]} \overset{\eqref{h_[S]}+\eqref{eq:A_ji}}{=} \sum_{i \in S} A_{ji} h^{(i)}.\end{equation}
We will (twice) make use the following form of the Cauchy-Schwarz inequality: for scalars $a_i$, $i\in Z$, we have $(\sum_{i\in Z} a_i)^2 \leq |Z|\sum_{i \in Z} a_i^2$. For $p=1$, we have
\begin{eqnarray}
 \norm{A^{(S)}}_{w^*,v}^2 &\overset{\eqref{eq:12norms}}{=} &  \max_{\|h^{(S)}\|_{w^*} \leq 1}(\|A^{(S)} h^{(S)}\|_v^*)^2 \quad \overset{\eqref{eq:q-norm}}{=} \quad \max_{\norm{h^{(S)}}_{w^*}=1} \max_{1\leq j \leq m} v_j^{-2} \left({e_j^T A^{(S)}h^{(S)}}\right)^2 \notag\\
% &\overset{\eqref{eq:nsj9292}}{=} &  \max_{\norm{h^{(S)}}_{1}=1} \max_{1\leq j \leq m} v_j^{-2} \left({\sum_{i \in S} A_{ji} h^{(i)}}\right)^2 \notag\\
& \overset{\eqref{eq:nsj9292}+\eqref{eq:subvector}}{=} &  \max_{\norm{h^{(S)}}_{w^*}=1} \max_{1\leq j \leq m} v_j^{-2} \left({\sum_{i \in \J(A^T e_j) \cap S} A_{ji} h^{(i)}}\right)^2 \notag\\
& \overset{\text{(Cauchy-Schwarz)}}{\leq} &  \max_{\norm{h^{(S)}}_{w^*}=1} \max_{1\leq j \leq m} \left(v_j^{-2} |\J(A^T e_j) \cap S| \sum_{i \in \J(A^T e_j) \cap S} (A_{ji} h^{(i)})^2 \right)\notag\\
& \leq & \max_{1 \leq j \leq m} |\J(A^T e_j) \cap S| \times \max_{\|h^{(S)}\|_{w^*}=1} \max_{1\leq j \leq m} \left(v_j^{-2}  \sum_{i \in S}  (A_{ji} h^{(i)})^2\right)\notag\\
& \overset{(\text{Lemma}~\ref{lem:666})}{\leq} &  \max_{1\leq j\leq m} | \J(A^T e_j) \cap S|.\notag
\end{eqnarray}

For $1<p\leq 2$, we may write
\begin{eqnarray}
 \norm{A^{(S)}}_{w^*,v}^2
 &\overset{\eqref{eq:12norms}}{=} &  \max_{\|h^{(S)}\|_{w^*} \leq 1}(\|A^{(S)} h^{(S)}\|_v^*)^2 \quad \overset{\eqref{eq:q-norm}}{=} \quad \max_{\norm{h^{(S)}}_{w^*}=1} \left( \sum_{j=1}^m v_j^{-q} \left|{e_j^T A^{(S)}h^{(S)}}\right|^q \right)^{1/q} \notag\\
% &\overset{\eqref{eq:nsj9292}}{=} &  \max_{\norm{h^{(S)}}_{1}=1} \left(\sum_{j=1}^m v_j^{-q} \left|{\sum_{i \in S} A_{ji} h^{(i)}}\right|^q \right)^{2/q} \notag\\
& \overset{\eqref{eq:nsj9292}+\eqref{eq:subvector}}{=} & \max_{\norm{h^{(S)}}_{w^*}=1} \left( \sum_{j=1}^m v_j^{-q} \left( \left|\sum_{i \in \J(A^T e_j) \cap S} A_{ji} h^{(i)}\right|^2\right)^{q/2} \right)^{2/q} \notag\\
& \overset{\text{(Cauchy-Schwarz)}}{\leq} & \max_{\norm{h^{(S)}}_{w^*}=1} \left(\sum_{j=1}^m v_j^{-q} \left(\left|\J(A^T e_j) \cap S \right| \sum_{i \in   \J(A^T e_j) \cap S}
(A_{ji}h^{(i)})^2\right)^{q/2} \right)^{2/q} \notag\\
& \leq & \max_{1 \leq j \leq m} |\J(A^T e_j) \cap S| \times \max_{\|h^{(S)}\|_{w^*}=1} \left(\sum_{j=1}^m v_j^{-q}  \left(\sum_{i \in S}  (A_{ji} h^{(i)})^2\right)^{q/2} \right)^{2/q}\notag \\
& \overset{(\text{Lemma}~\ref{lem:666})}{\leq} & \max_{1 \leq j \leq m} | \J(A^T e_j) \cap S|.\notag
\end{eqnarray}

\end{proof}

We are now ready to state and prove the  main result of this section. It says that  the (interesting but somewhat non-informative)  quantity $\|A^{(\J(h))}\|_{w,v}^2$ appearing in Theorem~\ref{thm:DSO1} can for $w=w^*$ be bounded by a very natural and easily computable quantity capturing the interplay between the sparsity pattern of the rows of $A$ and the sparsity pattern of $h$.

\begin{thm}[Subspace Lipschitz Constants for $w=w^*$] \label{thm:dso}
For $S\subseteq \{1,2,\dots,n\}$ let \begin{equation}\label{eq:theta}L_S \eqdef \max_{1 \leq j \leq m} \abs{\J(A^T e_j) \cap S}.\end{equation}
Then for all $x,h\in \R^N$,
\begin{equation}\label{eq:DSO2}\ff_\mu(x+h) \leq \ff_\mu(x) + \langle \nabla \ff_\mu(x), h \rangle + \frac{L_{\J(h)}}{2 \mu \sigma} \|h\|_{w^*}^2.\end{equation}

\end{thm}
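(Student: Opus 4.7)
The plan is to simply combine the two main pieces that have just been established. Theorem~\ref{thm:DSO1} provides, for $w=w^*$ and any nonzero $h\in\R^N$, the inequality
\[\ff_\mu(x+h) \leq \ff_\mu(x) + \langle \nabla \ff_\mu(x), h \rangle + \frac{\|A^{(\J(h))}\|^2_{w^*,v}}{2 \mu \sigma} \|h\|_{w^*}^2,\]
so it suffices to bound the matrix-norm factor $\|A^{(\J(h))}\|^2_{w^*,v}$ by $L_{\J(h)}$. But that is precisely what Lemma~\ref{lem:090909} says when specialized to $S=\J(h)$: namely, $\|A^{(\J(h))}\|^2_{w^*,v} \leq \max_{1\leq j \leq m} |\J(A^T e_j) \cap \J(h)| = L_{\J(h)}$.

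So my proof would consist of two lines: first quote Theorem~\ref{thm:DSO1} with the weight choice $w=w^*$, then substitute the bound from Lemma~\ref{lem:090909} with $S=\J(h)$ into the coefficient of $\|h\|_{w^*}^2$.

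The only thing to dispatch is the degenerate case $h=0$, which Theorem~\ref{thm:DSO1} excludes in its statement. In that case we have $\J(h)=\emptyset$, $\|h\|_{w^*}=0$, and both sides of \eqref{eq:DSO2} collapse to $\ff_\mu(x)$, so the inequality holds trivially (with the convention $L_\emptyset=0$, or by simply observing that the inequality reduces to $\ff_\mu(x)\leq \ff_\mu(x)$ regardless of the value of $L_\emptyset$).

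There is no real obstacle here; the substance of the argument is entirely contained in Theorem~\ref{thm:DSO1} (which required Nesterov's smoothing machinery applied to the restricted operator $A^{(S)}$) and in Lemma~\ref{lem:090909} (which combined a Cauchy--Schwarz step with Lemma~\ref{lem:666}). The present theorem is the clean, user-facing consequence of those two ingredients, and its proof is a one-line chain of inequalities.
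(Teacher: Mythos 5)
Your proposal is correct and matches the paper's own proof exactly: the paper also derives the result by invoking Theorem~\ref{thm:DSO1} and then bounding $\|A^{(\J(h))}\|_{w^*,v}^2$ by $L_{\J(h)}$ via Lemma~\ref{lem:090909}. Your handling of the degenerate case $h=0$ is a small but welcome addition that the paper leaves implicit.
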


\begin{proof} In view of Theorem~\ref{thm:DSO1}, we only need to show that $\|A^{(\J(h))}\|_{w^*,v}^2 \leq L_{\J(h)}$. This directly follows from Lemma~\ref{lem:090909}.
\end{proof}

Let us now comment on the meaning of this theorem:
\begin{enumerate}
%\item As $\sigma$ is the strong convexity constant of $d$ with respect to the norm $\|\cdot\|_2$, it does not depend on the choice of $w$ defining the norm $\|\cdot\|_1=\|\cdot\|_w$.

\item Note that $L_{\J(h)}$ depends on $A$ and $h$ through their \emph{sparsity pattern} only. Furthermore, $\mu$ is a user chosen parameter and $\sigma$ depends on $d$ and the choice of the norm $\|\cdot\|_v$, which is independent of the data matrix $A$. Hence, the term $\tfrac{L_{\J(h)}}{\mu \sigma}$ is \emph{independent} of the \emph{values} of $A$ and $h$. Dependence on $A$ is entirely contained in the weight vector $w^*$, as defined in \eqref{eq:w_i}.

\item For each $S$ we have $L_{S} \leq \min\{\max_{1\leq j\leq m}|\J(A^T e_j)|,|S|\} =  \min\{\omega,|S|\} \leq \omega$, where $\omega$ is the degree of Nesterov separability of $\ff$.
\begin{enumerate}
\item By substituting the bound $L_S \leq \omega$ into \eqref{eq:DSO2} we conclude that the gradient of $\ff_\mu$ is Lipschitz with respect to the norm $\|\cdot\|_{w^*}$, with Lipschitz constant equal to $\tfrac{\omega}{\mu \sigma}$.
\item By substituting $U_i h^{(i)}$ in place of $h$ in \eqref{eq:DSO2} (we can also use Theorem~\ref{thm:DSO1}), we observe that the gradient of $\ff_\mu$ is \emph{block Lipschitz} with respect to the norm $\ve{B_i \cdot}{\cdot}^{1/2}$, with Lipschitz constant corresponding to block $i$ equal to $L_i = \frac{w_i^*}{\mu \sigma}$:
     \[\ff_\mu(x+U_ih^{(i)}) \leq \ff_\mu(x) + \ve{\nabla \ff_\mu(x)}{U_i h^{(i)}} + \frac{L_i}{2}\ve{B_i h^{(i)}}{h^{(i)}}, \quad x\in \R^N, \; h^{(i)}\in \R^{N_i}.\]
    \end{enumerate}
\item In some sense it is more natural to use the norm $\|\cdot\|_L^2$ instead of $\|\cdot\|_{w^*}^2$, where $L=(L_1,\dots,L_n)$ are the block Lipschitz constants $L_i = \tfrac{w_i^*}{\mu \sigma}$ of $\nabla \ff_\mu$. If we do this, then although the situation is very different, inequality \eqref{eq:DSO2} is similar to the one given for partially separable smooth functions in \cite[Theorem 7]{RT:PCDM}. Indeed, the weights defining the norm are in both cases equal to the block Lipschitz constants (of $f$ in \cite{RT:PCDM} and of $\ff_\mu$ here). Moreover, the leading term in \cite{RT:PCDM} is structurally comparable to the leading term $L_{\J(h)}$. Indeed, it is equal to $\max_{S} |\J(h) \cap S|$, where the maximum is taken over the block domains $S$ of the constituent functions $f_S(x)$ in the representation of $f$ revealing partial separability: $f(x) = \sum_{S} f_S(x)$.
    %On the other hand, $L_{\J(h)} = \max_{j}|\J(A^T e_j) \cap \J(h) |$, where the maximum can be seen as being taken over the block domains $S_j = \J(A^T e_j)$ of the linear functions $\ff_{S_j}(x)\eqdef e_j^T A x$ in the representation \eqref{eq:defstructured} of $\ff$: $\ff(x) = \max_{\z \in \Q} \sum_{S_j} \z_j\ff_{S_j}(x) - \hphi(\z).$
\end{enumerate}

\section{Expected Separable Overapproximation (ESO)}\label{sec:ESO}

In this section we  compute parameters $\beta$ and $w$ yielding an ESO for the pair $(\phi,\hat{S})$, where $\phi = \ff_\mu$
and $\hat{S}$ is a proper uniform sampling. If inequality \eqref{eq:ESO} holds, we will for simplicity write $(\phi,\hat{S})\sim \ESO(\beta,w)$. Note also that for all $\gamma>0$,
\[(\phi,\hat{S}) \sim \ESO(\beta \gamma,w) \qquad \Longleftrightarrow \qquad (\phi,\hat{S})\sim \ESO(\beta, \gamma w).\]

In Section~\ref{sec:ESO+Lipschitz} we establish a link between ESO for $(\phi,\hat{S})$ and Lipschitz continuity of the gradient of a certain collection of functions. This link will enable us to compute the ESO parameters $\beta, w$ for the smooth approximation of a Nesterov separable function $\ff_\mu$,  needed both for running Algorithm~\ref{alg:pcdm1} and for the complexity analysis. In Section~\ref{sec:DPS} we define certain technical objects that will be needed for further analysis. In Section~\ref{sec:ESO1} we prove a first ESO result, computing $\beta$ for any $w>0$ and \emph{any} proper uniform sampling. The formula for $\beta$ involves the norm of a certain large matrix, and hence is not directly useful as $\beta$ is needed for running the algorithm. Also, this formula does not explicitly exhibit dependence on $\omega$; that is, it is not immediately apparent that $\beta$ will be smaller for smaller $\omega$, as one would expect.
Subsequently, in Section~\ref{sec:ESO2}, we specialize this result to $\tau$-uniform samplings   and then further to the more-specialized $\tau$-nice samplings in Section~\ref{sec:ESO3}. As in the previous section, in these special cases we show that the choice $w=w^*$ leads to very simple closed-form expressions for $\beta$, allowing us to get direct insight into parallelization speedup.

\subsection{ESO and Lipschitz continuity} \label{sec:ESO+Lipschitz}

We will now study the  collection of functions $\hat{\phi}_x : \R^N\to \R$ for $x \in \R^N$ defined by

\begin{equation}\label{eq:hatphi} \hat{\phi}_x(h) \eqdef  \Exp\left[\phi(x+h_{[\hat{S}]})\right].\end{equation}

Let us first establish some basic connections between $\phi$ and $\hat{\phi}_x$.

\begin{lem} \label{eq:lem_eso} Let $\hat{S}$ be any sampling and $\phi : \R^N\to \R$ any function. Then for all $x\in \R^N$
\begin{enumerate}
\item[(i)] if $\phi$ is convex, so is $\hat{\phi}_x$,
\item[(ii)] $\hat{\phi}_x(0) = \phi(x)$,
%\item[(iii)] $\nabla \hat{f}_x(0) = \Exp [ U_{\hat{S}} \nabla_h f(x + h_{[\hat{S}]})]$,
\item[(iii)] If $\hat{S}$ is proper and uniform, and $\phi : \R^N\to \R$ is continuously differentiable, then  \[\nabla \hat{\phi}_x(0) = \frac{\Exp[|\hat{S}|]}{n} \nabla \phi(x).\]
\end{enumerate}
\end{lem}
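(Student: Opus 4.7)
The plan is to handle the three claims in order, exploiting that the sampling $\hat{S}$ takes finitely many values, so $\hat{\phi}_x(h) = \sum_{S\subseteq[n]} \Prob(\hat{S}=S)\,\phi(x+h_{[S]})$, and the map $h\mapsto h_{[S]}$ is linear for each fixed $S$.

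For (i), I would note that for each fixed $S$, the mapping $h\mapsto x + h_{[S]}$ is affine, so $h\mapsto \phi(x+h_{[S]})$ is convex as the composition of a convex function with an affine map. Then $\hat{\phi}_x$ is a nonnegative linear combination (with weights $\Prob(\hat{S}=S)$ summing to one) of convex functions, hence convex. For (ii), I would simply plug $h=0$ into \eqref{h_[S]} to get $0_{[S]}=0$ for every $S$, so $\phi(x+0_{[\hat{S}]})=\phi(x)$ surely, and take expectations.

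The substantive part is (iii). First I would rewrite $h_{[S]} = P_S h$ where $P_S \eqdef \sum_{i\in S} U_i U_i^T$ is the orthogonal projection onto the subspace spanned by blocks $i\in S$ (using the identity $h^{(i)}=U_i^T h$); note $P_S^T=P_S$. Since the sum over $S\subseteq[n]$ is finite and $\phi$ is $C^1$, I can differentiate termwise and apply the chain rule:
\[
\nabla \hat{\phi}_x(h) \;=\; \sum_{S\subseteq[n]} \Prob(\hat{S}=S)\, P_S\,\nabla\phi(x+P_S h),
\]
so at $h=0$ this gives $\nabla \hat{\phi}_x(0) = \Exp[P_{\hat{S}}]\,\nabla\phi(x)$. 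The main (and only) obstacle is identifying $\Exp[P_{\hat{S}}]$. I would swap the order of summation:
\[
\Exp[P_{\hat{S}}] \;=\; \sum_{S\subseteq[n]} \Prob(\hat{S}=S) \sum_{i\in S} U_i U_i^T \;=\; \sum_{i=1}^n \Prob(i\in\hat{S})\, U_i U_i^T.
\]
Since $\hat{S}$ is uniform, \eqref{eq:uniform_samp_basic} yields $\Prob(i\in\hat{S}) = \Exp[|\hat{S}|]/n$ for every $i$, so the previous display factors as $(\Exp[|\hat{S}|]/n)\sum_i U_i U_i^T$. Finally $\sum_{i=1}^n U_i U_i^T = UU^T = I_N$ because $U=[U_1,\dots,U_n]$ is a column partition of the identity (equivalently, by \eqref{eq:U_iU_j} the blocks $U_i U_i^T$ are orthogonal diagonal ``selectors'' that sum to the identity). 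Substituting back delivers the claimed formula $\nabla\hat{\phi}_x(0) = (\Exp[|\hat{S}|]/n)\,\nabla\phi(x)$. The ``proper'' hypothesis is not actually used in the argument; it is just the standing assumption ensuring the factor $\Exp[|\hat{S}|]/n$ is strictly positive, which matters elsewhere but is inert here.
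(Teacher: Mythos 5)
Your proof is correct and follows essentially the same route as the paper: both express $\hat{\phi}_x$ as a finite convex combination over the outcomes $S$, identify $h_{[S]}$ with the projector $\sum_{i\in S}U_iU_i^T$ applied to $h$, and reduce (iii) to the observation that $\Exp\bigl[\sum_{i\in\hat{S}}U_iU_i^T\bigr]=\tfrac{\Exp[|\hat{S}|]}{n}I_N$ via the uniformity identity \eqref{eq:uniform_samp_basic}. Your remark that properness is not actually needed for the computation is also accurate.
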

\begin{proof} Fix $x\in \R^N$. Notice that $\hat{\phi}_x(h) = \Exp[\phi(x+h_{[\hat{S}]})] = \sum_{S\subseteq [n]} \Prob(\hat{S}=S) \phi(x+U_S h)$,
where $U_S \eqdef \sum_{i\in S} U_i U_i^T$. As $\hat{\phi}_x$ is a convex combination of convex functions, it is convex, establishing (i). Property (ii) is trivial. Finally,
\[\nabla \hat{\phi}_x(0) =  \Exp \left[\nabla \left. \phi(x+h_{[\hat{S}]}) \right|_{h=0}\right] = \Exp\left[U_{\hat{S}}\nabla \phi(x)\right] = \Exp\left[U_{\hat{S}}\right]\nabla \phi(x) = \frac{\Exp[|\hat{S}|]}{n} \nabla \phi(x).\]
The last equality follows from the observation that $U_{\hat{S}}$ is an $N\times N$ binary diagonal matrix with ones in positions $(i,i)$ for $i\in \hat{S}$ only, coupled with \eqref{eq:uniform_samp_basic}.
\end{proof}

We now establish a connection between ESO and a uniform bound in $x$ on the Lipschitz constants of the gradient ``at the origin'' of the functions $\{\hat{\phi}_x,\; x \in \R^N\}$. The result will be used for the computation of the parameters of ESO for Nesterov separable functions.

\begin{thm} \label{eq:ESO-vs-Lip}Let $\hat{S}$ be proper and uniform, and $\phi:\R^N\to \R$ be continuously differentiable. Then the following statements are equivalent:
\begin{itemize}
\item[(i)] $(\phi,\hat{S}) \sim \ESO(\beta,w)$,
\item[(ii)] $\hat{\phi}_x(h) \leq \hat{\phi}_x(0) + \ve{\nabla \hat{\phi}_x(0)}{h}  + \frac{1}{2}\frac{\Exp[|\hat{S}|]\beta}{n}\|h\|_w^2, \qquad x, h\in \R^N.$
\end{itemize}
\end{thm}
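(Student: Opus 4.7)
The plan is to observe that this equivalence is essentially a direct substitution using the identities from Lemma~\ref{eq:lem_eso}, since $\hat{\phi}_x$ packages exactly the expectation appearing on the left of the ESO inequality.

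First, I would rewrite the ESO inequality (i). By definition of $\hat{\phi}_x$ in \eqref{eq:hatphi}, its left-hand side $\Exp[\phi(x+h_{[\hat{S}]})]$ is precisely $\hat{\phi}_x(h)$. On the right-hand side, I would replace $\phi(x)$ by $\hat{\phi}_x(0)$ using part (ii) of Lemma~\ref{eq:lem_eso}, and replace $\frac{\Exp[|\hat{S}|]}{n}\langle \nabla \phi(x), h\rangle$ by $\langle \nabla \hat{\phi}_x(0), h\rangle$ using part (iii) of the same lemma (which requires properness, uniformity, and $C^1$-smoothness of $\phi$, all assumed in the theorem). After this substitution the inequality (i) reads
\[
\hat{\phi}_x(h) \leq \hat{\phi}_x(0) + \langle \nabla \hat{\phi}_x(0), h\rangle + \frac{1}{2}\frac{\Exp[|\hat{S}|]\beta}{n}\|h\|_w^2,
\]
which is exactly (ii).

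For the converse direction I would run the same substitution backwards: replace $\hat{\phi}_x(h)$, $\hat{\phi}_x(0)$, and $\nabla \hat{\phi}_x(0)$ by their values from the definition and Lemma~\ref{eq:lem_eso}, and divide through (or rearrange) to recover the ESO form \eqref{eq:ESO}. Both implications are immediate once the identities are in hand, so no real obstacle arises; the only care needed is to check that the hypotheses of Lemma~\ref{eq:lem_eso}(iii) (properness and uniformity of $\hat{S}$, differentiability of $\phi$) are indeed available, which they are by assumption. Thus the proof reduces to a two-line rewriting.
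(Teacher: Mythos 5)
Your proposal is correct and follows exactly the paper's own (one-line) argument: substitute the definition \eqref{eq:hatphi} together with Lemma~\ref{eq:lem_eso}(ii)--(iii) into the ESO inequality and observe that the two statements become identical. Nothing is missing, and the hypotheses you flag (properness, uniformity, differentiability) are precisely the ones the paper relies on for Lemma~\ref{eq:lem_eso}(iii).
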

\begin{proof} We only need to  substitute \eqref{eq:hatphi} and \text{Lemma}~\ref{eq:lem_eso}(ii-iii) into inequality (ii) and compare the result with \eqref{eq:ESO}.
\end{proof}

\subsection{Dual product space}\label{sec:DPS}

Here we  construct a linear space associated with a fixed block sampling $\hat{S}$, and several derived objects which will depend on the distribution of $\hat{S}$. These objects  will be needed in the proof of Theorem~\ref{thm:beta-eso-general} and in further text.

\begin{itemize}
\item \textbf{Space.} Let $\calP \eqdef \{S \subseteq [n] \st p_S>0\}$, where $p_S \eqdef \Prob(\hat{S}=S)$. The \emph{dual product space} associated with $\hat{S}$ is defined by \[\R^{|\calP|m}\eqdef \bigotimes_{S \in \calP} \R^m.\]

\item \textbf{Norms.} Letting $\uu = \{\uu^S \in \R^m \st S \in \calP\} \in \R^{|\calP|m}$, we now define a pair of conjugate norms in $\R^{|\calP|m}$ associated with $v$ and $\hat{S}$:
\begin{equation}\label{eq:norm2'}
\|\uu\|_{\hat{v}} \eqdef \Big(\sum_{S \in \calP} p_S\|\uu^S\|_v^2\Big)^{1/2}, \qquad \|\uu\|^*_{\hat{v}} \eqdef \max_{\|\uu'\|_{\hat{v}} \leq 1} \ve{\uu'}{\uu} = \Big(\sum_{S \in \calP} p_S^{-1}(\|\uu^S\|_v^*)^2\Big)^{1/2}.
\end{equation}
The notation $\hat{v}$ indicates dependence on both $v$ and $\hat{S}$.

\item \textbf{Matrices.} For each $S \in \cal P$ let \begin{equation}\label{eq:jjd8d88d}\hat{A}^S  \eqdef p_S A \sum_{i \in S}  U_i U_i^T \in \R^{m \times N}.\end{equation}
We now define matrix $\hat{A} \in \R^{|\calP|m \times N}$, obtained by stacking the matrices $\hat{A}^S$, $S \in \calP$, on top of each other (in the same order the vectors $u^S$, $S \in \cal P$ are stacked to form $u\in \R^{|{\cal P}|m}$). The ``hat'' notation indicates that $\hat{A}$ depends on both $A$ and $\hat{S}$. Note that $\hat{A}$ maps vectors from the primal basic space $\E_1 \equiv \R^N$ to vectors in the dual product space $\E_2 \equiv \R^{|\calP|m}$. We use $\|\cdot\|_1 \equiv \|\cdot\|_w$ as the norm in $\E_1$ and $\|\cdot\|_2 \equiv \|\cdot\|_{\hat{v}}$ as the norm in $\E_2$. It will be useful to note that for $h\in \R^N$, and $S \in \cal P$,
\begin{equation}\label{eq:09sddd}
(\hat{A}h)^S = \hat{A}^S h.
\end{equation}
\end{itemize}

\subsection{Generic ESO for proper uniform samplings}\label{sec:ESO1}

Our first ESO result covers all (proper) uniform samplings and is valid for any $w>0$. We give three formulas with three different values of $\beta$. While we could have instead given a single formula with $\beta$ being the minimum of the three values, this will be useful.

\begin{thm}[Generic ESO] \label{thm:beta-eso-general} If $\hat{S}$ is proper and uniform, then
\begin{equation}\label{eq:klkl--09}
 i) \quad (\ff_\mu,\hat{S}) \sim \ESO\left(\frac{n\|\hat{A}\|_{w,\hat{v}}^2}{\mu\sigma\Exp[|\hat{S}|]}, w\right), \qquad ii) \quad (\ff_\mu,\hat{S}) \sim \ESO\left(\frac{n\Exp\left[\|A^{(\hat{S})}\|_{w,v}^2\right]}{\mu\sigma\Exp[|\hat{S}|]}, w\right),
\end{equation}
\begin{equation}\label{eq:klkl--10}
iii) \quad (\ff_\mu,\hat{S}) \sim \ESO\left(\frac{\max_{S \in {\cal P}}\|A^{(S)}\|_{w,v}^2}{\mu\sigma}, w\right).
\end{equation}
\end{thm}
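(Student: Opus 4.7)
The plan is to invoke Theorem~\ref{eq:ESO-vs-Lip}, which reduces an ESO $(\ff_\mu,\hat S)\sim\ESO(\beta,w)$ to the quadratic upper bound $\hat\phi_x(h)\le\hat\phi_x(0)+\ve{\nabla\hat\phi_x(0)}{h}+\tfrac12\tfrac{\Exp[|\hat S|]\beta}{n}\|h\|_w^2$ with $\hat\phi_x(h)=\Exp[\ff_\mu(x+h_{[\hat S]})]$. I will obtain (i) and (ii) by recognizing $\hat\phi_x$ itself as a Nesterov smoothing over the dual product space of Section~\ref{sec:DPS}, and (iii) by a pointwise application of Theorem~\ref{thm:DSO1} combined with the sampling identities \eqref{eq:jss8s8s}.

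For (i), since the maxes over $z^S\in Q$ are independent across $S\in\calP$, I would write
\[\hat\phi_x(h)=\sum_{S\in\calP}p_S\max_{z^S\in Q}\{\ve{A(x+h_{[S]})}{z^S}-\hphi(z^S)-\mu d(z^S)\}=\max_{z\in\hat Q}\{\ve{\hat A h}{z}-\hat g_x(z)-\mu\hat d(z)\},\]
where $\hat Q=\prod_{S\in\calP}Q$, $\hat d(z)=\sum_S p_S d(z^S)$ and $\hat g_x(z)=\sum_S p_S[\hphi(z^S)-\ve{Ax}{z^S}]$, with the identity $p_S\ve{Ah_{[S]}}{z^S}=\ve{\hat A^Sh}{z^S}$ from \eqref{eq:jjd8d88d} assembling $\ve{\hat Ah}{z}$ in the product inner product on $\R^{|\calP|m}$. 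A short convex-combination argument shows $\hat d$ is $\sigma$-strongly convex with respect to $\|\cdot\|_{\hat v}$. Proposition~\ref{prop:Nesterov} applied with $\bar A=\hat A$, $\bar d=\hat d$, $\bar\sigma=\sigma$, $\bar\mu=\mu$ and norms $\|\cdot\|_w,\|\cdot\|_{\hat v}$ then gives a quadratic upper bound on $\hat\phi_x$ at $0$ with coefficient $\|\hat A\|_{w,\hat v}^2/(\mu\sigma)$, which by Theorem~\ref{eq:ESO-vs-Lip} translates into $\beta=n\|\hat A\|_{w,\hat v}^2/(\mu\sigma\Exp[|\hat S|])$.

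For (ii), I would unwind $\|\hat A\|_{w,\hat v}^2$ from the definitions: using \eqref{eq:norm2'}, \eqref{eq:jjd8d88d} and \eqref{eq:09sddd},
\[\|\hat A\|_{w,\hat v}^2=\max_{\|h\|_w=1}\sum_{S\in\calP}p_S(\|Ah_{[S]}\|_v^*)^2=\max_{\|h\|_w=1}\Exp[(\|Ah_{[\hat S]}\|_v^*)^2],\]
after which $Ah_{[S]}=A^{(S)}h^{(S)}$ and $\|h^{(S)}\|_w=\|h_{[S]}\|_w\le\|h\|_w$ (from \eqref{eq:js8s4s3s} and \eqref{eq:norm_block}) bound each summand by $p_S\|A^{(S)}\|_{w,v}^2$, yielding $\|\hat A\|_{w,\hat v}^2\le\Exp[\|A^{(\hat S)}\|_{w,v}^2]$; substitution into (i) proves (ii). For (iii), fixing $h$ and applying Theorem~\ref{thm:DSO1} to $h_{[S]}$ gives
\[\ff_\mu(x+h_{[S]})\le\ff_\mu(x)+\ve{\nabla\ff_\mu(x)}{h_{[S]}}+\frac{\|A^{(\calJ(h_{[S]}))}\|_{w,v}^2}{2\mu\sigma}\|h_{[S]}\|_w^2,\]
and since $\calJ(h_{[S]})\subseteq S$ and $\|A^{(\cdot)}\|_{w,v}$ is monotone under set inclusion (as shown just before Theorem~\ref{thm:DSO1}), the leading coefficient is uniformly at most $\max_{S\in\calP}\|A^{(S)}\|_{w,v}^2/(\mu\sigma)$; taking expectation over $\hat S$ and applying the two identities \eqref{eq:jss8s8s} puts the inequality into the ESO form \eqref{eq:ESO} with $\beta$ as claimed.

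The main obstacle I expect is the bookkeeping in (i): recognizing the product inner product through which $\hat A$ acts, verifying that $\hat d$ is $\sigma$-strongly convex w.r.t.\ $\|\cdot\|_{\hat v}$ with exactly the constant $\sigma$ (so that the $\mu\sigma$ denominator from Proposition~\ref{prop:Nesterov} matches the one appearing in (i)), and correctly commuting the $\max$ over $\hat Q$ with the sum $\sum_{S\in\calP}$. Once the product-space Nesterov smoothing is set up, (ii) and (iii) reduce to routine operator-norm and expectation estimates.
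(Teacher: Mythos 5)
Your proof is correct. Parts (i) and (iii) follow the paper's own argument essentially verbatim: (i) via the product-space Nesterov smoothing of $h \mapsto \Exp[\ff_\mu(x+h_{[\hat S]})]$ over $\bar Q = Q^{|\calP|}$ combined with Theorem~\ref{eq:ESO-vs-Lip}, and (iii) via Theorem~\ref{thm:DSO1} applied to $h_{[\hat S]}$, the monotonicity of $S \mapsto \|A^{(S)}\|_{w,v}$ under inclusion, and the identities \eqref{eq:jss8s8s}. The one place you genuinely deviate is (ii): the paper derives it directly from Theorem~\ref{thm:DSO1} as well, bounding $\gamma(h)=\Exp\left[\|A^{(\hat S)}\|_{w,v}^2\|h_{[\hat S]}\|_w^2\right]$ by $\Exp\left[\|A^{(\hat S)}\|_{w,v}^2\right]\|h\|_w^2$, whereas you deduce (ii) from (i) by establishing $\|\hat A\|_{w,\hat v}^2=\max_{\|h\|_w\le 1}\Exp\left[(\|Ah_{[\hat S]}\|_v^*)^2\right]\le\Exp\left[\|A^{(\hat S)}\|_{w,v}^2\right]$ --- which is precisely the identity and bound the paper proves only in the theorem \emph{following} this one. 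Your route has the small advantage of making immediately explicit that ESO (i) always dominates ESO (ii); the paper's route avoids any computation with the product-space norm for parts (ii)--(iii) and treats them uniformly. Both yield the same constants, so this is a reorganization rather than a new idea, and no step in your argument fails.
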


\begin{proof} We will first establish (i). Consider the function
\begin{eqnarray}
\notag \bar{\ff}(h) & \eqdef & \Exp[\ff_\mu(x+h_{[\hat{S}]})]\\
%\notag &= & \sum_{S \in \calP}  p_S \ff_\mu(x+ h_{[S]})\\
\notag &\overset{\eqref{eq:f_mu}}{=}& \sum_{S \in \calP}  p_S \max_{\uu^S \in \Q} \left\{\ve{A (x+h_{[S]})}{\uu^S}  -\hphi(\uu^S) - \mu d(\uu^S)\right\}\\
%\notag &\overset{\eqref{eq:f_mu}+\eqref{h_[S]}+\eqref{eq:A_ji}}{=}& \sum_{S \in \calP}  p_S \max_{\uu^S \in \Q} \left\{\sum_{j=1}^m\sum_{i\in S}  A_{ji} h^{(i)} u_j^S + \langle Ax, \uu^S \rangle -\hphi(\uu^S) - \mu d(\uu^S)\right\}\\
\label{eq:mnbvvcstt65}&= & \max_{\{\uu^S \in \Q \st S\in \calP\}} \sum_{S \in \calP}  p_S  \left\{\ve{Ah_{[S]}}{\uu^S} + \langle Ax, \uu^S \rangle -\hphi(\uu^S) - \mu d(\uu^S)\right\}.
\end{eqnarray}
Let $\uu \in \bar{\Q} \eqdef \Q^{|\calP|}\subseteq \R^{|\calP| m}$ and note that
%\begin{equation}\label{eq:90js4sjjjs8}
\begin{equation}\label{eq:098sjsdjd009}\sum_{S \in \calP} p_S \ve{A h_{[S]}}{\uu^S} \overset{\eqref{eq:jjd8d88d}+\eqref{h_[S]}}{=}   \sum_{S \in \calP} \ve{\hat{A}^S h}{\uu^S}    \overset{\eqref{eq:09sddd}}{ =} \ve{\hat{A}h}{\uu} .\end{equation}
%\end{equation}
Furthermore, define  $\bar{\hphi}: \bar{\Q} \to \R$ by $\bar{\hphi}(\uu) \eqdef  \sum_{S \in \calP} p_S(\hphi(\uu^S)-\ve{Ax}{\uu^S})$, and $\bar{d}: \bar{\Q}  \to \R$ by $\bar{d}(\uu) \eqdef \sum_{S \in \calP} p_Sd(\uu^S)$. Plugging all of the above into \eqref{eq:mnbvvcstt65} gives
\begin{equation}\label{eq:99s9s99s}
\bar{\ff}(h) =  \max_{\uu  \in \bar{\Q}} \left\{\ve{\hat{A}h}{\uu}  -\bar{\hphi}(\uu) - \mu \bar{d}(\uu)\right\}.
\end{equation}
It is easy to see that $\bar{d}$ is $\sigma$-strongly convex on $\bar{\Q}$ with respect to the norm  $\|\cdot\|_{\hat{v}}$ defined in \eqref{eq:norm2'}.
Indeed, for any $\uu_1,\uu_2\in \bar{Q}$ and $t\in (0,1)$,
\begin{eqnarray*}
 \bar{d}(t \uu_1+(1-t)\uu_2)
& = & \sum_{S \in \calP} p_S d(t\uu_1^S+(1-t)\uu_2^S) \\
& \leq & \sum_{S \in \calP} p_S
\big( t d(\uu_1^S)+(1-t)d(\uu_2^S) - \frac{\sigma}{2}t(1-t) \norm{\uu_1^S-\uu_2^S}_v^2 \big) \\
&\overset{\eqref{eq:norm2'}}{=} & t\bar{d}(\uu_1) + (1-t)\bar{d}(\uu_2) - \frac{\sigma}{2}t(1-t) \norm{\uu_1-\uu_2}_{\hat{v}}^2.
\end{eqnarray*}
Due to  $\bar{\ff}$ taking on the form \eqref{eq:99s9s99s}, Proposition~\ref{prop:Nesterov}  (used with $\E_1=\R^N$, $\E_2 = \R^{|{\cal P}| m}$,  $\bar{A}=\hat{A}$, $\|\cdot\|_1 =\|\cdot\|_{w}$, $\|\cdot\|_2 = \|\cdot\|_{\hat{v}}$ and $\bar{\sigma}= \sigma$) says that the gradient of $\bar{\ff}$ is Lipschitz with constant $\tfrac{1}{\mu\sigma}\|\hat{A}\|_{w,\hat{v}}^2$. We now only need to applying Theorem~\ref{eq:ESO-vs-Lip},  establishing (i). 

Let us now show (ii)+(iii). Fix $h\in \R^N$, apply Theorem~\ref{thm:DSO1} with $h\leftarrow h_{[\hat{S}]}$ and take expectations. Using identities \eqref{eq:jss8s8s}, we get \[\Exp[\ff_\mu(x+h_{[\hat{S}]})] \leq \ff(x) + \tfrac{\Exp[|\hat{S}|]}{n}\left( \ve{\nabla \ff_\mu(x)}{h} + \tfrac{n\gamma(h)}{2 \mu \sigma \Exp [|\hat{S}|]}\right), \quad \gamma(h) = \Exp\left[\|A^{(\hat{S})}\|_{w,v}^2 \|h_{[\hat{S}]}\|_w^2\right].\] Since $\|h_{[\hat{S}]}\|_w^2\leq \|h\|_w^2$, we have $\gamma(h)\leq \Exp \left[\|A^{(\hat{S})}\|_{w,v}^2\right]\|h\|_w^2$, which establishes (ii). Since $\|A^{(\hat{S})}\|_{w,v}^2 \leq \max_{S \in {\cal P}}\|A^{(S)}\|_{w,v}^2$, using \eqref{eq:jss8s8s} we obtain $\gamma(h)\leq \tfrac{\Exp[|\hat{S}|] \max_{S \in {\cal P}}\|A^{(S)}\|_{w,v}^2}{n}\|h\|_w^2$, establishing (iii).
\qedhere
\end{proof}

We now give an insightful characterization of $\|\hat{A}\|_{w,\hat{v}}$.

\begin{thm} If $\hat{S}$ is proper and uniform, then
\begin{equation}\label{eq:3434334mm}\|\hat{A}\|_{w,\hat{v}}^2 = \max_{h \in \R^{N}, \; \|h\|_w \leq 1} \Exp \left[\left(\|A h_{[\hat{S}]}\|_v^*\right)^2\right].\end{equation}
Moreover,
\[\left(\frac{\Exp[|\hat{S}|]}{n}\right)^2\|A\|_{w,v}^2 \; \leq \; \|\hat{A}\|_{w,\hat{v}}^2 \; \leq \; \min\left\{ \Exp \left[\|A^{(S)}\|_{w,v}^2\right] \;,\; \frac{\Exp [|\hat{S}|]}{n}\|A\|_{w,v}^2 \;,\; \max_{S \in {\cal P}}\|A^{(S)}\|^2_{w,v} \right\}.\]
\end{thm}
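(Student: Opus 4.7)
The plan is to first establish the identity \eqref{eq:3434334mm} by unwrapping the definitions of the dual product norm and of $\hat{A}$, and then derive the lower bound from Jensen's inequality and the three upper bounds from pointwise bounds on $(\|Ah_{[\hat{S}]}\|_v^*)^2$ inside the expectation.

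For the identity I would start from \eqref{eq:norm2'}, which gives $(\|\hat{A}h\|_{\hat{v}}^*)^2 = \sum_{S\in\calP} p_S^{-1}(\|(\hat{A}h)^S\|_v^*)^2$. Combining \eqref{eq:09sddd} with the definition \eqref{eq:jjd8d88d} yields $(\hat{A}h)^S = \hat{A}^S h = p_S A h_{[S]}$, so that one factor of $p_S$ coming from squaring cancels $p_S^{-1}$, leaving
\[
(\|\hat{A}h\|_{\hat{v}}^*)^2 \;=\; \sum_{S\in\calP} p_S (\|Ah_{[S]}\|_v^*)^2 \;=\; \Exp\bigl[(\|Ah_{[\hat{S}]}\|_v^*)^2\bigr].
\]
Taking the supremum over $\|h\|_w\leq 1$ and invoking \eqref{eq:12norms} then produces \eqref{eq:3434334mm}.

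For the three upper bounds I would fix $h$ with $\|h\|_w\leq 1$ and bound $(\|Ah_{[\hat{S}]}\|_v^*)^2$ pointwise in the sample. By \eqref{eq:A^Sh^S}, \eqref{eq:12norms} and \eqref{eq:js8s4s3s},
\[
\|Ah_{[S]}\|_v^* \;=\; \|A^{(S)} h^{(S)}\|_v^* \;\leq\; \|A^{(S)}\|_{w,v}\,\|h_{[S]}\|_w.
\]
Using $\|h_{[S]}\|_w \leq \|h\|_w \leq 1$, squaring and taking expectations already yields the first upper bound $\Exp[\|A^{(\hat{S})}\|_{w,v}^2]$; replacing $\|A^{(\hat{S})}\|_{w,v}^2$ by the deterministic ceiling $\max_{S\in\calP}\|A^{(S)}\|_{w,v}^2$ immediately gives the third. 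For the middle bound I would instead not pull $\|h_{[S]}\|_w$ out of the expectation and instead use $\|A^{(S)}\|_{w,v}\leq \|A\|_{w,v}$, then apply the second identity of \eqref{eq:jss8s8s}, namely $\Exp[\|h_{[\hat{S}]}\|_w^2] = \tfrac{\Exp[|\hat{S}|]}{n}\|h\|_w^2$.

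For the lower bound I would use Jensen's inequality twice. Since $\|\cdot\|_v^*$ is convex and $A$ is linear, $\Exp[\|Ah_{[\hat{S}]}\|_v^*] \geq \|A\,\Exp[h_{[\hat{S}]}]\|_v^* = \tfrac{\Exp[|\hat{S}|]}{n}\|Ah\|_v^*$, where $\Exp[h_{[\hat{S}]}] = \tfrac{\Exp[|\hat{S}|]}{n}\,h$ follows either by applying the first identity in \eqref{eq:jss8s8s} to $x=A^T z$ for arbitrary $z$, or directly from $\Prob(i\in\hat{S}) = \Exp[|\hat{S}|]/n$. A second Jensen on $t\mapsto t^2$ then yields $\Exp[(\|Ah_{[\hat{S}]}\|_v^*)^2] \geq (\Exp[|\hat{S}|]/n)^2(\|Ah\|_v^*)^2$, and taking the supremum over $\|h\|_w \leq 1$ produces the claimed lower bound $(\Exp[|\hat{S}|]/n)^2\|A\|_{w,v}^2$.

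The main obstacle is really the bookkeeping in the identity: one must track carefully how the factor $p_S$ built into the definition $\hat{A}^S = p_S A\sum_{i\in S}U_iU_i^T$ interacts with the $p_S^{-1}$ from the dual product norm so that exactly one power of $p_S$ survives, turning $\sum_{S\in\calP}$ into an expectation over $\hat{S}$. Once this cancellation is in hand, each of the four inequalities is a one-line consequence of either the operator-norm inequality combined with \eqref{eq:jss8s8s} or Jensen's inequality.
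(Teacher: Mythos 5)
Your proposal is correct and follows essentially the same route as the paper: the identity comes from unwinding the dual product norm (the paper carries out the maximization over $u$ explicitly rather than quoting the closed form of $\|\cdot\|_{\hat{v}}^*$ from \eqref{eq:norm2'}, but the $p_S^{-1}\cdot p_S^2 = p_S$ cancellation is the same), and the three upper bounds are obtained by the same pointwise operator-norm estimates combined with \eqref{eq:jss8s8s}. The only divergence is the lower bound, where the paper restricts the dual variables $u^S$ to a common $z$ with $\|z\|_v\leq 1$ while you apply Jensen's inequality twice via $\Exp[h_{[\hat{S}]}]=\tfrac{\Exp[|\hat{S}|]}{n}h$; these are dual formulations of the same estimate and both are valid.
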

\begin{proof}
Identity \eqref{eq:3434334mm} follows from
\begin{eqnarray}
\|\hat{A}\|_{w,\hat{v}}
&\overset{\eqref{eq:12norms}}{=} & \max\{\ve{\hat{A}h}{u} \st \|h\|_w \leq 1, \quad \|u\|_{\hat{v}}\leq 1\}\notag \\
&\overset{\eqref{eq:098sjsdjd009}+\eqref{eq:norm2'}}{=}& \max\left\{\sum_{S \in {\cal P}} p_S \ve{Ah_{[S]}}{u^S} \st \|h\|_w \leq 1, \quad \sum_{S \in {\cal P}} p_S\|u^S\|_{v}^2\leq 1 \right\}. \label{eq:0909ooop}\\
&=& \max_{\|h\|_w \leq 1} \max_{u} \left\{ \sum_{S \in {\cal P}}  p_S \|u^S\|_v \ve{Ah_{[S]}}{\tfrac{u^S}{\|u^S\|_v}} \st \sum_{S \in {\cal P}} p_S\|u^S\|_{v}^2\leq 1 \right\}\notag \\
&=& \max_{\|h\|_w \leq 1} \max_{\beta} \left\{ \sum_{S \in {\cal P}}  p_S \beta_S \|Ah_{[S]}\|_{v}^* \st \sum_{S \in {\cal P}} p_S \beta_S^2 \leq 1, \; \beta_S \geq 0\right\}\notag \\
&=& \max_{\|h\|_w \leq 1} \left( \sum_{S \in {\cal P}}  p_S \left(\|Ah_{[S]}\|_{v}^*\right)^2 \right)^{1/2} \quad = \quad \max_{\|h\|_w\leq 1} \left(\Exp \left[\left(\|A h_{[\hat{S}]}\|_{v}^*\right)^2\right]\right)^{1/2}.\notag
\end{eqnarray}
As a consequence, we now have
\begin{eqnarray*}\|\hat{A}\|_{w,\hat{v}}^2 &\overset{\eqref{eq:3434334mm}}{\leq}&  \Exp \left[ \max_{h \in \R^{N}, \; \|h\|_w \leq 1}\left(\|A h_{[\hat{S}]}\|_v^*\right)^2\right] \\
&\overset{\eqref{eq:A^Sh^S}}{=} & \Exp \left[ \max_{h \in \R^{N}, \; \|h\|_w \leq 1}\left(\|A^{(\hat{S})} h^{(\hat{S})}\|_v^*\right)^2\right] \quad \overset{\eqref{eq:12norms}}{ =} \quad \Exp \left[\|A^{(\hat{S})}\|_{w,v}^2\right] \quad \leq \quad \max_{S \in {\cal P}} \|A^{(S)}\|_{w,v}^2,\end{eqnarray*}
and
\begin{eqnarray*}
\|\hat{A}\|_{w,\hat{v}}^2 &\overset{\eqref{eq:3434334mm}}{\leq} & \max_{h \in \R^N,\; \|h\|_w \leq 1}\Exp\left[\|A\|_{w,v}^2 \|h_{[\hat{S}]}\|_w^2\right] 
\quad \overset{\eqref{eq:jss8s8s}}{=}\quad  \frac{\Exp[|\hat{S}|]}{n}\|A\|_{w,v}^2.
\end{eqnarray*}
Finally, restricting the vectors $\hat{u}^S$, $S \in {\cal P}$, to be equal (to $z$), we obtain the estimate
\begin{eqnarray*}
\|\hat{A}\|_{w,\hat{v}}
&\overset{\eqref{eq:0909ooop}}{\geq } &  \max\{\Exp[\ve{A h_{[\hat{S}]}}{z}] \st \|h\|_w \leq 1, \quad  \|z\|_{v} \leq 1\}\\
&\overset{\eqref{eq:jss8s8s}}{=} & \max\{ \tfrac{\Exp[|\hat{S}|]}{n} \ve{A h}{z} \st \|h\|_w \leq 1, \quad  \|z\|_{v} \leq 1\} \quad \overset{\eqref{eq:12norms}}{=} \quad \frac{\Exp[|\hat{S}|]}{n}\|A\|_{w,v},
\end{eqnarray*}
giving the lower bound.
%Finally, let $\bar{h}\in \R^N$ and $\bar{u} = \{\bar{u}^S \in \R^m :  S \in {\cal P}\} \in \R^{|{\cal P}|m}$, be the optimal vectors in \eqref{eq:0909ooop}. Then since $\|\bar{h}^{(S)}\|_w \leq \|\bar{h}\|_w \leq 1$, we can write
%\begin{eqnarray*}
%\|\hat{A}\|_{w,\hat{v}} & \overset{\eqref{eq:0909ooop}}{=} & \sum_{S \in {\cal P}} p_S \ve{A \bar{h}_{[S]}}{\bar{u}^S} \quad \overset{\eqref{eq:A^Sh^S}}{=} \quad  \sum_{S \in {\cal P}} p_S \ve{A^{(S)} \bar{h}^{(S)}}{\bar{u}^S}\\
%&=& \sum_{S \in {\cal P}} p_S \|\bar{u}^S\|_v \ve{A^{(S)} \bar{h}^{(S)}}{\tfrac{\bar{u}^S}{\|\bar{u}^S\|_v}}\\
%&\leq& \sum_{S \in {\cal P}} p_S  \|\bar{u}^S\|_v  \max\{\ve{A^{(S)}h^{(S)}}{z} \st \|h^{(S)}\|_w \leq 1, \quad \|z\|_{v}\leq 1\}\\
%&=&  \sum_{S \in {\cal P}} p_S  \|\bar{u}^S\|_v  \|A^{(S)}\|_{w,v} \quad \leq \quad \max_{S \in {\cal P}} \|A^{(S)}\|_{w,v}.
%\end{eqnarray*}
%In the last step we have used the fact that  $\sum_S p_S =1$ and $\sum_S p_S \|\bar{u}^S\|_v^2 \leq 1$ imply that $\sum_S p_S \|\bar{u}^S\|_v \leq 1$, which follows from the Cauchy-Schwarz inequality.
\end{proof}

Observe that as a consequence of this result,  ESO (i) in Theorem~\ref{thm:beta-eso-general} is always preferable to ESO (ii). In the following section we will utilize ESO (i) for $\tau$-nice samplings and ESO (iii) for the more  general $\tau$-uniform samplings. In particular, we give easily computable upper bounds on $\beta$ in the special case when $w=w^*$.

\subsection{ESO for data-dependent weights $w$}

Let us first establish ESO for $\tau$-uniform samplings and $w=w^*$. 

\begin{thm}[ESO for $\tau$-uniform sampling] \label{thm:beta-dso}
If $\ff$ is Nesterov separable  of degree $\omega$, $\hat{S}$ is a $\tau$-uniform sampling and $w^*$ is chosen as in \eqref{eq:w_i}, then
\[
(\ff_\mu,\hat{S}) \sim \ESO\left(\beta,w^*\right),
\]
where $\beta = \tfrac{\beta'_1}{\mu \sigma}$ and $\beta'_1 \eqdef \min\{\omega, \tau\}$.
\end{thm}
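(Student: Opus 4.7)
The plan is to apply part (iii) of the generic ESO bound in Theorem~\ref{thm:beta-eso-general}, which gives
\[ (\ff_\mu,\hat{S}) \sim \ESO\!\left(\tfrac{1}{\mu\sigma}\max_{S \in {\cal P}}\|A^{(S)}\|_{w^*,v}^2,\; w^*\right), \]
and then bound the matrix norm $\|A^{(S)}\|_{w^*,v}^2$ uniformly over $S \in \mathcal{P}$ by $\min\{\omega,\tau\}$, which is exactly the definition of $\beta'_1$. This reduces the whole statement to a purely combinatorial estimate on the sparsity pattern of $A$ restricted to index sets of size $\tau$.

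First I would record the key feature of a $\tau$-uniform sampling: every $S \in \mathcal{P}$ has cardinality exactly $\tau$, since $\Prob(|\hat S|=\tau)=1$. Next, for any such $S$, Lemma~\ref{lem:090909} gives the bound
\[ \|A^{(S)}\|_{w^*,v}^2 \;\leq\; \max_{1 \leq j \leq m} |\J(A^T e_j) \cap S|. \]
For each row index $j$ the cardinality $|\J(A^T e_j) \cap S|$ is obviously bounded both by $|\J(A^T e_j)|$ and by $|S|$. By Nesterov separability of degree $\omega$ (Definition~\ref{defn:maxpartsep}) the first is at most $\omega$, and by $\tau$-uniformity the second equals $\tau$. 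Hence
\[ \max_{1 \leq j \leq m} |\J(A^T e_j) \cap S| \;\leq\; \min\{\omega,\tau\} \;=\; \beta'_1, \]
uniformly in $S \in \mathcal{P}$.

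Combining the two bounds we get $\max_{S \in {\cal P}}\|A^{(S)}\|_{w^*,v}^2 \leq \beta'_1$, and plugging this into Theorem~\ref{thm:beta-eso-general}(iii) yields $(\ff_\mu,\hat{S}) \sim \ESO(\beta'_1/(\mu\sigma), w^*)$, as required.

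I do not foresee any real obstacle: all the heavy lifting (the reduction of ESO parameters to subspace matrix norms, and the sparsity-based bound on these norms for the data-dependent weights $w^*$) has already been carried out in Theorem~\ref{thm:beta-eso-general} and Lemma~\ref{lem:090909}. The only thing to be careful about is to use version (iii) of the generic ESO (rather than (i) or (ii)), since for a $\tau$-uniform sampling the maximum over $S \in \mathcal{P}$ coincides with the essential supremum of $\|A^{(\hat S)}\|_{w^*,v}^2$ and gives the cleanest bound; using (ii) would give the same answer after writing the expectation over the constant random variable $|\hat S|=\tau$, but (iii) is the most direct.
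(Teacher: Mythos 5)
Your proposal is correct and follows exactly the paper's own argument: invoke part (iii) of Theorem~\ref{thm:beta-eso-general}, bound $\|A^{(S)}\|_{w^*,v}^2$ via Lemma~\ref{lem:090909} by $\max_{1\leq j\leq m}|\J(A^T e_j)\cap S|$, and then use $|\J(A^T e_j)|\leq\omega$ together with $|S|=\tau$ to conclude $\leq\min\{\omega,\tau\}$. No gaps; your remark about preferring (iii) over (i) or (ii) matches the paper's choice as well.
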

\begin{proof}
This follows from  ESO (iii) in Theorem~\ref{thm:beta-eso-general} in  by using the bound $\|A^{(S)}\|_{w,v}^2 \leq \max_{j}|\J(A^Te_j) \cap S| \leq \min\{\omega,\tau\}$, $S \in {\cal P}$, which follows from Lemma~\ref{lem:090909} and the fact that $|\J(A^T e_j)|\leq \omega$ for all $j$ and $|S|=\tau$ for all $S \in {\cal P}$.
\end{proof}

%\subsection{ESO for $\tau$-nice samplings}\label{sec:ESO3}

Before we establish an ESO result for $\tau$-nice samplings, the main result of this section, we need a technical lemma with a number of useful relations. Identities \eqref{eq:0djd7diddd} and \eqref{eq:jsjs898sn} and estimate \eqref{eq:98ss54sgs} are new, the other two identities are from \cite[Section 3]{RT:PCDM}.  For $S\subseteq [n] = \{1,2,\dots,n\}$ define
\begin{equation}\label{eq:chi}
\chi_{(i \in S)} = \begin{cases} 1 & \quad \text{if } i \in S,\\
0 & \quad \text{otherwise.}
\end{cases}
\end{equation}

\begin{lem}\label{lem:14} Let $\hat{S}$ be any sampling, $J_1,J_2$ be nonempty subsets of $[n]$ and $\{\theta_{ij} : \; i\in [n], \; j \in [n]\}$ be any real constants.  Then
\begin{eqnarray}
\label{eq:0djd7diddd}\Exp\left[\sum_{i \in J_1 \cap \hat{S}} \sum_{j \in J_2 \cap \hat{S}} \theta_{ij}\right] &=& \sum_{i \in J_1}\sum_{j \in J_2} \Prob(\{i,j\}\subseteq \hat{S})\theta_{ij}.
\end{eqnarray}
If $\hat{S}$ is $\tau$-nice, then for any $\emptyset \neq J \subseteq [n]$,
 $\theta \in \R^n$ and $k \in [n]$, the following identities hold
\begin{eqnarray}
\label{eq:condexpJinterS}
\Exp\left[ \sum_{i \in J \cap \hat{S}} \theta_i \; | \; \abs{J \cap \hat{S}} = k\right] &=& \frac{k}{\abs{J}}\sum_{i \in J} \theta_i,\\
\label{eq:JinterS2}
 \Exp\left[ \abs{J \cap \hat{S}}^2 \right] &=& \frac{\abs{J} \tau}{n} \Big( 1+\frac{(\abs{J}-1)(\tau-1)}{\max(1, n-1)}\Big) ,\\
\label{eq:jsjs898sn}
\max_{1\leq i \leq n}\Exp[|J \cap \hat{S}| \times \chi_{(i \in \hat{S})}] &=& \frac{\tau}{n} \left(1 + \frac{(|J|-1)(\tau-1)}{\max(1,n-1)}\right).
\end{eqnarray}
Moreover, if $J_1,\dots,J_m$ are subsets of $[n]$ of identical cardinality ($|J_j| =\omega$ for all $j$), then
\begin{equation}\label{eq:98ss54sgs}
\max_{1\leq i \leq n}\Exp[\max_{1\leq j\leq m}|J_j \cap \hat{S}| \times \chi_{(i \in \hat{S})}] \quad \leq \quad \frac{\tau}{n} \sum_{k=1}^{k_{\text{max}}} \min \left\{1 \; , \; \frac{mn}{\tau} \sum_{l=\max\{k,k_{\text{min}}\}}^{k_{\text{max}}} c_l \pi_l \right\},
\end{equation}
where $k_{\text{min}} = \max\{1,\tau-(n-\omega)\}$, $k_{\text{max}} = \min\{\tau,\omega\}$, $c_l = \max\left\{\frac{l}{\omega},\frac{\tau-l}{n-\omega}\right\}\leq 1$ if $\omega<n$ and $c_l = \tfrac{l}{\omega}\leq 1$ otherwise, and
\[\pi_l \eqdef \Prob(|J_j \cap \hat{S} | = l) = \frac{\binom{\omega}{k} \binom{n-\omega}{\tau-k}}{\binom{n}{\tau}}, \qquad k_{\text{min}} \leq l \leq k_{\text{max}}.\]

\end{lem}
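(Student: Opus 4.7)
My plan is to treat the five assertions in order of increasing difficulty, using identity \eqref{eq:0djd7diddd} as the workhorse for the three $\tau$-nice identities \eqref{eq:condexpJinterS}--\eqref{eq:jsjs898sn}, and reserving the bulk of the work for the genuinely new estimate \eqref{eq:98ss54sgs}. For \eqref{eq:0djd7diddd}, I would rewrite the inner double sum using indicators as $\sum_{i\in J_1}\sum_{j\in J_2}\chi_{(i\in\hat{S})}\chi_{(j\in\hat{S})}\theta_{ij}$, observe that $\chi_{(i\in\hat{S})}\chi_{(j\in\hat{S})}=\chi_{(\{i,j\}\subseteq\hat{S})}$ (with the convention that $\{i,i\}=\{i\}$), and take expectations via linearity; nothing is used about $\hat{S}$ beyond being a random subset.

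For \eqref{eq:condexpJinterS}, I would exploit the defining symmetry of the $\tau$-nice sampling: conditioned on $|J\cap\hat{S}|=k$, the restriction $\hat{S}\cap J$ is a uniformly random $k$-subset of $J$, so $\Prob(i\in\hat{S}\mid |J\cap\hat{S}|=k)=k/|J|$ for every $i\in J$, and linearity gives the identity. For \eqref{eq:JinterS2}, I would apply \eqref{eq:0djd7diddd} with $J_1=J_2=J$ and $\theta_{ij}\equiv 1$, split into the diagonal contribution $|J|\cdot\tau/n$ and the off-diagonal contribution $|J|(|J|-1)\cdot\tau(\tau-1)/(n(n-1))$ (valid for $n\geq 2$, with the $n=1$ case trivial and accounting for the $\max$ in the denominator). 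For \eqref{eq:jsjs898sn}, I would again expand the product by indicators to get $\sum_{j\in J}\Prob(\{i,j\}\subseteq\hat{S})$, observe that the case $i\in J$ dominates because $\tau/n\geq\tau(\tau-1)/(n(n-1))$, and collect terms to obtain the displayed formula.

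The estimate \eqref{eq:98ss54sgs} is the main technical hurdle. Writing $M\eqdef\max_{1\leq j\leq m}|J_j\cap\hat{S}|$, I would use the tail-sum identity $M=\sum_{k\geq 1}\chi_{(M\geq k)}$ and the bound $M\leq k_{\max}$ to get
\[
\Exp\bigl[M\cdot\chi_{(i\in\hat{S})}\bigr] \;=\; \sum_{k=1}^{k_{\max}}\Prob(M\geq k,\; i\in\hat{S}),
\]
and bound each summand two ways. Trivially, $\Prob(M\geq k,\; i\in\hat{S})\leq\Prob(i\in\hat{S})=\tau/n$. By the union bound, it is at most $\sum_{j=1}^m\Prob(|J_j\cap\hat{S}|\geq k,\; i\in\hat{S})$, and by conditioning on $|J_j\cap\hat{S}|=l$ the $\tau$-nice symmetry gives
\[
\Prob(i\in\hat{S}\mid |J_j\cap\hat{S}|=l)\;=\;\begin{cases} l/\omega, & i\in J_j,\\ (\tau-l)/(n-\omega), & i\notin J_j,\end{cases}
\]
both bounded by $c_l$; the hypergeometric law of $|J_j\cap\hat{S}|$ has support $[k_{\min},k_{\max}]$, which accounts for the substitution $k\mapsto\max\{k,k_{\min}\}$ in the summation range. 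Combining yields
\[
\Prob(M\geq k,\; i\in\hat{S})\;\leq\;\min\Bigl\{\tfrac{\tau}{n},\; m\sum_{l=\max\{k,k_{\min}\}}^{k_{\max}}c_l\pi_l\Bigr\},
\]
and factoring $\tau/n$ out of the minimum and summing over $k=1,\dots,k_{\max}$ produces \eqref{eq:98ss54sgs}; the bound is manifestly independent of $i$, so the maximum over $i$ is automatic. The delicate step I expect to require the most care is tracking the correct support for $|J_j\cap\hat{S}|$, in particular verifying that $c_l\leq 1$ throughout the summation range and handling the degenerate cases $\omega=n$ (where only the $i\in J_j$ branch is relevant, so $c_l=l/\omega$) and $\tau+\omega\leq n$ (where $k_{\min}=1$ and several terms in the $\min$ are binding).
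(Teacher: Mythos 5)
Your proposal is correct and follows essentially the same route as the paper: the key estimate \eqref{eq:98ss54sgs} is obtained exactly as in the paper via the tail-sum identity, the $\min$ of the trivial bound $\Prob(i\in\hat{S})=\tau/n$ against the union bound, and the two-case analysis $i\in J_j$ versus $i\notin J_j$ yielding $c_l\pi_l$ (you derive $\Prob(i\in\hat{S}\mid|J_j\cap\hat{S}|=l)$ by exchangeability where the paper counts binomial coefficients, but the formulas are identical). The only other difference is cosmetic: you supply short self-contained proofs of \eqref{eq:condexpJinterS} and \eqref{eq:JinterS2}, which the paper simply imports from an earlier reference.
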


\begin{proof}
The first statement is a straightforward generalization of (26) in \cite{RT:PCDM}. Identities \eqref{eq:condexpJinterS} and \eqref{eq:JinterS2} were established\footnote{In fact, the proof of the former is essentially identical to the proof of \eqref{eq:0djd7diddd}, and \eqref{eq:JinterS2} follows from \eqref{eq:0djd7diddd} by choosing $J_1=J_2=J$ and $\theta_{ij} = 1$.} in \cite{RT:PCDM}. Let us prove \eqref{eq:jsjs898sn}. The statement is trivial for $n=1$, assume therefore that $n\geq 2$. Notice that
\begin{equation}\label{eq:9dj9hsdnjcis}\Exp[|J \cap \hat{S}| \times \chi_{(k \in \hat{S})}] =  \Exp\left[\sum_{i \in J \cap \hat{S}} \sum_{j \in \{k\}\cap \hat{S}} 1\right] \overset{\eqref{eq:0djd7diddd}}{=} \sum_{i \in J} \Prob(\{i,k\} \subseteq \hat{S}).\end{equation}
Using \eqref{eq:uniform_samp_basic}, and the simple fact that $\Prob(\{i,k\}\subseteq \hat{S}) = \tfrac{\tau(\tau-1)}{n(n-1)}$ whenever $i\neq k$, we get
\begin{equation}\label{eq:2casesiu98} \sum_{i \in J} \Prob(\{i,k\} \subseteq \hat{S}) =\begin{cases}\sum_{i \in J} \tfrac{\tau(\tau-1)}{n\max(1,n-1)} = \tfrac{|J|\tau(\tau-1)}{n(n-1)}, & \quad  \text{if } k\notin J,\\
\tfrac{\tau}{n} + \sum_{i \in J/\{k\}} \tfrac{\tau(\tau-1)}{n(n-1)}  = \tfrac{\tau}{n}\left(1+\tfrac{(|J|-1)(\tau-1)}{(n-1)}\right), & \quad \text{if } k \in J.\end{cases}\end{equation}
Notice that the expression in the $k \notin J$ case is smaller than expression in the $k \in J$ case. If we now combine \eqref{eq:9dj9hsdnjcis} and \eqref{eq:2casesiu98} and take maximum in $k$, \eqref{eq:jsjs898sn} is proved. Let us now establish \eqref{eq:98ss54sgs}. Fix $i$ and let $\eta_j \eqdef |J_j \cap \hat{S}|$. We can now estimate
\begin{eqnarray}
 \Exp[  \max_{1\leq j \leq m} \eta_j \times \chi_{(i \in\hat{S})} ] &=& \sum_{k=k_{\text{min}}}^{k_{\text{max}}} k \Prob\left( \max_{1\leq j \leq m} \eta_j \times \chi_{(i\in\hat{S})} = k \right) \notag\\
 & = &\sum_{k=1}^{k_{\text{max}}} \Prob\left( \max_{1\leq j \leq m} \eta_j \times \chi_{(i\in\hat{S})} \geq k \right) \notag\\
&=& \sum_{k=1}^{k_{\text{max}}} \Prob\left( \bigcup_{j=1}^m \left\{ \eta_j \geq k  \;\; \& \;\; i\in\hat{S} \right\} \right)\notag\\
&\leq & \sum_{k=1}^{k_{\text{max}}} \min \left\{ \Prob (i \in \hat{S}), \sum_{j=1}^m \Prob\left( \eta_j \geq k \;\; \& \;\; i \in \hat{S} \right) \right\}  \notag\\
&\overset{\eqref{eq:uniform_samp_basic}}{=}& \sum_{k=1}^{k_{\text{max}}} \min \left\{ \frac{\tau}{n}, \sum_{j=1}^m \sum_{l=\max\{k,k_{\text{min}}\}}^{k_{\text{max}}} \Prob\left( \eta_j = l \;\; \& \; \; i\in \hat{S} \right)\right\}.\label{eq:09sns6sggdh}
\end{eqnarray}
In the last step we have used the fact that $\Prob(\eta_j=l) = 0$ for $l<k_{\text{min}}$ to restrict the scope of $l$. Let us now also fix $j$ and estimate  $\Prob( \eta_j = l \;\; \& \; \; i \in \hat{S})$. Consider two cases:
\begin{enumerate}
\item[(i)] If $i \in J_j$, then among the $\binom{n}{\tau}$ equiprobable possible outcomes of the $\tau$-nice sampling $\hat{S}$,
the ones for which $\abs{J_j \cap \hat{S}}=l$ and $i \in \hat{S}$ are those that select block $i$ and
$l-1$ other blocks from $J_j$ ($\binom{\omega-1}{l-1}$ possible choices) and $\tau-l$ blocks from outside $J_j$
($\binom{n-\omega}{\tau-l}$ possible choices). Hence,
\begin{equation}\label{eq:jds6kd0}
\Prob\left( \eta_j = l \;\; \& \;\; i \in \hat{S} \right) = \frac{\binom{\omega-1}{l-1} \binom{n-\omega}{\tau-l}}{\binom{n}{\tau}} = \frac{l}{\omega}\pi_l. \end{equation}

\item[(ii)]
If $i \not \in J_j$ (notice that this can not happen if $\omega=n$), then among the $\binom{n}{\tau}$ equiprobable possible outcomes of the $\tau$-nice sampling $\hat{S}$, the ones for which $\abs{\hat{S} \cap J_j}=l$ and $i \in \hat{S}$ are those that select block $i$ and
$\tau - l -1$ other blocks from outside $J_j$ ($\binom{n-\omega-1}{\tau-l-1}$ possible choices) and $l$ blocks from $J_j$
($\binom{\omega}{l}$ possible choices). Hence,
\begin{equation}\label{eq:kdndhdd79d}
\Prob\left( \eta_j = l \;\; \& \;\; i \in \hat{S} \right) = \frac{\binom{\omega}{l} \binom{n-\omega-1}{\tau-l-1}}{\binom{n}{\tau}} = \frac{\tau- l}{n-\omega}\pi_l.
\end{equation}
\end{enumerate}
It only remains to plug the maximum of \eqref{eq:jds6kd0} and \eqref{eq:kdndhdd79d} into \eqref{eq:09sns6sggdh}.
\end{proof}

We are now ready to present the main result of this section.

\begin{thm}[ESO for $\tau$-nice sampling] \label{thm:beta-eso-taunice}
Let $\ff$ be Nesterov separable of degree $\omega$, $\hat{S}$ be $\tau$-nice, and $w^*$ be chosen as in \eqref{eq:w_i}.
Then
\[
 (\ff_\mu,\hat{S}) \sim \ESO(\beta,w^*),
\]
where $\beta = \frac{\beta'}{\mu\sigma }$ and
\begin{equation}\label{beta_1}\beta'  = \beta'_2 \eqdef  1+\frac{(\omega-1)(\tau-1)}{\max(1, n-1)}\end{equation}
if the dual norm $\|\cdot\|_v$ is defined with $p=2$, and
\begin{equation}\label{eq:p=1beta}\beta' = \beta'_3 \eqdef \sum_{k=1}^{k_{\text{max}}} \!\! \min \left\{ 1, \frac{mn}{\tau} \sum_{l=\max\{k,k_{\text{min}}\}}^{k_{\text{max}}} c_l \pi_l \right\}\end{equation}
if $p=1$, where $c_l, \pi_l, k_{\text{min}}$ and $k_{\text{max}}$ are as in Lemma~\ref{lem:14}.
\end{thm}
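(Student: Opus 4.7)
The plan is to deduce the ESO inequality from part (i) of Theorem~\ref{thm:beta-eso-general} in the previous section, exploiting the identity
\[\|\hat{A}\|_{w^*,\hat{v}}^2 \;=\; \max_{\|h\|_{w^*}\leq 1}\Exp\left[\left(\|A h_{[\hat{S}]}\|_v^*\right)^2\right]\]
just established. Since $\Exp[|\hat{S}|]=\tau$ for a $\tau$-nice sampling, it suffices in each case to prove that $\max_{\|h\|_{w^*}\leq 1}\Exp\left[(\|Ah_{[\hat{S}]}\|_v^*)^2\right]\leq \tfrac{\tau\beta'}{n}$ with the appropriate $\beta'$. I fix $h$ with $\|h\|_{w^*}\leq 1$, write $s_{ji}\eqdef A_{ji} h^{(i)}\in\R$, and recall from \eqref{eq:subvector} that $s_{ji}=0$ whenever $i\notin \J(A^T e_j)$. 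With this notation, $(\|Ah_{[\hat{S}]}\|_v^*)^2$ equals $\sum_j v_j^{-q}\bigl(\sum_{i\in \hat{S}}s_{ji}\bigr)^{q/\ldots}$, the exact expression depending on $p$.

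In the case $p=2$, $\left(\|Ah_{[\hat{S}]}\|_v^*\right)^2=\sum_{j=1}^m v_j^{-2}\bigl(\sum_{i\in\hat{S}}s_{ji}\bigr)^2$. Expanding the square and taking expectation, identity \eqref{eq:0djd7diddd} yields
\[\Exp\!\left[\Big(\sum_{i\in\hat{S}}s_{ji}\Big)^2\right]=\tfrac{\tau}{n}\sum_{i}s_{ji}^2+\tfrac{\tau(\tau-1)}{n(n-1)}\Big[\Big(\sum_i s_{ji}\Big)^2-\sum_i s_{ji}^2\Big].\]
Applying Cauchy--Schwarz to the cross sum, $(\sum_i s_{ji})^2\leq |\J(A^T e_j)|\sum_i s_{ji}^2\leq \omega\sum_i s_{ji}^2$, and simplifying algebraically gives the prefactor $\tfrac{\tau}{n}\beta'_2$. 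Summing over $j$ and invoking Lemma~\ref{lem:666} (the $q=2$ case applied to $S=[n]$) to conclude that $\sum_j v_j^{-2}\sum_i s_{ji}^2\leq \|h\|_{w^*}^2\leq 1$ finishes this branch.

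The case $p=1$ is the main obstacle because $(\|u\|_v^*)^2=\max_j v_j^{-2}u_j^2$ couples the indices $j$ inside a maximum, so the clean per--$j$ expansion of the $p=2$ case is unavailable. The idea is to factor the random variable into a combinatorial part and a deterministic, $h$-dependent part. First, Cauchy--Schwarz gives $\bigl(\sum_{i\in J_j\cap\hat{S}}s_{ji}\bigr)^2\leq |J_j\cap \hat{S}|\sum_{i\in J_j\cap\hat{S}}s_{ji}^2$ where $J_j\eqdef \J(A^T e_j)$, so
\[\left(\|Ah_{[\hat{S}]}\|_v^*\right)^2\;\leq\;\Big(\max_{j}|J_j\cap\hat{S}|\Big)\cdot\Big(\max_{j}v_j^{-2}\sum_{i\in J_j\cap\hat{S}}s_{ji}^2\Big).\]
For the second factor, Lemma~\ref{lem:666} (the $p=1$ case, applied pathwise with $S=\hat{S}$) bounds it by $\|h^{(\hat{S})}\|_{w^*}^2=\sum_i\chi_{(i\in\hat{S})}w_i^*\ve{B_i h^{(i)}}{h^{(i)}}$. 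Plugging this in and taking expectations,
\[\Exp\!\left[(\|Ah_{[\hat{S}]}\|_v^*)^2\right]\;\leq\;\sum_{i=1}^n w_i^*\ve{B_i h^{(i)}}{h^{(i)}}\cdot \Exp\!\left[\max_j|J_j\cap\hat{S}|\,\chi_{(i\in\hat{S})}\right]\;\leq\;\|h\|_{w^*}^2\cdot \max_i\Exp\!\left[\max_j|J_j\cap\hat{S}|\,\chi_{(i\in\hat{S})}\right],\]
and now applying \eqref{eq:98ss54sgs} of Lemma~\ref{lem:14} gives exactly $\tfrac{\tau}{n}\beta'_3$, completing the argument.

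The delicate step is the $p=1$ case: it relies on separating the combinatorial ``overlap'' random variable $\max_j|J_j\cap\hat{S}|$ from the ``weight'' random variable $\|h^{(\hat{S})}\|_{w^*}^2$ via Cauchy--Schwarz together with the sharp Lemma~\ref{lem:666} bound, so that the expectation can be reduced to the subset-intersection moment \eqref{eq:98ss54sgs}, whose proof in turn rested on the hypergeometric identities \eqref{eq:jds6kd0}--\eqref{eq:kdndhdd79d} specific to the $\tau$-nice distribution.
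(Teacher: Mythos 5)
Your proof is correct and follows essentially the same route as the paper: both arguments reduce the claim to bounding $\|\hat{A}\|_{w^*,\hat{v}}^2$ via Theorem~\ref{thm:beta-eso-general}(i) and then combine Cauchy--Schwarz, Lemma~\ref{lem:666} (applied pathwise, by homogeneity), and the combinatorial identities of Lemma~\ref{lem:14}, with your $p=1$ branch matching the paper's almost verbatim. The only cosmetic difference is in the $p=2$ branch, where you expand the second moment of $\sum_{i\in \J(A^Te_j)\cap\hat{S}}A_{ji}h^{(i)}$ directly via \eqref{eq:0djd7diddd} and apply Cauchy--Schwarz to the cross term, whereas the paper conditions on $\eta_j=k$ and invokes \eqref{eq:condexpJinterS} and \eqref{eq:JinterS2}---the two computations yield the same bound; just note that \eqref{eq:98ss54sgs} as stated assumes $|\J(A^Te_j)|=\omega$ for every $j$, so you should record the paper's ``add dummy dependencies'' normalization before invoking it in the $p=1$ case.
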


\begin{proof}
In view of Theorem~\ref{thm:beta-eso-general}, we only need to bound $\|\hat{A}\|_{w^*,\hat{v}}^2$.
First, note that
\begin{equation}\label{eq:msms4323}
 \norm{\hat{A}}_{w^*, \hat{v}}^2  \quad \overset{\eqref{eq:12norms}}{=} \quad \max_{\|h\|_{w^*} = 1}(\|\hat{A} h\|_{\hat{v}}^*)^2 \quad \overset{\eqref{eq:norm2'}+\eqref{eq:09sddd}}{=} \quad  \max_{\norm{h}_{w^*}=1}\sum_{S \in \calP} p_S^{-1} (\|\hat{A}^Sh\|_v^*)^2.
\end{equation}
Further, it will be useful to observe that
  \begin{equation}\label{eq:091783h73}\hat{A}^S_{ji} \overset{\eqref{eq:A_ji}+\eqref{eq:jjd8d88d}}{=} p_S e_j^T A \sum_{k \in S}U_k U_k^T U_i \overset{\eqref{eq:U_iU_j}+ \eqref{eq:A_ji}+\eqref{eq:chi}}{=} p_S \chi_{(i \in S)}A_{ji} \enspace.\end{equation}
  For brevity, let us write $\eta_j \eqdef \abs{ \J(A^T e_j) \cap \hat{S}}$. As $\hat{S}$ is $\tau$-nice, adding dummy dependencies if necessary,
we can wlog assume that all rows of $A$ have the same number of nonzero blocks: $|\J(A^T e_j)| = \omega$ for all $j$. Thus,
$\pi_k \eqdef \Prob(\eta_j=k)$ does not depend on $j$.  Consider now two cases, depending on whether the norm $\|\cdot\|_v$ in $\R^m$ is defined with $p=1$ or $p=2$.

\begin{enumerate}

\item[(i)] For $p=2$ we can write
\begin{eqnarray}
 \norm{\hat{A}}_{w^*, \hat{v}}^2
 &\overset{\eqref{eq:msms4323} + \eqref{eq:q-norm} + \eqref{eq:nsj9292}}{=}&  \max_{\norm{h}_{w^*}=1}\sum_{S \in \calP} p_S^{-1} \sum_{j=1}^m v_j^{-2} \left({\sum_{i=1}^n \hat{A}^S_{ji} h^{(i)}}\right)^2 \notag \\
& \overset{\eqref{eq:091783h73}}{=} &  \max_{\norm{h}_{w^*}=1} \sum_{S \in \calP} p_S^{-1} \sum_{j=1}^m v_j^{-2}  \left(\sum_{i =1}^n p_S \chi_{(i \in S)}A_{ji}  h^{(i)}\right)^2 \notag \\
& \overset{\eqref{eq:subvector}}{=} & \max_{\norm{h}_{w^*}=1} \sum_{S \in \calP} p_S \sum_{j=1}^m v_j^{-2} \Big( \sum_{i \in  \J(A^T e_j) \cap S} A_{ji} h^{(i)}\Big)^2 \notag \\
& = & \max_{\norm{h}_{w^*}=1} \Exp\left[ \sum_{j=1}^m v_j^{-2} \Big( \sum_{i \in \J(A^T e_j) \cap \hat{S}} A_{ji} h^{(i)}\Big)^2 \right]\notag \\
& = & \max_{\norm{h}_{w^*}=1} \sum_{k=0}^n \Exp\left[ \left. \sum_{j=1}^m v_j^{-2} \Big( \sum_{i \in \J(A^T e_j) \cap \hat{S}} A_{ji} h^{(i)}\Big)^2 \right| \; \eta_j=k  \right] \pi_k \notag \\
& = & \max_{\norm{h}_{w^*}=1}  \sum_{k=0}^n  \sum_{j=1}^m v_j^{-2}\Exp\left[  \left. \Big( \sum_{i \in \J(A^T e_j) \cap \hat{S}} A_{ji} h^{(i)}\Big)^2 \right| \; \eta_j=k  \right] \pi_k. \label{eq:jska539036}
\end{eqnarray}

Using the Cauchy-Schwarz inequality, we can write
\begin{eqnarray}
\Exp\left[ \Big( \sum_{i \in \J(A^T e_j) \cap \hat{S}} A_{ji} h^{(i)}\Big)^2 \Big| \; \eta_j=k  \right]
& \overset{(\text{CS})}{\leq} &  \Exp\left[  \abs{\J(A^T e_j) \cap \hat{S}} \sum_{i \in \J(A^T e_j) \cap \hat{S}} (A_{ji}h^{(i)})^2 \Big| \; \eta_j=k \right] \notag \\
& = &  \Exp\left[ k  \sum_{i \in \J(A^T e_j) \cap \hat{S}} (A_{ji}h^{(i)})^2 \Big| \; \eta_j=k \right] \notag \\
& \overset{\eqref{eq:condexpJinterS}}{=} & \frac{k^2}{\omega}  \sum_{i \in \J(A^T e_j)} (A_{ji}  h^{(i)})^2.\label{eq:0153026jdjd}
\end{eqnarray}

Combining \eqref{eq:jska539036} and \eqref{eq:0153026jdjd}, we finally get
\begin{eqnarray*}
\|\hat{A}\|_{w^*,\hat{v}}^2 & \leq & \frac{1}{\omega} \sum_{k=0}^n k^2 \pi_k \left(\max_{\norm{h}_{w^*}=1}  \sum_{j=1}^mv_j^{-2} \sum_{i=1}^n
(A_{ji}  h^{(i)})^2 \right) \\
&  \overset{(\text{Lemma}~\ref{lem:666})}{\leq} & \frac{1}{\omega} \sum_{k=0}^n k^2 \pi_k  \quad  \overset{\eqref{eq:JinterS2}}{=} \quad \frac{\tau}{n} \Big( 1+\frac{(\omega-1)(\tau-1)}{\max(1, n-1)}\Big)  .
\end{eqnarray*}

%The first part of the theorem now follows by plugging the above estimate into \eqref{eq:conclu0090}.

 \item[(ii)] Consider now the case $p=1$.

\begin{eqnarray}
 \norm{\hat{A}}_{w^*,\hat{v}}^2 &\overset{\eqref{eq:msms4323} + \eqref{eq:q-norm} +\eqref{eq:nsj9292}}{=} &
 \max_{\norm{h}_{w^*}=1}\sum_{S \in \calP} p_S^{-1} \left[\max_{1\leq j\leq m} v_j^{-2} \left({\sum_{i=1}^n \hat{A}^S_{ji} h^{(i)}}\right)^2\right] \notag \\
 &\overset{\eqref{eq:091783h73}}{=} & \max_{\norm{h}_{w^*}=1}\sum_{S \in \calP} p_S^{-1} \left[\max_{1\leq j \leq m} v_j^{-2} \left({\sum_{i =1}^n  p_S \chi_{(i \in S)} A_{ji} h^{(i)}}\right)^2 \right] \notag\\
& \overset{\eqref{eq:subvector}}{=} & \max_{\norm{h}_{w^*}=1}\sum_{S \in \calP} p_S \left[\max_{1\leq j \leq m} v_j^{-2} \Big(\sum_{i \in \J(A^T e_j) \cap S} A_{ji} h^{(i)}\Big)^2 \right]\notag\\
& \overset{\text{(Cauchy-Schwarz)}}{\leq} & \max_{\norm{h}_{w^*}=1}\sum_{S \in \calP} p_S \left[\max_{1\leq j \leq m} v_j^{-2} |\J(A^T e_j) \cap S| \sum_{i \in \J(A^T e_j)\cap S} (A_{ji} h^{(i)})^2 \right]\notag\\
&\leq & \max_{\norm{h}_{w^*}=1}\sum_{S \in \calP} p_S \kappa_S \left[\max_{1\leq j \leq m} v_j^{-2}  \sum_{i \in S} (A_{ji} h^{(i)})^2 \right],\label{eq:kopa53636}
\end{eqnarray}
where  $\kappa_S \eqdef \max_{1\leq j\leq m} |\J(A^T e_j) \cap S|$. Consider the change of variables $y^{(i)} = (w_i^*)^{1/2} B_i^{1/2} h^{(i)}$. Utilizing essentially the same argument as in the proof of Lemma~\ref{lem:666} for $p=1$, we obtain
\begin{equation}\label{eq:8shshs}\max_{1\leq j \leq m} v_j^{-2}  \sum_{i \in S} (A_{ji} h^{(i)})^2 \leq \sum_{i \in S} \|y^{(i)}\|_E^2.\end{equation}
Since $\|y\|_E = \|h\|_{w^*}$, substituting \eqref{eq:8shshs} into \eqref{eq:kopa53636} gives
\begin{eqnarray}
 \norm{\hat{A}}_{w^*,\hat{v}}^2 & \leq & \max_{\norm{y}_{E}=1} \sum_{S \in \calP} p_S \kappa_S \sum_{i \in S}\|y^{(i)}\|_E^2 \quad = \quad \max_{\norm{y}_{E}=1} \sum_{i =1}^n \| y^{(i)}\|_E^2 \sum_{S \in \calP} p_S \kappa_S \chi_{(i \in S)}\notag\\
& = & \max_{\norm{y}_{E}=1} \sum_{i =1}^n \| y^{(i)}\|_E^2 \Exp[ \kappa_{\hat{S}} \chi_{(i \in \hat{S})}] \notag\\
& = & \max_{1 \leq i \leq n}  \Exp[\kappa_{\hat{S}} \chi_{(i \in \hat{S})}] \quad = \quad  \max_{1 \leq i \leq n}  \Exp[\max_{1\leq j\leq m} |\J(A^T e_j) \cap \hat{S}|\times \chi_{(i \in \hat{S})}].\label{eq:jsjs09nsd5sdj}
\end{eqnarray}
It now only remains to apply inequality \eqref{eq:98ss54sgs} used with $J_j = \J(A^T e_j)$.
\end{enumerate}

\end{proof}

\begin{figure}

\centering

\includegraphics[height=35ex]{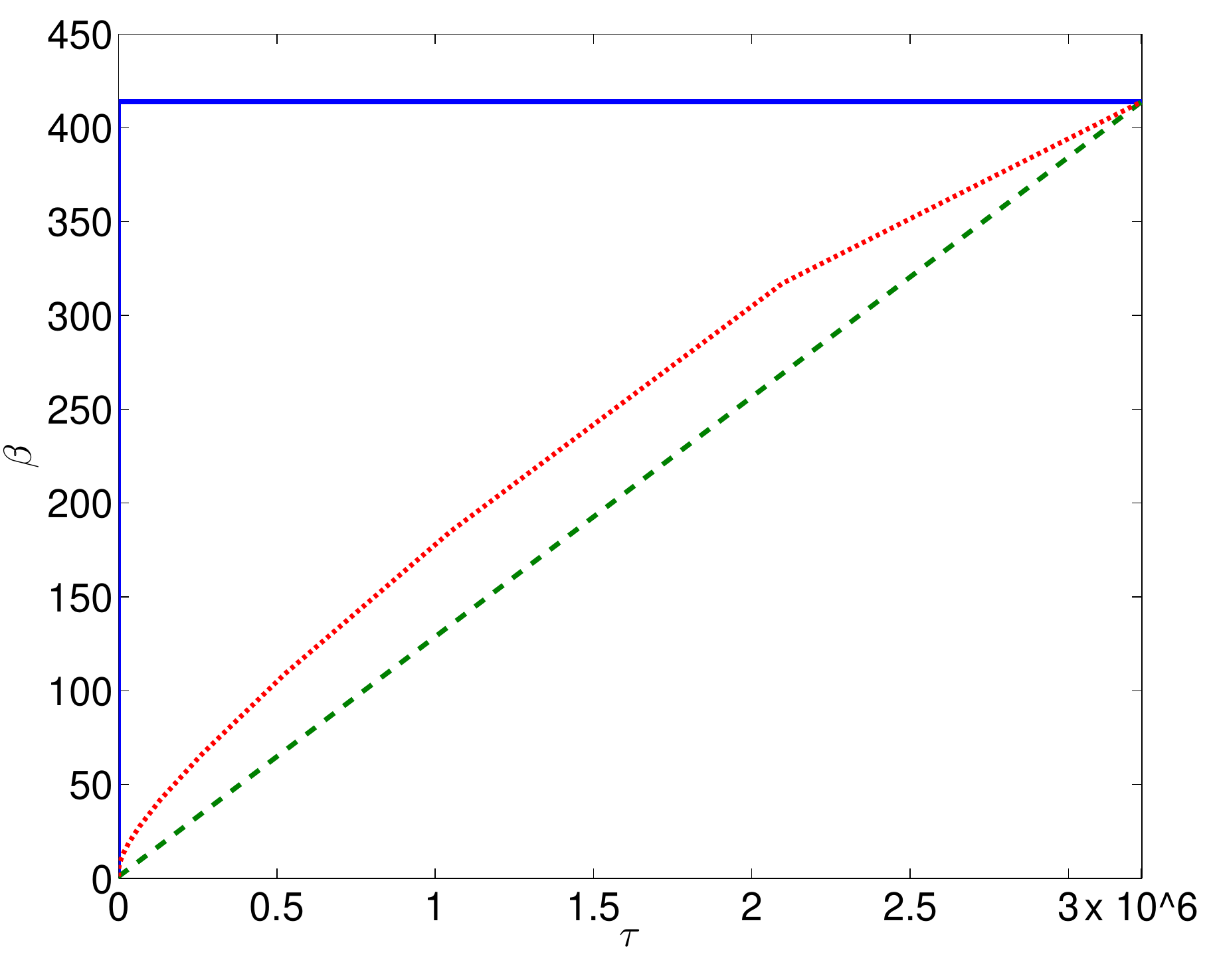}
\quad
\includegraphics[height=35ex]{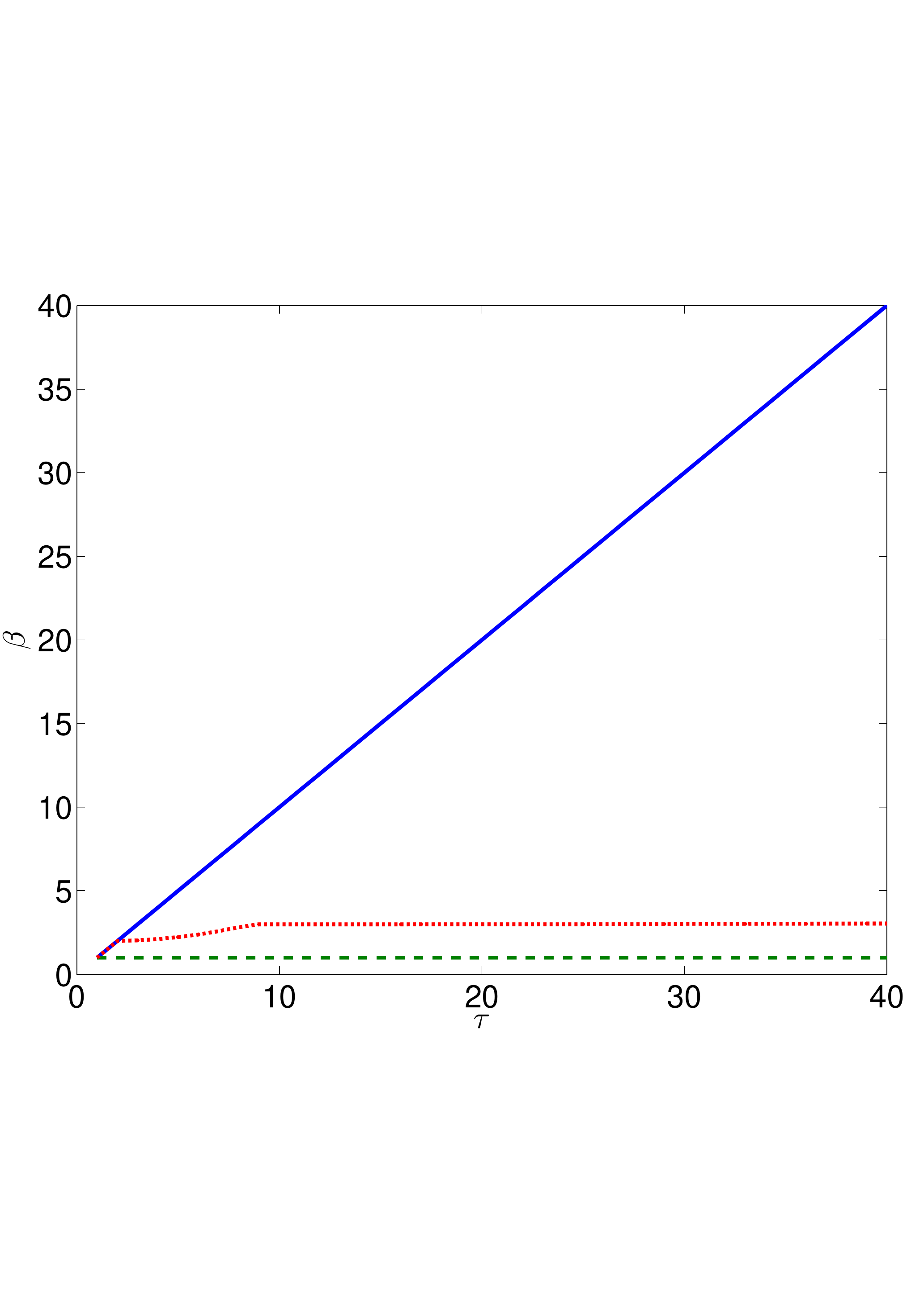}

\caption{
Comparison of the three formulae for $\beta'$ as a function of the number of processors~$\tau$ (smaller $\beta'$ is better). We have used matrix $A \in \R^{m \times n}$ with $m = 2,396,130$, $n = 3,231,961$ and $\omega = 414$.  \textbf{Blue solid line:} $\tau$-uniform sampling, $\beta'_1=\min\{\omega, \tau\}$ (Theorem~\ref{thm:beta-dso}). \textbf{Green dashed line:}  $\tau$-nice sampling and $p=2$, $\beta'_2=1+\frac{(\omega-1)(\tau-1)}{\max\{1,n-1\}}$ (Theorem \ref{thm:beta-eso-taunice}). \textbf{Red dash-dotted line:} $\tau$-nice sampling and $p=1$, $\beta'_3$ follows \eqref{eq:p=1beta} in Theorem~\ref{thm:beta-eso-taunice}. Note that $\beta'_1$ reaches its maximal value  $\omega$ quickly, whereas  $\beta'_2$ increases slowly. When $\tau$ is small compared to $n$, this means that $\beta'_2$ remains close to 1. As shown in Section~\ref{sec:complexity} (see Theorems~\ref{thm:complexity_strong_convexity_nonsmooth} and \ref{thm:complexity_strong_convexity_smoothed}), small values of $\beta'$ directly translate into better complexity and parallelization speedup. \textbf{Left:} Large number of processors. \textbf{Right:} Zoom for smaller number of processors.
}

\label{fig:compeso}
\end{figure}

Let us now comment on some aspects of the above result.
\begin{enumerate}
\item It is possible to draw a link between $\beta'_2$ and $\beta'_3$. In view of \eqref{eq:JinterS2}, for $p=2$ we have \[ \beta'_2 = \tfrac{n}{\tau}\max_{1\leq i \leq n} \Exp[|\J(A^T e_j) \cap \hat{S}| \times \chi_{(i \in \hat{S})}],\]
where $j$ is such that  $|\J(A^T e_j)| = \omega$ (we can wlog assume this holds for all $j$). On the other hand, as is apparent from \eqref{eq:jsjs09nsd5sdj}, for $p=1$ we can replace $\beta'_3$ by
\[ \beta''_3 \eqdef \tfrac{n}{\tau}\max_{1\leq i \leq n} \Exp[\max_{1\leq j \leq m}|\J(A^T e_j) \cap \hat{S}| \times \chi_{(i \in \hat{S})}].\]
Clearly, $\beta'_2\leq \beta''_3 \leq m \beta_2'$. However, in many situations,  $\beta_3'' \approx \beta_2'$ (see Figure~\ref{fig:compeso}). Recall that a small $\beta$ is good for Algorithm~\ref{alg:pcdm1} (this will be formally proved in the next section).

\item If we let $\omega^* = \max_i \{j \st A_{ji} \neq 0\}$ (maximum number of nonzero rows in matrices $A_1,\dots, A_n$), then in the $p=1$ case we can replace $\beta_3'$ by the smaller quantity
\[
\beta_3''' \eqdef \frac{\tau}{n} \sum_{k=k_{\text{min}}}^{k_{\text{max}}} \min \left\{ 1,  \sum_{l=k}^{k_{\text{max}}}
\left( m\frac{n}{n-\omega}\frac{\tau-l}{\tau} + n\frac{\omega^*}{\omega}\frac{l}{\tau} \right) \pi_l \right\}.
\]

\end{enumerate}

\section{Iteration Complexity} \label{sec:complexity}

In this section we formulate \emph{concrete} complexity results for Algorithm~\ref{alg:pcdm1} applied to problem \eqref{eq:P} by combining the generic results proved in \cite{RT:PCDM} and outlined in the introduction, Lemma~\ref{eq:lemma098098sdsd} (which draws a link between \eqref{eq:P} and \eqref{eq:Psmooth} and, \emph{most importantly}, the concrete values of $\beta$ and $w$ established in this paper for Nesterov separable functions and $\tau$-uniform and $\tau$-nice samplings.

A function $\phi: \R^N \to \R\cup \{+\infty\}$ is strongly convex with respect to the norm $\|\cdot\|_w$ with convexity parameter $\sigma_\phi(w)\geq 0$ if for all $x,\bar{x} \in \dom \phi$,
\[\phi(x) \geq \phi(\bar{x}) + \ve{\phi'(\bar{x})}{x-\bar{x}} + \frac{\sigma_{\phi}(w)}{2}\|x-\bar{x}\|_w^2,\]
where $\phi'(\bar{x})$ is any subgradient of $\phi$ at $\bar{x}$.

For $x_0 \in \R^N$ we let ${\cal L}^\delta_{\mu}(x_0) \eqdef \{x \st \FF_\mu(x)\leq \FF_\mu(x_0) + \delta\}$ and let
\[{\cal D}^\delta_{w,\mu}(x_0) \eqdef  \max_{x,y} \{\|x-y\|_w \st x, y \in {\cal L}^\delta_\mu(x_0)\}\]
be the diameter of this set in the norm $\|\cdot\|_w$.

It will be useful to recall some basic notation from Section~\ref{sec:SPCDM} that the theorems of this section will refer to: $\FF(x) = \ff(x)+\Psi(x)$, and  $\FF_\mu(x) = \ff_\mu(x)+\Psi(x)$, with
\[\ff(x) = \max_{z \in Q}\{\ve{Ax}{z}-g(z)\}, \quad \ff_\mu(x) = \max_{z\in Q} \{\ve{Ax}{z} - g(z) - \mu d(z)\},\] where $d$ is a prox function on $Q$ (it is strongly convex on $Q$  wrt $\|\cdot\|_v$, with constant $\sigma$) and $D = \max_{x \in Q} d(x)$. Recall also that $\|\cdot\|_v$ is a weighted $p$ norm on $\R^m$, with weights $v_1,\dots,v_m>0$. Also recall that $\|\cdot\|_w$ is a norm defined as a weighted quadratic mean of the block-norms $\ve{B_i x^{(i)}}{x^{(i)}}^{1/2}$, with weights $w_1,\dots,w_n>0$.

\begin{thm}[Complexity: smoothed composite problem \eqref{eq:Psmooth}] \label{thm:complexity_strong_convexity_smoothed}Pick $x_0 \in \dom \Psi$ and  let $\{x_k\}_{k\geq 0}$ be the sequence of random iterates produced by the smoothed parallel descent method (Algorithm~\ref{alg:pcdm1}) with the following setup:
\begin{itemize}
\item[(i)] $\{S_k\}_{k\geq 0}$ is an iid sequence of $\tau$-uniform samplings, where $\tau \in \{1,2,\dots,n\}$,
\item[(ii)] $w = w^*$, where $w^*$ is defined in \eqref{eq:w_i},
\item[(iii)] $\beta = \tfrac{\beta'}{\sigma \mu}$, where $\beta' = 1+\tfrac{(\omega-1)(\tau-1)}{\max\{1,n-1\}}$ if the samplings are $\tau$-nice and $p=2$,
 $\beta'$ is given by \eqref{eq:p=1beta} if the samplings are $\tau$-nice and $p=1$, and $\beta' = \min\{\omega,\tau\}$ if the samplings are not $\tau$-nice ($\omega$ is the degree of Nesterov separability).
\end{itemize}
Choose error tolerance $0 < \epsilon < \FF_\mu(x_0)-\min_x \FF_\mu(x)$, confidence level  $0<\rho < 1$ and iteration counter $k$ as follows:
\begin{enumerate}
\item[(i)]
if $\FF_\mu$ is strongly convex with $\sigma_{\ff_\mu}(w^*) + \sigma_{\Psi}(w^*) >0$, choose \[k \geq \frac{n}{\tau} \times  \frac{\tfrac{\beta'}{\mu \sigma}+\sigma_\Psi(w^*)}{\sigma_{f_\mu}(w^*) + \sigma_\Psi(w^*)} \times \log \left(\frac{\FF_\mu(x_0)-\min_x \FF_\mu(x)}{\epsilon \rho}\right),\]
\item[(ii)] otherwise additionally assume\footnote{This assumption is not restrictive as $\beta' \geq 1$, $n \geq \tau$ and $\mu, \sigma$ are usually small. However, it is technically needed.} that $\epsilon < \tfrac{2 n\beta}{\tau }$ and that\footnote{Instead of the assumption $\beta'=\min\{\omega,\tau\}$ it suffices to include an additional step into SPCDM which accepts only updates decreasing the loss. That is, $x_{k+1}$ is set $x_k$ in case $\FF_\mu(x_{k+1})>\FF_\mu(x_k)$. However, function evaluation is not recommended as it would considerable slow down the method. In our experiments we have never encountered a problem with using the more efficient $\tau$-nice sampling even in the non-strongly convex case. In fact, this assumption may just be an artifact of the analysis.}  $\beta'=\min\{\omega,\tau\}$, and choose
\[k \geq \frac{n\beta'}{\tau} \times \frac{2 ({\cal D}^0_{w^*,\mu}(x_0))^2}{ \mu \sigma\epsilon} \times \log\left(\frac{\FF_\mu(x_0) - \min_x \FF_\mu(x)}{\epsilon \rho}\right).\]
\end{enumerate}
Then
\[\Prob(\FF_\mu(x_k) - \min_x \FF_\mu(x) \leq \epsilon)\geq 1-\rho.\]
\end{thm}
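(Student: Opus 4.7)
The plan is to assemble the result from three ingredients already prepared in the paper: the ESO inequality for $(f_\mu, \hat{S})$ with the data-dependent weights $w^*$, the fact that $\Psi$ is block separable and closed convex (so the generic machinery of Richt\'{a}rik--Tak\'{a}\v{c}~\cite{RT:PCDM} applies verbatim to the pair $(f_\mu + \Psi, \hat{S})$), and finally the generic high-probability iteration complexity bounds of \cite{RT:PCDM}. The novelty of the theorem is not in the complexity analysis itself but in the fact that we can plug in \emph{small and computable} values of $\beta$ and $w^*$.

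The first step is to justify the choice of $(\beta, w^*)$: Theorem~\ref{thm:beta-dso} gives $(f_\mu,\hat{S}) \sim \mathrm{ESO}(\tfrac{\min\{\omega,\tau\}}{\mu\sigma}, w^*)$ for a general $\tau$-uniform sampling, while Theorem~\ref{thm:beta-eso-taunice} gives the sharper value $\beta' = 1 + \tfrac{(\omega-1)(\tau-1)}{\max\{1,n-1\}}$ when $\hat{S}$ is $\tau$-nice and $p=2$, and the value from \eqref{eq:p=1beta} when $\hat{S}$ is $\tau$-nice and $p=1$. In each of the three cases of the hypothesis of the theorem, this produces exactly the $\beta = \beta'/(\mu\sigma)$ claimed, so Algorithm~\ref{alg:pcdm1} is being run with valid ESO parameters. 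Moreover, since Step 2 of Algorithm~\ref{alg:pcdm1} augments the quadratic overapproximation with $\Psi$, the algorithm is literally the PCDM of \cite{RT:PCDM} applied to the composite objective $F_\mu = f_\mu + \Psi$ with the differentiable convex part~$f_\mu$.

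The second step is to invoke the generic iteration complexity theorems of \cite{RT:PCDM} for PCDM applied to a composite objective $\phi + \Psi$ with ESO parameters $(\beta, w)$ and a proper uniform sampling of expected cardinality $\tau$. In the strongly convex case $\sigma_{f_\mu}(w^*) + \sigma_\Psi(w^*) > 0$, those generic theorems yield
\[
k \;\geq\; \frac{n}{\tau}\cdot\frac{\beta + \sigma_\Psi(w^*)}{\sigma_{f_\mu}(w^*)+\sigma_\Psi(w^*)}\cdot \log\!\left(\frac{F_\mu(x_0) - \min F_\mu}{\epsilon \rho}\right)
\;\Longrightarrow\; \Prob\!\bigl(F_\mu(x_k) - \min F_\mu \leq \epsilon\bigr)\geq 1-\rho,
\]
which, after substituting $\beta = \beta'/(\mu\sigma)$, is exactly part (i). In the merely convex case the generic bound is of the form $k \geq \tfrac{n\beta}{\tau}\cdot \tfrac{2 R^2}{\epsilon}\cdot \log(\cdot)$, where $R$ is the $\|\cdot\|_w$-diameter of the initial level set of $F_\mu$; substituting $\beta = \beta'/(\mu\sigma)$ and $R = {\cal D}^0_{w^*,\mu}(x_0)$ gives part (ii). The technical assumptions $\epsilon < 2n\beta/\tau$ and $\beta' = \min\{\omega,\tau\}$ are precisely the assumptions needed to invoke the corresponding monotonicity / level-set arguments from \cite{RT:PCDM}; the first ensures the logarithmic factor is well defined, the second ensures the iterates stay in the initial level set (which is automatic for $\tau$-uniform but not for $\tau$-nice samplings unless one inserts a monotone-update safeguard).

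The main conceptual obstacle is really bookkeeping: making sure the strong convexity constants, the norm, the weights and $\beta$ all refer to the \emph{same} norm $\|\cdot\|_{w^*}$, so that the generic complexity bound of \cite{RT:PCDM} can be quoted without adjustment. Once this is checked, everything reduces to substitution. No new inequalities need to be derived here; all the work is concentrated in Theorems~\ref{thm:beta-dso} and \ref{thm:beta-eso-taunice}, which are already proved.
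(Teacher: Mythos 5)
Your proposal is correct and follows exactly the paper's own route: the paper's proof is a one-line combination of the generic complexity bounds of Richt\'{a}rik and Tak\'{a}\v{c} \cite{RT:PCDM} with Theorems~\ref{thm:beta-dso} and \ref{thm:beta-eso-taunice}, which supply the ESO parameters $(\tfrac{\beta'}{\sigma\mu}, w^*)$ that you plug in. Your additional remarks on the role of the technical assumptions in case (ii) are consistent with the paper's footnotes and do not change the argument.
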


\begin{proof}
This follows from the generic complexity bounds proved by Richt\'{a}rik and Tak\'{a}\v{c} \cite[Theorem~19(ii) and Theorem~20]{RT:PCDM} and Theorems~\ref{thm:beta-dso} and \ref{thm:beta-eso-taunice} giving formulas for $\beta'$ and $w^*$ for which $(\ff_\mu,\hat{S})\sim \ESO(\tfrac{\beta'}{\sigma \mu},w^*)$.
\end{proof}

We now we consider solving the nonsmooth problem \eqref{eq:P} by applying Algorithm~\ref{alg:pcdm1} to its smooth approximation  \eqref{eq:Psmooth} for a specific value of the smoothing parameter $\mu$.

\begin{thm}[Complexity: nonsmooth composite problem \eqref{eq:P}] \label{thm:complexity_strong_convexity_nonsmooth}Pick $x_0 \in \dom \Psi$ and  let $\{x_k\}_{k\geq 0}$ be the sequence of random iterates produced by the smoothed parallel descent method (Algorithm~\ref{alg:pcdm1}) with the same setup as in Theorem~\ref{thm:complexity_strong_convexity_smoothed}, where
$\mu=\tfrac{\epsilon'}{2D}$ and $0 < \epsilon' < \FF(x_0)-\min_x \FF(x)$. Further, choose confidence level  $0<\rho < 1$ and iteration counter as follows:
\begin{enumerate}
\item[(i)]
if $\FF_\mu$ is strongly convex with $\sigma_{\ff_\mu}(w^*) + \sigma_{\Psi}(w^*) >0$, choose \[k \geq \frac{n}{\tau} \times  \frac{\frac{2\beta' D}{\sigma \epsilon'}+\sigma_\Psi(w^*)}{\sigma_{f_\mu}(w^*) + \sigma_\Psi(w^*)} \times \log \left(\frac{2(\FF(x_0)-\min_x \FF(x)) + \epsilon'}{\epsilon' \rho}\right),\]
\item[(ii)] otherwise additionally assume that $(\epsilon')^2 < \tfrac{8nD\beta'}{ \sigma \tau}$ and that $\beta'=\min\{\omega,\tau\}$, and choose
\[k \geq \frac{n\beta'}{\tau} \times \frac{8D    ({\cal D}_{w^*,0}^{\epsilon'/2}(x_0))^2}{\sigma (\epsilon')^2} \times \log\left(\frac{2(\FF(x_0) - \min_x \FF(x))+\epsilon'}{\epsilon' \rho}\right).\]
\end{enumerate}
Then
\[\Prob(\FF(x_k) - \min_x \FF(x) \leq \epsilon')\geq 1-\rho.\]
\end{thm}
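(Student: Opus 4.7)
The plan is to obtain this theorem as a direct corollary of Theorem~\ref{thm:complexity_strong_convexity_smoothed} via Nesterov's smoothing bound (Lemma~\ref{eq:lemma098098sdsd}). With the choice $\mu = \epsilon'/(2D)$ we have $\mu D = \epsilon'/2$, so Lemma~\ref{eq:lemma098098sdsd} gives, for any $x\in\dom\Psi$,
\[\FF(x) - \FF(x^*) \;\leq\; \FF_\mu(x) - \FF_\mu(x_\mu^*) + \tfrac{\epsilon'}{2}.\]
Hence it suffices to drive $\FF_\mu(x_k) - \FF_\mu(x_\mu^*)$ below $\epsilon \eqdef \epsilon'/2$; I would then just apply Theorem~\ref{thm:complexity_strong_convexity_smoothed} with this value of $\epsilon$ and chase the substitutions through.

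The log factor in Theorem~\ref{thm:complexity_strong_convexity_smoothed} requires an upper bound on $\FF_\mu(x_0) - \min_x \FF_\mu$, and here I would use the \emph{other} inequality in Lemma~\ref{eq:lemma098098sdsd}: $\FF_\mu(x_0) - \FF_\mu(x_\mu^*) \leq \FF(x_0) - \FF(x^*) + \mu D = \FF(x_0) - \FF(x^*) + \epsilon'/2$. Dividing by $\epsilon \rho = (\epsilon'/2)\rho$ produces precisely the factor $\log\bigl((2(\FF(x_0)-\min\FF)+\epsilon')/(\epsilon'\rho)\bigr)$ in the claim. For the strongly convex case (i), substituting $\mu = \epsilon'/(2D)$ into the numerator $\beta'/(\mu\sigma)$ appearing in Theorem~\ref{thm:complexity_strong_convexity_smoothed}(i) yields $2\beta'D/(\sigma\epsilon')$, matching the statement exactly.

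For the non-strongly convex case (ii), the smoothed theorem's technical assumption $\epsilon < 2n\beta/\tau$ with $\beta = \beta'/(\sigma\mu)$, $\epsilon=\epsilon'/2$ and $\mu=\epsilon'/(2D)$ becomes $(\epsilon')^2 < 8nD\beta'/(\sigma\tau)$, exactly as stated. Likewise, the iteration bound $\tfrac{n\beta'}{\tau}\cdot\tfrac{2(\mathcal{D}^0_{w^*,\mu}(x_0))^2}{\mu\sigma\epsilon}$ becomes $\tfrac{n\beta'}{\tau}\cdot\tfrac{8D(\mathcal{D}^0_{w^*,\mu}(x_0))^2}{\sigma(\epsilon')^2}$. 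The only nontrivial step is replacing the $\FF_\mu$-diameter with the $\FF$-diameter appearing in the claim; for this I would verify the level-set inclusion
\[\{x \st \FF_\mu(x)\leq \FF_\mu(x_0)\} \;\subseteq\; \{x \st \FF(x)\leq \FF(x_0)+\tfrac{\epsilon'}{2}\},\]
which is immediate since $\FF(x)\leq \FF_\mu(x)+\mu D \leq \FF_\mu(x_0)+\mu D \leq \FF(x_0)+\epsilon'/2$ for any $x$ in the left-hand set (using $\FF_\mu(x_0)\leq \FF(x_0)$). Taking diameters gives $\mathcal{D}^0_{w^*,\mu}(x_0) \leq \mathcal{D}^{\epsilon'/2}_{w^*,0}(x_0)$, completing the reduction.

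There is no serious obstacle here: the argument is essentially bookkeeping, splitting the target accuracy $\epsilon'$ into two equal halves, one absorbed by the smoothing bias $\mu D$ and the other by the smooth-problem suboptimality target. The one place where care is needed is the level-set enlargement above, since the smooth and nonsmooth level sets of the same sublevel value do not coincide; enlarging the radius by $\epsilon'/2$ is exactly what compensates for the $\mu D$ gap between $\FF$ and $\FF_\mu$.
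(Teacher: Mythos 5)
Your proposal is correct and follows essentially the same route as the paper's proof: apply Theorem~\ref{thm:complexity_strong_convexity_smoothed} with $\epsilon=\epsilon'/2$ and $\mu=\epsilon'/(2D)$, use the two inequalities of Lemma~\ref{eq:lemma098098sdsd} to transfer both the suboptimality guarantee and the logarithmic factor, and use the level-set inclusion ${\cal L}^0_{\mu}(x_0)\subseteq {\cal L}^{\epsilon'/2}_{0}(x_0)$ to replace ${\cal D}^0_{w^*,\mu}(x_0)$ by ${\cal D}^{\epsilon'/2}_{w^*,0}(x_0)$. The only detail you omit, which the paper verifies explicitly, is that the hypothesis $\epsilon<\FF_\mu(x_0)-\min_x\FF_\mu(x)$ of the smoothed theorem actually holds; this follows by rewriting $\epsilon'<\FF(x_0)-\min_x\FF(x)$ as $\tfrac{\epsilon'}{2}<\FF(x_0)-\FF(x^*)-\mu D$ and combining with the second inequality of Lemma~\ref{eq:lemma098098sdsd} at $x=x_0$.
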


%\begin{thm}\label{thm:complexity0009} Pick $x_0 \in \dom \Psi$, choose accuracy level $0 < \epsilon' < \FF(x_0)-\min_x \FF(x)$ and smoothing parameter $\mu=\tfrac{\epsilon'}{2D}$. Further, let $w^*$ be chosen as in \eqref{eq:w_i} and assume that $\sigma_{\ff_\mu}(w^*) + \sigma_{\Psi}(w^*) >0$. Let $\{x_k\}$ be the sequence of random iterates produced by the variant of Algorithm~\ref{alg:pcdm1} with $\{S_k\}_{k\geq 0}$ being an iid sequence of $\tau$-uniform samplings, where $\tau \in \{1,2,\dots,n\}$.
%
%Then for any confidence level $0<\rho < 1$ and iteration counter
%\[k \geq \frac{n}{\tau} \times  \frac{\frac{2\beta D}{\sigma \epsilon'}+\sigma_\Psi(w^*)}{\sigma_{f_\mu}(w^*) + \sigma_\Psi(w^*)} \times \log \left(\frac{2(\FF(x_0)-\min_x \FF(x)) + \epsilon'}{\epsilon' \rho}\right),\]
%where
%\begin{enumerate}
%\item[(i)] $\beta = 1+\tfrac{(\omega-1)(\tau-1)}{\max\{1,n-1\}}$ if the samplings are $\tau$-nice and $p=2$,
%\item[(ii)] $\beta$ is given by \eqref{eq:p=1beta} if the samplings are $\tau$-nice and $p=1$,
%\item[(iii)] $\beta = \min\{\omega,\tau\}$ if the samplings are not $\tau$-nice,
%\end{enumerate}
%we have
%\[\Prob(\FF(x_k) - \min_x \FF(x) \leq \epsilon')\geq 1-\rho.\]
%\end{thm}
\begin{proof}  We will apply Theorem~\ref{thm:complexity_strong_convexity_smoothed} with $\epsilon = \tfrac{\epsilon'}{2}$ and $\mu = \tfrac{\epsilon'}{2D}$. All that we need to argue in case (i) (and we need this in case (ii) as well) is: (a) $\epsilon < \FF_\mu(x_0) - \FF_\mu(x_\mu^*)$, where $x_\mu^*=\arg\min_x \FF_\mu(x)$ (this is needed to satisfy the assumption about $\epsilon$), (b) $\FF_\mu(x_0) - \FF_\mu(x_\mu^*) \leq \FF(x_0) - \FF(x^*) + \epsilon$ (this is needed for logarithmic factor in the iteration counter) and (c) $\Prob(\FF_\mu(x_k)- F_\mu(x_\mu^*)\leq \epsilon)\leq \Prob(\FF(x_k)-\FF(x^*) \leq \epsilon')$, where $x^* =\arg\min_x \FF(x)$. Inequality (a) follows  by combining our assumption with Lemma~\ref{eq:lemma098098sdsd}. Indeed, the assumption $\epsilon' < \FF(x_0) -\FF(x^*)$ can be written as $\tfrac{\epsilon'}{2}<\FF(x_0)- \FF(x^*) - \mu D$, which combined with the second inequality  in \eqref{eq:propsjhs88sd}, used with $x=x_0$, yields the result. Further, (b) is identical to the first inequality in Lemma~\ref{eq:lemma098098sdsd} used with $x=x_0$. Finally, (c) holds since  the second inequality of Lemma~\ref{eq:lemma098098sdsd} with $x = x_k$ says that  $\FF_\mu(x_k)- F_\mu(x_\mu^*) \leq \tfrac{\epsilon'}{2}$ implies $\FF(x_k)-\FF(x^*) \leq \epsilon'$.

In case (ii) we additionally need to argue that: (d) $\epsilon < \tfrac{2n \beta}{\tau}$ and (e) ${\cal D}_{w^*,\mu}^0(x_0) \leq {\cal D}_{w^*,0}^{\epsilon'/2}(x_0)$. Note that (d) is equivalent to the assumption $(\epsilon')^2 < \tfrac{8nD \beta'}{\sigma \tau}$. Notice that as long as $\FF_\mu(x)\leq \FF_\mu(x_0)$, we have \[\FF(x) \overset{\eqref{eq:eps09809}}{\leq} \FF_\mu(x) + \tfrac{\epsilon'}{2} \leq \FF_\mu(x_0) + \tfrac{\epsilon'}{2} \overset{\eqref{eq:eps09809}}{\leq}  \FF(x_0) + \tfrac{\epsilon'}{2},\]
and hence ${\cal L}^0_{\mu}(x_0) \subset {\cal L}^{\epsilon'/2}_{0}(x_0)$, which implies (e).
\end{proof}

Let us now briefly comment on the results.
\begin{enumerate}
\item If we choose the separable regularizer $\Psi(x) = \tfrac{\delta}{2}\|x\|_{w^*}^2$, then $\sigma_{\Psi}(w^*)= \delta$ and the strong convexity assumption is satisfied, irrespective of whether $\ff_\mu$ is strongly convex or not. A regularizer of this type is often chosen in machine learning applications. %\note{Maybe mention that if $\phi$ is strongly convex wrt standard Eucl norm, then it is stromgly convex with respect to $\|\cdot\|_{w^*}$.}

\item Theorem~\ref{thm:complexity_strong_convexity_nonsmooth} covers the problem $\min \FF(x)$ and hence we have on purpose formulated the results without any reference to the smoothed problem (with the exception of dependence on $\sigma_{f_\mu}(w^*)$ in case (i)). We traded a (very) minor loss in the quality of the results for a more direct formulation.

\item As the confidence level is inside a logarithm, it is easy to obtain a high probability result with this randomized algorithm.
For problem \eqref{eq:P} in the non-strongly convex case,  iteration complexity is $O((\epsilon')^{-2})$ (ignoring the logarithmic term), which is comparable to other techniques available for the minimization of nonsmooth convex functions such as the subgradient method. In the strongly convex case the dependence is $O((\epsilon')^{-1})$. Note, however, that in many applications solutions only of moderate or low accuracy are required, and the focus is on the dependence on the number of processors $\tau$ instead. In this regard, our methods have excellent theoretical parallelization speedup properties.

\item It is clear from the complexity results that as more processors $\tau$ are used, the method requires fewer iterations, and the speedup gets higher for smaller values of $\omega$ (the degree of Nesterov separability of $\ff$). However, the situation is even better if the regularized $\Psi$ is strongly convex -- the degree of Nesterov separability then has a weaker effect on slowing down parallelization speedup.

\item For $\tau$-nice samplings, $\beta$ changes depending on $p$ (the type of dual norm $\|\cdot\|_v$). However, $\sigma$ changes also, as this is the strong convexity constant of the prox function $d$ with respect to the dual norm $\|\cdot\|_v$.

\end{enumerate}

\section{Computational Experiments} \label{sec:applications}

In this section we consider the application of the smoothed parallel coordinate descent method (SPCDM) to three special problems and comment on some \emph{preliminary} computational experiments. For simplicity, in all examples we assume all blocks are of size 1 ($N_i=1$ for all $i$) and  fix $\Psi\equiv 0$.

In all tests we used a shared-memory workstation with 32 Intel Xeon processors at 2.6~GHz and 128~GB RAM. We coded an asynchronous version of SPCDM to limit communication costs and approximated $\tau$-nice sampling by a $\tau$-independent sampling as in~\cite{RT:PCDM} (the latter is very easy to generate in parallel). %In an analogy with \cite{RT:PCDM}, our method is applicable to big data problems -- we will perform big data experiments in follow on research.

\subsection{L-infinity regression / linear programming}
Here we consider the the problem of minimizing the function
\[
 \ff(x) =  \|\tilde{A} x-\tilde{b}\|_\infty =\max_{u \in Q} \{\ve{Ax}{ u} - \ve{b}{u}\},
\]
where \[\tilde{A} \in \R^{m \times n}, \qquad \tilde{b} \in \R^m,\qquad  A=\left[\begin{smallmatrix} \tilde{A} \\ -\tilde{A} \end{smallmatrix}\right] \in \R^{2m \times n}, \qquad b=\left[\begin{smallmatrix} \tilde{b} \\ -\tilde{b} \end{smallmatrix}\right] \in \R^{2m}\] and $Q \eqdef \{u_j \in \R^{2m}\st \sum_j u_j =1,\; u_j\geq 0\}$ is the unit simplex in $\R^{2m}$. We choose the dual norm $\|\cdot\|_v$ in $\R^{2m}$ with $p=1$ and $v_j=1$ for all $j$. Further, we choose the prox function
$d(u)= \log(2m) + \sum_{j=1}^{2m} u_j \log(u_j)$ with center $u_0 = (1,1,\dots,1)/(2m)$. It can be shown that $\sigma=1$ and $D= \log(2m)$. Moreover, we let all blocks be of size 1 ($N_i=1$), choose $B_i=1$ for all $i$ in the definition of the primal norm and
\[
 w_i^* \overset{\eqref{eq:w_i^*-simple}}{=} \max_{1 \leq j \leq 2m} A_{ji}^2  =  \max_{1\leq j\leq m} \tilde{A}_{ji}^2 \enspace.
\]
The smooth approximation of $\ff$ is given by
\begin{equation}\label{eq:9s83n8nd}
 \ff_\mu(x)=\mu \log \left(  \frac{1}{2m} \sum_{j=1}^{2m} \exp\left( \frac{e_j^T A x - b_j}{\mu}\right) \right) \enspace.
\end{equation}

\textbf{Experiment.}  In this experiment we minimize $\ff_\mu$ utilizing $\tau$-nice sampling and parameter $\beta$ given by
\eqref{eq:p=1beta}. We first compare SPCDM (Algorithm~\ref{alg:pcdm1}) with several other methods, see Table~\ref{tab:compare}.

We perform a small scale experiment so that we can solve the problem directly as a linear program with GLPK. The simplex method struggles to progress initially but eventually finds the exact solution quickly. The accelerated gradient algorithm of Nesterov is easily parallelizable, which makes it competitive, but it suffers from small stepsizes (we chose here the estimate for the Lipschitz constant of the gradient given in~\cite{Nesterov-Subgrad-Huge} for this problem). A very efficient algorithm for the minimization of the infinity norm
is Nesterov's sparse subgradient method \cite{Nesterov-Subgrad-Huge} that is the fastest in our tests even when it uses a single core only. It performs full subgradient iterations in a very cheap way, utilizing the fact that the subgradients are sparse. The method has a sublinear in $n$ complexity.
However, in order for the method to take long steps, one needs to know the optimal value in advance. Otherwise, the algorithm is much slower, as is shown in the table.

\begin{centering}

\begin{table}[!ht]
 \begin{tabular}{|l|r|r|}

\hline
 {\em Algorithm} & {\em \# iterations} & {\em time (second) }\\
\hline
GLPK's simplex & 55,899 ~~~ &  681   ~~~ \\
\hline
Accelerated gradient~\cite{Nesterov05:smooth}, $\tau=16$ cores & 8,563  ~~~ & 246  ~~~  \\
\hline
Sparse subgradient~\cite{Nesterov-Subgrad-Huge}, optimal value known & 1,730  ~~~ & 6.4 ~~ \\
\hline
Sparse subgradient~\cite{Nesterov-Subgrad-Huge}, optimal value unknown & 166,686  ~~~ & 544  ~~~ \\
\hline
%Parallel smoothed coordinate descent (Theorem~\ref{thm:itercomplSmoothedPCDM}), 1 cores ($\beta=1.0$) & 221$n$ & 40s \\
%\hline
%Parallel smoothed coordinate descent (Theorem~\ref{thm:itercomplSmoothedPCDM}), 2 cores ($\beta=2.0$) & 210$n$ & 64s \\
%\hline
Smoothed PCDM (Theorem~\ref{thm:complexity_strong_convexity_smoothed}), $\tau = 4$ cores ($\beta=3.0$) & 15,700,000  ~~~ & 53   ~~~ \\
\hline
%Parallel smoothed coordinate descent (Theorem~\ref{thm:itercomplSmoothedPCDM}), 8 cores ($\beta=4.1$) & 106$n$ & 40s \\
%\hline
Smoothed PCDM (Theorem~\ref{thm:complexity_strong_convexity_smoothed}), $\tau =16$ cores ($\beta=5.4$) & 7,000,000  ~~~ & 37  ~~~  \\
\hline
 \end{tabular}
\caption{Comparison of various algorithms for the minimization of $\ff(x)=\|\tilde{A}x-\tilde{b}\|_\infty$, where $\tilde{A}$ and $\tilde{b}$ are
taken from the Dorothea dataset~\cite{guyon2004result} ($m=800$, $n=100,000$, $\omega=6,061$) and $\epsilon=0.01$.}
\label{tab:compare}
\end{table}

\end{centering}

For this problem, and without the knowledge of the optimal value, the smoothed parallel coordinate descent method presented in this paper
is the fastest algorithm. Many iterations are needed but they are very cheap: in its serial version, at each iteration one only needs to compute one partial derivative and to update 1 coordinate of the optimization variable, the residuals and the normalization factor. The worst case algorithmic complexity of one iteration is thus proportional to the number of nonzero elements in one column; on average.

Observe that quadrupling the number of cores does not divide by 4 the computational time because of the increase in the $\beta$ parameter. Also note that we have tested our method using the parameters dictated by the theory. In our experience the performance of the method imroves for a smaller value of $\beta$: this leads to larger stepsizes and the method often tolerates this.

%\paragraph{Cost of a single iteration.}
%
%\noteblue{
%Cost of one iteration for the infinity norm
%
%We have:
%- vector $(x_k^i)_{i=1..n}$
%- residuals $(r_k^j)_{j=1..2m}$: $r_k=\tilde{A}x_k-\tilde{b}$
%- normalization factor $Z_k=\mu \log( \sum_{j=1}{2m} \exp(\frac{r_j^k}{\mu}))$
%
%
%\begin{enumerate}
%
%\item Compute the partial derivative along direction $i$:
%$\nabla_i f_\mu(x_k) = \sum_{j \in \Omega(\tilde{A}e_i)} \tilde{A}_{j,i} \exp(r_k^j - \log(S_k))$
%cost: 1 computation of log, $\abs{\Omega(\tilde{A}e_i)}$
%times the cost to perform $\exp$, $\abs{\Omega(\tilde{A}e_i)}$
%substrations, multiplications and additions.
%
%\item Update the variable
%$x_{k+1}^i = x_k^i - \frac{1}{\beta L_i} \nabla_i f_\mu(x_k)$
%cost: 1 multiplication ($\beta L_i$ is calculated once and for all) and 1 substraction
%
%\item Update the residuals
%for all $j \in \Omega(\tilde{A}e_i)$,
%$r_{k+1}^j = r_k^j - \frac{1}{\beta L_i} \nabla_i f_\mu(x_k) A_{i,j}$
%cost: $\abs{\Omega(\tilde{A}e_i)}$ multiplications and substractions
%
%\item Update the normalization factor (for parallel updates, the sum inside of the logarithm is performed with a reduction)
%$Z_{k+1} = Z_k + \mu \log(1 + \sum_{j \in \Omega(\tilde{A}e_i)} \exp( \frac{r^j_{k+1}-Z_k}{\mu})-\exp(\frac{r^j_k-Z_k}{\mu})$
%cost: $2 \abs{\Omega(\tilde{A}e_i)}$ substractions, multiplications
% and exponentials, then $\abs{\Omega(\tilde{A}e_i)}$ substractions and additions, and finally 1 logarithm, 1 multiplication and 1 addition.
%\end{enumerate}
%
%Total: $O(\abs{\Omega(\tilde{A}e_i)})$ operations.
%}

\bigskip
{\footnotesize \emph{Remark:} There are numerical issues with the smooth approximation of the infinity norm because it involves exponentials of potentially large numbers.  A safe way of computing \eqref{eq:9s83n8nd}
is to compute first $\bar{r}=\max_{1 \leq j \leq 2m} (Ax-b)_j$ and to use the safe formula
\[
 \ff_\mu(x)=\bar{r} + \mu \log \left( \frac{1}{2m} \sum_{j=1}^{2m} \exp\left(\frac{(Ax-b)_j-\bar{r}}{\mu}\right) \right).
\]
However, this formula is not suitable for parallel updates because the logarithm prevents us from making reductions.
We adapted it in the following way to deal with parallel updates. Suppose we have already computed $\ff_\mu(x)$. Then
$\ff_\mu(x+h)= \ff_\mu(x)+   \mu \log \left(  S_x(\delta) \right)$, where
\[
 S_x(h) \eqdef \frac{1}{2m}\sum_{j=1}^{2m} \exp\left(\frac{(A x- b)_j+(Ah)_j-\ff_\mu(x)}{\mu}\right)
\]
In particular, $S_x(0)=1$. Thus, as long as the updates are reasonably small, one can compute $\exp[((Ax-b)_j+(Ah)_j-\ff_\mu(x))/\mu]$
and update the sum in parallel. From time to time (for instance every $n$ iterations or when $S_x$ becomes small), we recompute $\ff_\mu(x)$ from scratch and reset $h$ to zero.
}
\bigskip

\subsection{L1 regression}

Here we consider the problem of minimizing the function
\[
 \ff(x) = \|Ax-b\|_1 = \max_{u \in Q}  \{\ve{Ax}{u}-\ve{b}{u}\},
\]
where $Q = [-1,1]^n$. We define the dual norm $\|\cdot\|_v$ with $p=2$ and $v_j = \sum_{i=1}^n A_{ji}^2$ for all  $j=1,2,\dots,m$. Further, we choose the prox function
$d(z)=\tfrac{1}{2} \norm{z}_v^2$ with center $z_0=0$. Clearly,  $\sigma=1$ and $D=\tfrac{1}{2}\sum_{j=1}^m v_j= \sum_{j=1}^m \sum_{i=1}^n A_{ji}^2 = \|A\|_F^2$. Moreover, we choose $B_i=1$ for all $i=1,2,\dots,n$ in the definition of the primal norm and

\[
 w_i^* \overset{\eqref{eq:w_i^*-simple}}{=} \sum_{j=1}^m v_j^{-2} A_{ji}^2, \qquad i=1,2,\dots,n.
\]
The smooth approximation of $\ff$ is given by
\[
 \ff_\mu(x)=\sum_{j=1}^m \norm{e_j^T A}_{w^*}^* \psi_\mu \left( \frac{\abs{e_j^T A x - b_j}}{ \norm{e_j^T A}_v^*} \right), \qquad \psi_\mu(t)=\begin{cases}  \frac{t^2}{2\mu}, & 0 \leq t \leq \mu, \\ t - \frac{\mu}{2}, & \mu \leq t. \end{cases}
\]

\begin{figure}[htpb]
\centering
 \includegraphics[width=21em]{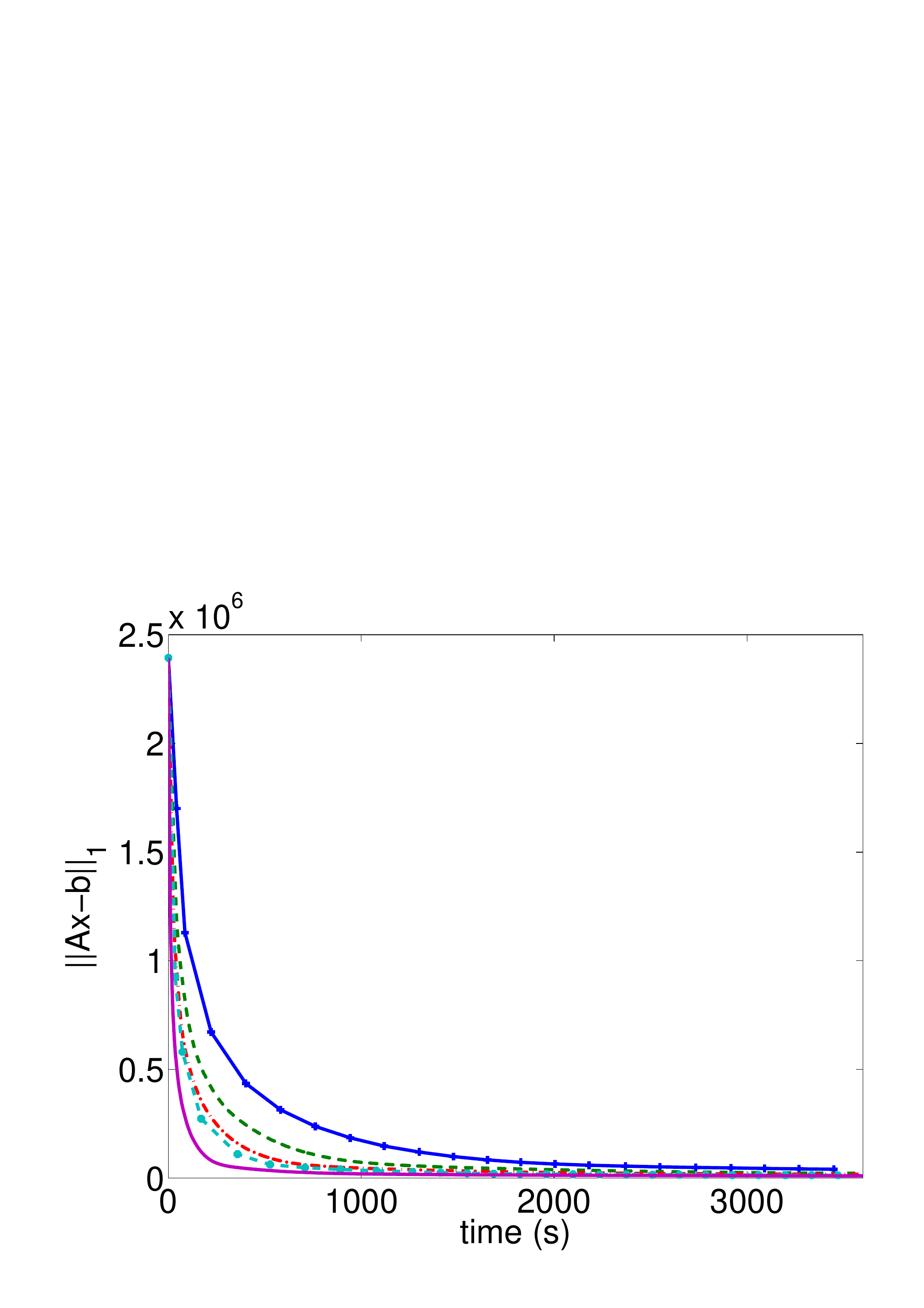}
 \includegraphics[width=21em]{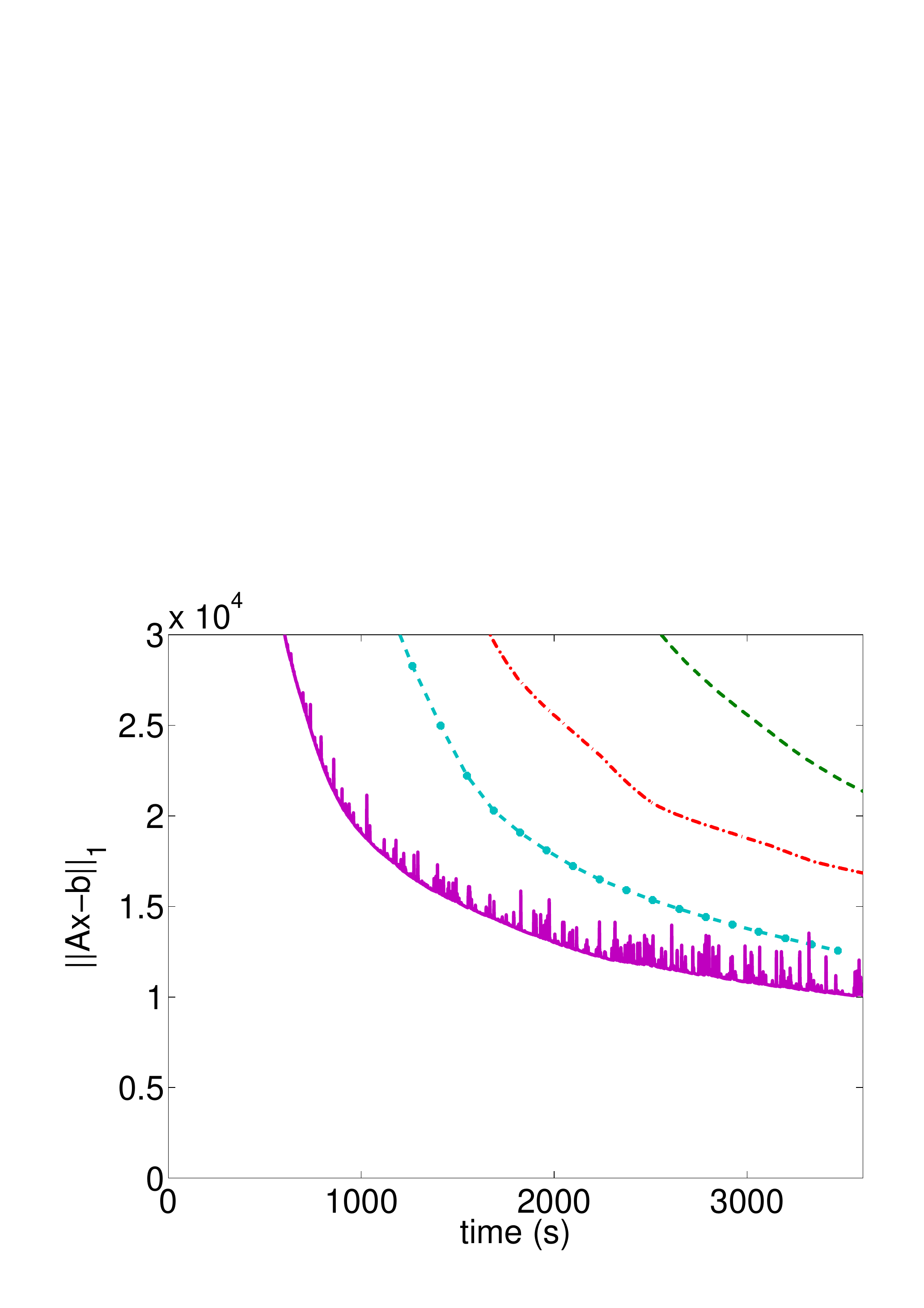}
\caption{Performance of SPCDM on the problem of minimizing $\ff(x)=\|Ax-b\|_1$ where $A$ and $b$ are
given by the URL reputation dataset. We have run the method until the function value was decreased by a factor of $240$.
\textbf{blue solid line with crosses:} $\tau=1$; \textbf{green dashed line:} $\tau=2$;
\textbf{red dash-dotted line:} $\tau=4$; \textbf{cyan dashed line with stars:} $\tau=8$;
\textbf{solid purple line:} $\tau=16$. \textbf{Left:} Decrease of the objective value in time.    We can see that parallelization speedup is
proportional to the number of processors. \textbf{Right:} Zoom on smaller objective values. We can see that
the algorithm is not monotonic but monotonic on average.}
\label{fig:expe1norm}
\end{figure}

\bigskip

{\footnotesize \emph{Remark:} Note that in \cite{Nesterov05:smooth}, the dual norm is defined from the primal norm.
In the present work, we need to define the dual norm first since otherwise
the definitions of the norms would cycle. However, the definitions above
give the choice of $v$ that minimizes the term
\[
D\|e\|_{w^*}^2 = D \sum_{i=1}^n w_i^* = (\sum_{j=1}^m v_j) (\sum_{j'=1}^m v_{j'}^{-2} A_{j'i}^2),
\]
where $e = (1,1,\dots,1)\in \R^N$. We believe that in the non-strongly convex case one can replace in the complexity estimates the squared diameter of the level set by $\|x_0-x^*\|_{w^*}^2$, which would then mean that a product of the form $D\|x_0-x_*\|_{w^*}^2$ appears in the complexity. The above choice of the weights $v_1,\dots,v_m$ minimizes this product under assuming that $x_0-x^*$ is proportional to $e$.
}

\bigskip

\textbf{Experiment.}  We performed our medium scale numerical experiments (in the case of L1 regression and exponential loss minimization (Section~\ref{sec:adaboost})) on the URL reputation dataset \cite{ma2009identifying}.
It gathers $n=3,231,961$ features about $m=4,792,260$ URLs collected during 120 days. The feature matrix is sparse but it has some dense columns.
The maximum number of nonzero elements in a row is $\omega=414$. The vector of labels classifies the page as spam or not.

We applied SPCDM with $\tau$-nice sampling, following the setup described in Theorem~\ref{thm:complexity_strong_convexity_nonsmooth}. The results for $\ff(x) = \|Ax-b\|_1$ are gathered in Figure~\ref{fig:expe1norm}.  We can see that parallelization speedup is
proportional to the number of processors. In the right plot we observe that
the algorithm is not monotonic but monotonic on average.

% \begin{table}
%  \begin{tabular}{|l|c|c|}
% \hline
%   Algorithm & Number of iterations & Computational time \\
% \hline
% GLPK  & 92,072 & 1,914 s \\
% \hline
% Parallel smoothed coordinate descent, 4 cores ($\beta=1.2$) & 32s & s \\
% \hline
% Parallel smoothed coordinate descent, 16 cores ($\beta=5.4$) & & s \\
% \hline
% Parallel Dual coordinate ascent, 4 cores ($\beta=$) & 1,730 & 6.4s\\
% \hline
% Parallel Dual coordinate ascent, 4 cores ($\beta=$) & 1,730 & 6.4s\\
% \hline
%  \end{tabular}
%
% \caption{Comparison of various algorithms for the minimization of $\ff(x)=\|Ax-b\|_1$ where $A$ and $b$ are
% given by the Dorothea ($m=800$, $n=100,000$) and $\epsilon=1$.}
%
% \end{table}

\subsection{Logarithm of the exponential loss}\label{sec:adaboost}

Here we consider the problem of minimizing the function \begin{equation}\label{eq:log_exp_loss}\ff_1(x)=\log \left(\frac{1}{m}\sum_{j=1}^m \exp(b_j(Ax)_j)\right).\end{equation}

The AdaBoost algorithm~\cite{freund1995decision} minimizes the exponential loss $\exp(\ff_1(x))$
by a greedy serial coordinate descent method (i.e., at each iteration, one selects the coordinate
corresponding to the largest directional derivative and updates that coordinate only). Here we observe that  $\ff_1$ is Nesterov separable as it is the smooth approximation of
\[
 \ff(x)=\max_{1 \leq j \leq m} b_j (Ax)_j
\]
with $\mu=1$. Hence, we can minimize $\ff_1$  by parallel coordinate descent with $\tau$-nice sampling and $\beta$ given by \eqref{beta_1}.

Convergence of AdaBoost is not a trivial result because the
minimizing sequences may be unbounded. The proof relies on a decomposition of the
optimization variables to an unbounded part and a bounded part~\cite{mukherjee2011rate, telgarsky2012primal}.
The original result gives iteration complexity $O(\frac{1}{\epsilon})$.

Parallel versions of AdaBoost have previously been studied. I our notation, Collins, Shapire and Singer~\cite{collins2002logistic} use $\tau=n$ and $\beta=\omega$. Palit and Reddy~\cite{palit2012scalable} use a generalized greedy sampling
and take $\beta=\tau$ (number of processors). In the present work, we use randomized samplings and
we can take $\beta\ll \min\{\omega,\tau\}$ with the $\tau$-nice sampling. As discussed before, this value of $\beta$ can be $O(\sqrt{n})$ times smaller than $\min\{\omega,\tau\}$, which leads to big gains in iteration complexity. For a  detailed study of the properties of the SPCDM method applied to the AdaBoost problem we refer to a follow up work of Fercoq~\cite{Fer-ParallelAdaboost}.

\begin{figure}[!ht]
\centering
\includegraphics[width=23em]{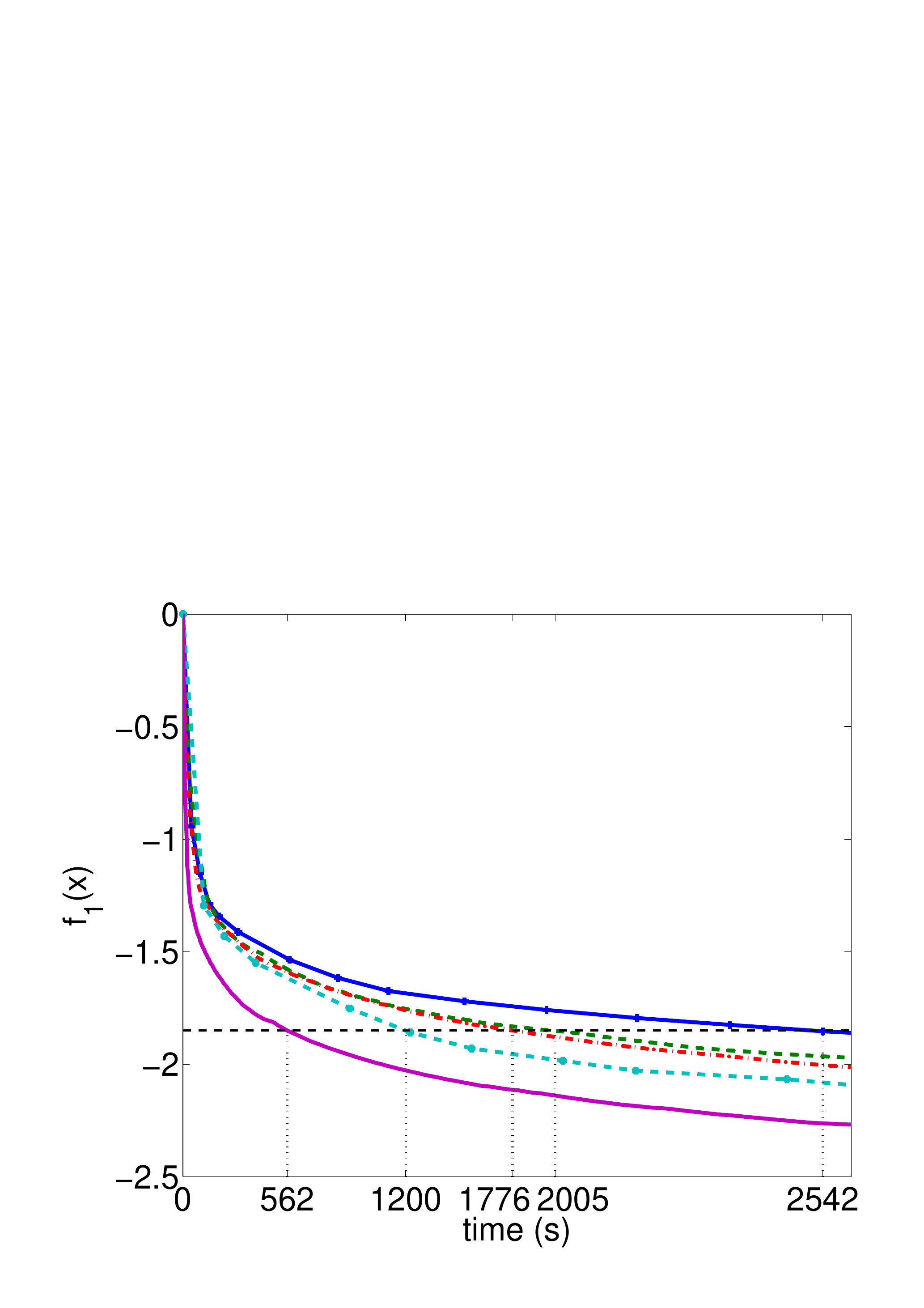}
\caption{Performance of the smoothed parallel coordinate descent method (SPCDM) with $\tau=1,2,4,8,16$ processors,
applied to the problem of minimizing the logarithm of the exponential loss \eqref{eq:log_exp_loss}, where $A \in \R^{m\times n}$ and $b\in \R^m$ are
given by the URL reputation dataset; $m=7,792,260$, $n=3,231,961$ and $\omega=414$. When $\tau=16$ processors were used, the method needed  562s to obtain a solution of a given accuracy (depicted by the horizontal line). When $\tau=8$ processors were used, the method needed 1200s, roughly double that time. Compared to a single processor, which needed 2542s, the setup with $\tau=16$ was nearly $5$ times faster. Hence, it is possible to observe nearly parallelization speedup, as our theory predicts.
Same colors were used as in Figure~\ref{fig:expe1norm}.}
\label{fig:expeadaboost}
\end{figure}

\textbf{Experiment.} In our last experiment we demonstrate how SPCDM (which can be viewed as a random parallel version of AdaBoost)  performs on the URL reputation dataset. Looking at Figure~\ref{fig:expeadaboost}, we see that parallelization leads to acceleration, and the time needed to decrease the loss to -1.85 is inversely proportional to the number of processors. Note that the additional effort done by increasing the number of processors from 4 to 8 is compensated by the increase of $\beta$ from $1.2$ to $2.0$ (this is the little step in the zoom of Figure~\ref{fig:compeso}). Even so, further acceleration takes place when one further increases  the number of processors.

% \begin{table}
%  \begin{tabular}{|l|c|c|c|c|}
% \hline
%   Number of cores & $\beta$ & Number of iterations & Computational time & Final value \\
% \hline
% 1 & 1 & 1131701336  &  10,800s & -2.3197283, \\
% \hline
% 2 &  &   &  s &\\
% \hline
% 4 & 1 &   &  s &\\
% \hline
% 8 & 1 &   &  s &\\
% \hline
% 16 & 1 &   &  s &\\
% \hline
%  \end{tabular}
% \end{table}

% \clearpage

{
\footnotesize
\bibliographystyle{plain}
\bibliography{spcdm}

}

\end{document}